\documentclass{amsart}
\usepackage{amsthm,amsmath}
\usepackage{graphicx}
\usepackage{mathptmx}
\usepackage{latexsym}
\usepackage{amsfonts}
\newtheorem{thm}{Theorem}[section]

\newtheorem{lem}[thm]{Lemma}
\newtheorem{rem}[thm]{Remark}
\newtheorem{cor}[thm]{Corollary}
\numberwithin{equation}{section}

\newcommand{\cF}{{\mathcal F}}

\newcommand{\cL}{{\mathcal L}}

\newcommand{\cV}{{\mathcal V}}

\newcommand{\bg}{{\mathfrak b}}
\newcommand{\te}{{\theta}}

\newcommand{\Om}{{\Omega}}

\newcommand{\ve}{{\varepsilon}}
\newcommand{\del}{{\delta}}

\newcommand{\gam}{{\gamma}}

\newcommand{\sig}{{\sigma}}

\newcommand{\be}{{\beta}}


\newcommand{\bbN}{{\mathbb N}}

\newcommand{\bbR}{{\mathbb R}}

\newcommand{\bbI}{{\mathbb I}}


\begin{document}
\title[]{Binomial Approximations for Barrier Options of Israeli Style.}%
 \vskip 0.1cm
 \author{ Yan Dolinsky and Yuri Kifer\\
 \vskip 0.1cm
Institute of Mathematics\\
Hebrew University\\
Jerusalem, Israel}%

\address{
 Institute of Mathematics, The Hebrew University, Jerusalem 91904, Israel}%
 \email{ yann1@math.huji.ac.il, kifer@math.huji.ac.il}%
\thanks{ Partially supported by the ISF grant no. 130/06}%
\subjclass[2000]{Primary: 91B28 Secondary: 60F15, 91A05}%
\keywords{barrier game options, Dynkin games, shortfall risk, binomial
approximations, Skorokhod embedding.}%

 \date{\today}
\begin{abstract}
We show that prices and shortfall risks of game (Israeli) barrier options
in a sequence of binomial approximations of the Black--Scholes (BS) market
converge to the corresponding quantities for similar game barrier options
in the BS market with path dependent payoffs and the speed of convergence
is estimated, as well. The results are new also for usual American style
options and they are interesting from the computational point of view,
as well, since in binomial markets these quantities can be obtained via
dynamical programming algorithms. The paper continues the study of \cite{Ki2}
and \cite{DK2} but requires substantial additional arguments in view of
pecularities of barrier options which, in particular, destroy the regularity
 of payoffs needed in the above papers.
\end{abstract}
\maketitle
\markboth{Ya. Dolinsky and Yu. Kifer}{Barrier options of Israeli style}%
\renewcommand{\theequation}{\arabic{section}.\arabic{equation}}
\pagenumbering{arabic}

\section{Introduction}\label{sec:1}\setcounter{equation}{0}
This paper deals with knock--out and knock--in double barrier
options of the game (Israeli) type sold in a standard securities
market consisting of a nonrandom component $b_t$ representing the
value of a savings account at time $t$ with an interest rate $r$ and
of a random component $S_t$ representing the stock price at time
$t$. As usual, we view $S_t,t>0$ as a stochastic process on a
probability space $(\Omega,\mathcal{F},P)$ and we assume that it
generates a right continuous filtration $\{\mathcal{F}_{t}\}$. The
setup includes also two right continuous with left limits
(\textit{cadlag}) stochastic payoff processes $X_t\geq{Y_t}\geq{0}$
adapted to the above filtration. Recall, that a game contingent
claim (GCC) or a game option was defined in \cite{Ki1} as a contract
between the seller and the buyer of the option such that both have
the right to exercise it at any time up to a maturity date (horizon)
$T$ which in this paper assumed to be finite. If the buyer exercises
the contract at time $t$ then he receives the payment $Y_t$, but if
the seller exercises (cancels) the contract before the buyer then
the latter receives $X_t$. The difference $\Delta_t=X_t-Y_t$ is the
penalty which the seller pays to the buyer for the contract
cancellation. In short, if the seller will exercise at a stopping
time $\sigma\leq{T}$ and the buyer at a stopping time $\tau\leq{T}$
then the former pays to the latter the amount
$H(\sigma,\tau)=X_\sig\mathbb{I}_{\sig<\tau}+
Y_\tau\mathbb{I}_{\tau\leq{\sig}}$ where we set $\mathbb{I}_{A}=1$
 if an event $A$ occurs and $\mathbb{I}_{A}=1$ if not.

A hedge (for the seller) against a GCC is defined here as a pair
$(\pi,\sigma)$ which consists of a self financing strategy $\pi$
(i.e. a trading strategy with no consumption and no infusion of
capital) and a stopping time $\sigma$ which is the cancellation time
for the seller. A hedge is called perfect if no matter what exercise
time the buyer chooses, the seller can cover his liability to the
buyer (with probability one). The option price $\cV^*$ is defined as
the minimal initial capital which is required for a perfect hedge,
i.e. for any $x>\cV^*$ there is a perfect hedge with an initial
capital $x$. Recall, (see \cite {Ki1}) that pricing a GCC in a
complete market leads to the value of a zero sum optimal stopping
(Dynkin's) game with discounted payoffs $\tilde{X_t}=
b_0\frac{X_t}{b_t}$, $\tilde{Y_t}=b_0\frac{Y_t}{b_t}$ considered
under the unique martingale measure $\tilde{P}\sim{P}$.

We consider a double knock--out barrier option with a two constant
barriers $L,R$ such that $0\leq L<S_0<R\leq\infty$ which means that
the option is worthless to its holder (buyer) at the first time
$\tau_I$ the stock price $S_t$ exits the open interval $I=(L,R)$.
Thus for $t\geq\tau_{(L,R)}$ the payoff is $X_t=Y_t=0$. For
$t<\tau_{(L,R)}$ we consider path dependent payoffs. Such a contract
is of potential value to a buyer who believes that the stock price
will not exit the interval $I$ up to a maturity date and to a seller
who believes otherwise and does not want to have to worry about
hedging if the stock price will reach one of the barriers $L,R$.
Double knock--in barrier options which start when $S_t$ exits an
interval $I$ will be considered, as well. Observe, that we view
barrier game options as a generalization of regular game options
where $L=0$ and $R=0$ which provides a way of their simultaneous
treatment.

The Cox, Ross and Rubinstein (CRR) binomial model which was
introduced in \cite{CRR} is an efficient tool to approximate
derivative securities in a Black--Scholes (BS) market. We will show
that for a double barrier options in the BS model the option price
can be approximated by a sequence of option prices of a barrier
options (with the same barriers) in appropriate CRR $n$--step models
with errors bounded by $Cn^{-1/4}(\ln{n})^{3/4}$ where $C$ is a
constant which does not depend on the value of the barriers. These
both  generalize the results from \cite{Ki2} which were obtained
 for regular (without barriers) game options with path
dependent payoffs and provide an algorithm for computation of this
important class of derivative securities since pricing of game
options in CRR markets can be done by dynamical programming (see
\cite{Ki1}).

Pricing of European and American type barrier options was studied in
several papers (see, for instance, \cite{GY} and \cite{KW}) and a
number of papers dealt with error estimates for discrete
approximations of barrier European options (see, for instance,
\cite{BGK1}, \cite{BGK2}, \cite{RS} and references there). On the
other hand, binomial approximations and their error estimates for
look back American style, let alone for Israeli style, barrier
options were not studied rigorously before.

We also deal with partial hedging (under the same assumption on the
payoffs) which becomes relevant if for instance, an investor
(seller) is not willing for various reasons to tie in a hedging
portfolio the full initial capital required for a perfect hedge. In
this case the seller is ready to accept a risk that his portfolio
value at an exercise time may be less than his obligation to pay and
he will need additional funds to fullfil the contract. Thus a
portfolio shortfall comes into the picture and by this reason we
distinguish here between hedges and perfect hedges.

In this paper we deal with certain type of risk called the shortfall
risk (cf. for instance, \cite{CK}, \cite{DK1}, \cite{FL}, \cite{M})
which was defined for game options in \cite{DK1} by the formulas
\begin{equation*}
R(\pi,\sigma)=\sup_{\tau}E(Q(\sigma,\tau)-b_0\frac{V^\pi_{\sigma\wedge\tau}}
{b_{\sigma\wedge\tau}})^+\,\,\mbox{and}\,\,
R(x)=\inf_{(\pi,\sigma)}R(\pi,\sigma)
\end{equation*}
where the supremum is taken over all stopping times not exceeding a
horizon $T$, the infimum is taken over all hedges with an initial
capital $x$, $Q(\sigma,\tau)=\tilde{X}_s\mathbb{I}_{s<t}+\tilde{Y}_t
\mathbb{I}_{t\leq{s}}$ is the discounted payoff, $V^\pi_t$ is the
portfolio value of $\pi$ at time $t$ and $E$ denotes the expectation
with respect to the objective probability measure $P$. An investor
(seller) whose initial capital $x$ is less than the option price
still wants to compute the minimal possible shortfall risk and to
find a hedge with the initial capital $x$ which minimizes or
"almost" minimizes the shortfall risk.

In \cite{DK1} we proved that for a game option in the multinomial
model with general payoffs there exists a hedge which minimizes the
shortfall risk under constraint on the initial capital, and the
above hedge together with the corresponding shortfall risk can be
computed via a dynamical programming procedure. For game option in
the BS model the problem of finding an optimal hedge is more
complicated and for now remains open even for regular payoffs. We
will prove that in the BS model the shortfall risk $R(x)$ of a
seller with initial capital $x$ for double barrier options is a
limit of the shortfall risks $R_n(x)$ for double barrier options in
the CRR markets with the same barriers and initial capital as in the
BS model. Here we are able to provide only a one sided error
estimate $R(x)-R_n(x)\leq \tilde{C}n^{-1/4}(\ln{n})^{-3/4}$ where
$\tilde{C}>0$ is a constant which does not depend on the value of
the barriers. These results generalize the ones which were obtained
in \cite{DK2} for regular game options with path dependent payoffs
and again provide a way of computation of the shortfall risk for
barrier game options. Binomial approximations of shortfall risks for
barrier options were not studied before even for European options.

For a given initial capital $x$ we will use hedges which minimize
the shortfall risk in CRR markets under the above constraint on the
initial capital, in order to construct hedges which "almost"
minimize the shortfall risk in the BS model under the same
constraint on the initial capital. Furthermore we will see that the
corresponding portfolios are managed on a finite set of random times
as it was done in \cite{DK2} for regular game options. We consider
also another situation where the seller of a game option in the BS
model has an initial capital which is a little bit larger than the
option price. In this case we use perfect hedges in CRR markets in
order to build explicitly hedges with small shortfall risks in the
BS model where the corresponding portfolios are managed on a finite
set of random times as it was done in \cite{Ki2} for regular game
options.

Our main tool is the Skorohod type embedding of sums of i.i.d.
random variables into a Brownian motion with a constant drift. This
tool was employed for a regular options in \cite{DK2} and \cite{Ki2}
in order to obtain error estimates for approximation of shortfall
risks and for approximation of option prices, respectively. However,
in the barrier options case the payoffs lose their Lipschitz
continuity which was crucial in \cite{Ki2} and \cite{DK2}, and so
this case requires substantial additional arguments and estimates
leading to a generalization of our previous results. Moreover,
observe that discontinuities of payoffs occur at random times since
they depend on the stock behavior. Since the discretisation does not
necessarily adjusted to the barrier value where discontinuities
occur we have to estimate the deviation of the option price as the
barrier value changes a bit which is the key additional part of the
proof in comparison to \cite{DK2} and \cite{Ki2} (see Lemmas
\ref{lem3.3}, \ref{lem3.4} and \ref{lem5.1}).

Main results of this paper are formulated in the next section where
we discuss also the Skorohod type embedding. In Section \ref{sec3}
we introduce recursive formulas which enable us to compare various
option prices and risks. In this section we also derive auxiliary
estimates for option prices and risks. In Section \ref{sec4} we
complete the proof of main results of the paper for knock--out
options while in Section \ref{sec5} we deal with the knock--in case
which requires a somewhat different definitions and a separate
treatment yielding a bit worse error estimates. Some definitions and
estimates in this paper are similar to \cite{DK2} and \cite{Ki2} but
for the sake of the reader and in order to keep the paper relatively
self-contained we repeat them here with needed modifications. On the
other hand, the reader may benefit reading this paper consulting
occasionally for more details also \cite{DK2} and \cite{Ki2}.

\section{Preliminaries and main results}\label{sec:2}\setcounter{equation}{0}
First, we describe the setup. Denote by $M[0,t]$ the space of Borel
measurable functions on [0,t] with the uniform metric
$d_{0t}(\upsilon,\tilde{\upsilon})=\sup_{0\leq{s}\leq{t}}|\upsilon_s
-\tilde{\upsilon}_s|$. For each $t>0$ let $F_t$ and $\Delta_t$ be
nonnegative functions on $M[0,t]$ such that for some constant
$\cL\geq{1}$ and for any $t\geq{s}\geq{0}$ and
$\upsilon,\tilde{\upsilon}\in{M[0,t]}$,
\begin{equation}\label{2.1}
|F_s(\upsilon)-F_s(\tilde{\upsilon})|+|\Delta_s(\upsilon)-
\Delta_s(\tilde{\upsilon})|\leq{\cL(s+1)d_{0s}(\upsilon,\tilde{\upsilon})},
\end{equation}
and
\begin{eqnarray}\label{2.2}
&|F_t(\upsilon)-F_s({\upsilon})|+|\Delta_t(\upsilon)-
\Delta_s({\upsilon})|\\
&\leq{\cL(|t-s|(1+\sup_{u\in{[0,t]}}|\upsilon_u|)
+\sup_{u\in{[s,t]}}|\upsilon_u-\upsilon_s|)}.\nonumber
\end{eqnarray}
By (\ref{2.1}), $F_0(\upsilon)=F_0(\upsilon_0)$ and
$\Delta_0(\upsilon)=\Delta_0(\upsilon_0)$ are functions of
$\upsilon_0$ only and by (\ref{2.2}),
\begin{equation}\label{2.3}
 F_t(\upsilon)+\Delta_t({\upsilon})\leq{
F_0(\upsilon_0)+\Delta_0({\upsilon_0})+\cL(t+2)(1+\sup_{0\leq{s}\leq{t}}
|\upsilon_s|)}.
\end{equation}
Next we consider a complete probability space ($\Omega_B$,
$\mathcal{F}^{B}$, $P^{B}$) together with a standard one-dimensional
continuous in time Brownian motion \{$B_t\}_{t=0}^\infty$, and the
filtration $\mathcal{F}^{B}_t=\sigma{\{B_s|s\leq{t}\}}$. A BS
financial market consists of a savings account and a stock whose
prices $b_t$ and $S^B_t$ at time $t$, respectively, are given by the
formulas
\begin{equation}\label{2.4}
b_t=b_0e^{rt}\,\,\mbox{and}\,\, S^B_t=S_0e^{rt+\kappa{B^{*}_t}},\,\,
b_0,S_0>0
\end{equation}
where
\begin{equation}\label{2.5}
B^{*}_t=(\frac{\mu}{\kappa}-\frac{\kappa}{2})t+B_t,\, t\geq{0},
\end{equation}
$r$ is the interest rate, $\kappa>0$ is called volatility and $\mu$
is another parameter. Denote by $\tilde{S}^B_t=e^{-rt}S^{B}_t$ the
discounted stock price.

For any open interval $I=(L,R)$ such that $0\leq{L}<S_0<R\leq\infty$
let
\begin{equation}\label{2.6}
\tau_I=\inf\{t\geq{0}|S^{B}_t\notin{I}\}
\end{equation}
be the first time the stock price exit from the interval $I$.
Clearly $\tau_I$ is a stopping time (not necessary finite since we
allow the cases $L=0$ and $R=\infty$). In this paper we assume that
either $L>0$ or $R<\infty$ while the case $L=0$ and $R=\infty$ of
regular options is treated in \cite{Ki2} and \cite{DK2}. Consider a
game option with the payoffs
\begin{equation}\label{2.7}
Y^I_t=F_t(S^B)\mathbb{I}_{t<\tau_I} \ \  \mbox{and} \ \
X^I_t=G_t(S^B)\mathbb{I}_{t<\tau_I}, \ \ t\geq{0}
\end{equation}
where $G_t=F_t+\Delta_t$ with $F$ and $\Delta$ satisfying
(\ref{2.1}) and (\ref{2.2}), $S^B=S^B(\omega)\in{M[0,\infty)}$ is a
random function taking the value $S^B_t=S^B_t(\omega)$ at
$t\in{[0,\infty)}$. When considering
 $F_t(S^B),\, G_t(S^B)$ for $t<\infty$ we take the
restriction of $S^B$ to the interval $[0,t]$. Denote by $T$ the
horizon of our game option assuming that $T<\infty$. Observe that
the contract is "knocked--out" (i.e. becomes worthless to the buyer)
at the first time that the stock price exit from the interval $I$.
The case of knock--in options will be considered in Section
\ref{sec5}. The discounted payoff function is given by
\begin{equation}\label{2.8}
Q^{B,I}(s,t)=\tilde{X}^I_s\mathbb{I}_{s<t}+\tilde{Y}^I_t\mathbb{I}_{t\leq{s}},
\end{equation}
where $\tilde{Y}^I_t=e^{-rt}Y^I_t$ and $\tilde{X}^I_t=e^{-rt}X^I_t$
are the discounted payoffs.

Among examples of barrier options which fit our setup are put or
call barrier options given by
\begin{equation*}
\Delta\equiv\delta, \  \ F_t(\upsilon)=(K-\upsilon_t)^{+}\ \
\mbox{or} \ \ F_t(\upsilon)=(\upsilon_t-K)^{+},
\end{equation*}
 respectively, Russian type barrier options given by
\begin{equation*}
 F_t(\upsilon)=\max(m,\sup_{[0,t]}\upsilon_t) \ \ \mbox{and} \
\ \Delta_t(\upsilon)=\delta\upsilon_t,
\end{equation*}
and integral put or call barrier options given by
\begin{equation*}
\Delta_t(\upsilon)=\int_{0}^t{\delta_u(\upsilon_u)du}, \
F_t(\upsilon)=(K-\int_{0}^t{f_u(\upsilon_u)du})^{+} \ \ \mbox{or} \
\ F_t(\upsilon)=(\int_{0}^t{f_u(\upsilon_u)du}-K)^{+},
\end{equation*}
 respectively, where we assume that for all $x,y,u\geq{0}$,
\begin{equation*}
|f_u(x)-f_u(y)|+|\delta_u(x)-\delta_u(y)|\leq{\cL|x-y|} \ \
\mbox{and} \ \ f_u(x)+\delta_u(x)\leq{\cL x}
\end{equation*}
where $\cL$ is the same constant as in (\ref{2.1}) and (\ref{2.2}).

Denote by $\tilde{P}^{B}$ the unique martingale measure for the BS
model. Using standard arguments it follows that the restriction of
the probability measure $\tilde{P}^{B}$ to the $\sigma$--algebra
$\mathcal{F}^{B}_t$ satisfies
\begin{equation}\label{2.8+}
Z_t=\frac{dP^{B}}{d\tilde{P}^{B}}|\mathcal{F}^{B}_t=e^{\frac{\mu}{\kappa}B_t
+\frac{1}{2}(\frac{\mu}{\kappa})^2t}.
\end{equation}
Denote by $\mathcal{T}^B$ the set of all stopping times with respect
to the Brownian filtration $\mathcal{F}^{B}_t, t\geq{0}$ and let
$\mathcal{T}^{B}_{0T}$ be the set of all stopping times with values
in $[0,T]$. From Theorem 3.1 in \cite{Ki1} we obtain the fair price
of a game option in the BS model by
\begin{equation}\label{2.8++}
\cV^I=\inf_{\sigma\in\mathcal{T}^{B}_{0T}}\sup_{\tau\in\mathcal{T}^{B}_{0T}}
\tilde{E}^BQ^{B,I}(\sigma,\tau)
\end{equation}
where $\tilde{E}^{B}$ is the expectation with respect to
$\tilde{P}^{B}$.

Recall, (see, for instance, \cite{Sh}, Section 7.1) that a self
financing strategy $\pi$ with a (finite) horizon $T$ and an initial
capital $x$ is a process $\pi=\{(\beta_t,\gamma_t)\}_{t=0}^{T}$ of
pairs
 where $\beta_t$ and $\gamma_t$ are progressively measurable with respect
 to the filtration $\mathcal{F}^{B}_t$, $t\geq{0}$ and satisfy
\begin{equation}\label{2.9}
\int_{0}^{T}e^{rt}|\beta_t|dt<\infty\,\,\mbox{and}\,\,
\int_{0}^{T}(\gamma_t{S}^{B}_t)^2dt<\infty.
\end{equation}
The portfolio value $V^\pi_t$ for a strategy $\pi$ at time
$t\in{[0,T]}$
 is given by
\begin{equation}\label{2.10}
{V}^{\pi}_t=\beta_tb_t+\gamma_tS^{B}_t=x+\int_{0}^{t}\beta_ud
b_u+\int_{0}^{t} \gamma_ud{{S}^B_u}.
\end{equation}
Denote by $\tilde{V}^\pi_t=e^{-rt}V^\pi_t$ the discounted portfolio
value at time $t$. Then it is easy to see that (see, for instance,
\cite{Sh}),
\begin{equation}\label{2.11}
\tilde{V}^{\pi}_t=x+\int_{0}^{t}\gamma_ud{\tilde{S}^B_u} \ \
\mbox{and} \ \
\beta_t=(x+\int_{0}^{t}\gamma_ud{\tilde{S}^B_u}-\gamma_t\tilde{S}^{B}_t)/b_0.
\end{equation}
Observe that the discounted portfolio value depends only on the
process ${\{\gamma_t\}}_{t=0}^T$. Thus in order to determine a self
financing strategy it suffices to fix a process
${\{\gamma_t\}}_{t=0}^T$ and to obtain the process
${\{\beta_t\}}_{t=0}^T$ by (\ref{2.11}). A self financing strategy
$\pi$ is called \textit{admissible} if ${V}^{\pi}_t\geq{0}$ for all
$t\in{[0,{T}]}$ and the set of such strategies with an initial
capital $x$ will be denoted by $\mathcal{A}^B(x)$. Set also
$\mathcal{A}^B=\bigcup_{x\geq{0}}\mathcal{A}^B(x)$. A pair
$(\pi,\sigma)\in\mathcal{A}^B\times\mathcal{T}^B_{0T}$ of an
\textit{admissible} self financing strategy $\pi$ and of a stopping
time $\sigma$ will be called a hedge. For a hedge $(\pi,\sigma)$ the
shortfall risk is given by (see \cite{DK1}),
\begin{equation}\label{2.11+}
R^{I}(\pi,\sigma)=\sup_{\tau\in\mathcal{T}^B_{0T}}E^B[(Q^{B,I}(\sigma,\tau)-
\tilde{V}^{\pi}_{\sigma\wedge\tau})^+],
\end{equation}
which is the maximal possible expectation with respect to the
probability measure $P^B$ of the discounted shortfall. The shortfall
risks for a portfolio $\pi\in{\mathcal{A}^{B}}$ and for an initial
capital $x$ are given by
\begin{equation}\label{2.11++}
R^{I}(\pi)=\inf_{\sigma\in\mathcal{T}^{B}_{0T}}R^{I}(\pi,\sigma) \ \
\mbox{and} \ \ R^{I}(x)=\inf_{\pi\in\mathcal{A}^B(x)} R^{I}(\pi),
\end{equation}
respectively.

As in \cite{DK2} and \cite{Ki2} we consider a sequence of CRR
markets on a complete probability space such that for each
$n=1,2,...$ the bond prices $b^{(n)}_t$ at time $t$ are
\begin{equation}\label{2.12}
b^{(n)}_t=b_0e^{r[nt/T]T/n}=b_0(1+r_n)^{[nt/T]},\,\, r_n=e^{rT/n}-1
\end{equation}
and stock prices $S^{(n)}_t$ at time $t$ are given by the formulas
$S^{(n)}_t=S_0$ for $t\in{[0,T/n)}$ and
\begin{equation}\label{2.13}
S^{(n)}_t=S_0\exp(\sum_{k=1}^{[nt/T]}(\frac{rT}{n}+\kappa(\frac{T}{n})^{1/2}
\xi_k))=S_0\prod_{k=1}^{[nt/T]}(1+\rho^n_k)\,\,\mbox{if}\,\,
t\geq{T/n}
\end{equation}
where $\rho^n_k=\exp(\frac{rT}{n}+\kappa(\frac{T}{n})^{1/2}\xi_k)-1$
and $\xi_1,\xi_2,...$ are i.i.d. random variables taking values 1
and -1 with probabilities
$p^{(n)}=(\exp((\kappa-\frac{2\mu}{\kappa})\sqrt{\frac{T}{n}})+1)^{-1}$
and
$1-p^{(n)}=(\exp((\frac{2\mu}{\kappa}-\kappa)\sqrt{\frac{T}{n}})+1)^{-1}$,
respectively. Let $P^\xi_n={\{p^{(n)},1-p^{(n)}\}}^\infty$ be the
corresponding product probability measure on the space of sequences
$\Om_\xi=\{-1,1\}^\infty$. Namely, for each $n$ we consider a CRR
market with horizon $n$ on the probability space $(\Om_\xi,\,
P^\xi_n)$ with bond prices $b_m=b^{(n)}_{\frac {mT}n}$ and stoch
prices $S_m=S^{(n)}_{\frac {mT}n}$. We view
$S^{(n)}=S^{(n)}(\omega)$ as a random function on [0,T], so that
$S^{(n)}(\omega)\in{M[0,T]}$ takes the value
$S^{(n)}_t=S^{(n)}_t(\omega)$ at $t\in{[0,T]}$. For $k\leq{n}$
denote the discounted stock price at the moment $kT/n$ by
$\tilde{S}^{(n)}_{\frac{kT}{n}}=(1+r_n)^{-k}S^{(n)}_{\frac{kT}{n}}$.
Let $\mathcal{F}^{\xi}_k=\sigma{\{\xi_1,...,\xi_k}\}$ and
$\mathcal{F}^{\xi}=\bigcup_{k\geq{1}}\mathcal{F}^{\xi}_k$. Denote by
$\mathcal{T}^{\xi}$ the set of all stopping times with respect to
the filtration $\mathcal{F}^{\xi}_k$ and let
$\mathcal{T}^{\xi}_{0n}$ be the set of all stopping times with
values in ${\{0,1,...,n\}}$. Similarly to (\ref{2.6}), given an open
interval $I$ introduce a stopping time (with respect to the
filtration ${\{\mathcal{F}^{\xi}_k\}}_{k=0}^\infty)$
\begin{equation}\label{2.13+}
\tau^{(n)}_I=\min\{k\geq{0}|S^{(n)}_{\frac{kT}{n}}\notin{I}\}
\end{equation}
together with barrier options having the payoffs
\begin{equation}\label{2.13++}
Y^{I,n}_k=F_{\frac{kT}{n}}(S^{(n)})\mathbb{I}_{k<\tau^{(n)}_I} \ \
\mbox{and} \ \
X^{I,n}_k=G_{\frac{kT}{n}}(S^{(n)})\mathbb{I}_{k<\tau^{(n)}_I}.
\end{equation}
The corresponding discounted payoff function is given by
\begin{equation} \label{2.13+++}
Q^{I,n}(s,k)=\tilde{X}^{I,n}_s\mathbb{I}_{s<k}+
\tilde{Y}^{I,n}_k\mathbb{I}_{k\leq{s}},\ \ k,s\leq{n}
\end{equation}
where $\tilde{X}^{I,n}_k=(1+r_n)^{-k}X^{I,n}_k$ and
$\tilde{Y}^{I,n}_k=(1+r_n)^{-k}Y^{I,n}_k$ are the discounted
payoffs. Let $\tilde{P}^{\xi}_n$ be a probability measure on the
$\Omega_{\xi}$ such that $\xi_1,\xi_2...$ is a sequence of i.i.d.
random variables taking on the values $1$ and $-1$ with
probabilities
$\tilde{p}^{(n)}=(\exp(\kappa\sqrt{\frac{T}{n}})+1)^{-1}$ and
$1-\tilde{p}^{(n)}=(\exp(-\kappa\sqrt{\frac{T}{n}})+1)^{-1}$,
respectively (with respect to $\tilde{P}^{\xi}_n$). Observe that for
any $n$ the process ${\{\tilde{S}^{(n)}_{\frac{mT}{n}}\}}_{m=0}^{n}$
is a martingale with respect to $\tilde{P}^{\xi}_n$, and so we
conclude that $\tilde{P}^{\xi}_n$ is the unique martingale measure
for the above CRR markets. Thus from Theorem 2.1 in \cite{Ki1} it
follows that the fair price of the game option in the $n$--step CRR
market is given by
\begin{equation}\label{2.13++++}
\cV^{I}_n=\min_{\zeta\in\mathcal{T}^{\xi}_{0n}}\max_{\eta\in
\mathcal{T}^{\xi}_{0n}}\tilde{E}^{\xi}_nQ^{I,n}(\zeta,\eta).
\end{equation}
where $\tilde{E}^{\xi}_n$ is the expectation with respect to
$\tilde{P}^{\xi}_n$. The following theorem provides an estimate for
the error term in approximations of the fair price of a knock--out
game option in the BS model by fair prices of the sequence of knock
out game options in the CRR markets defined above. This result is a
generalization of Theorem 2.1 in \cite{Ki2} which deals with regular
game options.
\begin{thm}\label{thm2.1}
There exists a constant $C_1$ such that for any open interval $I$
and $n\in\mathbb{N}$,
\begin{equation}\label{2.13+++++}
|\cV^I-\cV^{I}_n|\leq{C_1n^{-\frac{1}{4}}(\ln{n})^{\frac{3}{4}}}.
\end{equation}
\end{thm}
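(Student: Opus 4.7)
\smallskip
\noindent\textbf{Proof proposal.}
The plan is to reduce Theorem \ref{thm2.1} to the regular (non--barrier) case of \cite{Ki2} by sandwiching the barrier $I=(L,R)$ between two perturbed barriers, and then to use a Skorohod--type coupling to quantify each piece of the sandwich. Concretely, for a parameter $\del>0$ to be optimized at the end, set
\begin{equation*}
I^{-}_{\del}=(L+\del,R-\del)\subset I\subset(L-\del,R+\del)=I^{+}_{\del},
\end{equation*}
and consider on a common probability space a coupling of the BS stock $S^B$ with the CRR stock $S^{(n)}$ built from the Skorohod embedding of the random walk $\xi_1+\cdots+\xi_k$ into the Brownian motion $B^{*}$ of (\ref{2.5}). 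On the event that $d_{0T}(S^B,S^{(n)})<\del$, the hitting times satisfy the pathwise bracketing $\tau_{I^{+}_{\del}}\wedge T\le\tau_I\wedge T\le\tau_{I^{-}_{\del}}\wedge T$ (and similarly for $\tau^{(n)}_I$), so that the knock--out indicators in (\ref{2.7}) and (\ref{2.13++}) can be compared after a barrier perturbation.

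The first step is to split the error as
\begin{equation*}
|\cV^I-\cV^I_n|\le|\cV^I-\cV^{I^{\pm}_{\del}}|+|\cV^{I^{\pm}_{\del}}-\cV^{I^{\pm}_{\del}}_n|+|\cV^{I^{\pm}_{\del}}_n-\cV^I_n|,
\end{equation*}
where the sign $\pm$ is chosen to match the direction of the bracketing needed for each inequality (lower and upper bounds are treated separately by exchanging $I^{-}_{\del}$ with $I^{+}_{\del}$). The middle term is handled by the Skorohod coupling exactly as in \cite{Ki2}: once the barrier is fixed, the Lipschitz conditions (\ref{2.1})--(\ref{2.2}) on $F,\Del$ transfer uniform path proximity into closeness of the discounted payoffs $Q^{B,I^{\pm}_{\del}}$ and $Q^{I^{\pm}_{\del},n}$, and the saddle--point characterizations (\ref{2.8++}) and (\ref{2.13++++}) yield a bound of order $n^{-1/4}(\ln n)^{3/4}$ via the usual minimax inequality applied to any pair of $\ve$--optimal stopping times. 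The outer two terms are the new ingredient and are exactly what Lemmas \ref{lem3.3} and \ref{lem3.4} are designed to control: they assert that a $\del$--perturbation of the barrier alters $\cV^I$ and $\cV^I_n$ by $\cO(\del^{\al}(\ln(1/\del))^{\be})$ for explicit exponents, uniformly in the location of the barrier.

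The main obstacle is the proof and application of those barrier--perturbation lemmas, since the payoffs $Y^I,X^I$ lose Lipschitz continuity precisely on the event $\{\tau_{I^{-}_\del}\le T<\tau_{I^{+}_\del}\}$ (the path wanders into a $\del$--neighborhood of $\partial I$). On this event the difference between the values for two nearby barriers is bounded, via the linear growth (\ref{2.3}) together with finiteness of $\tilde E^B\sup_{t\le T}(S^B_t)^q$, by a constant times $E^B[\mathbb{I}_{A_\del}\cdot(1+\sup_{t\le T}S^B_t)]$ where $A_\del=\{\inf_{t\le T}|\tilde S^B_t-L|\wedge|\tilde S^B_t-R|<\del\}$; standard Brownian hitting--density (and Doob maximal) estimates give $P^B(A_\del)\le C\sqrt\del\,(\ln(1/\del))^{1/2}$, and Cauchy--Schwarz then bounds the perturbation of $\cV^I$. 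An analogous tail estimate for the binomial walk, using the Azuma inequality on the Skorohod--embedded increments, yields the same bound for $\cV^I_n$. Finally, balancing the middle--term error $n^{-1/4}(\ln n)^{3/4}$ against the barrier--perturbation error $\sqrt\del\,(\ln(1/\del))^{c}$, and against the probability $\cO(n^{-1/2}(\ln n)^{c'})$ that the coupling exceeds $\del$, the choice $\del\asymp n^{-1/2}(\ln n)^{3/2}$ makes all three contributions at most $C_1 n^{-1/4}(\ln n)^{3/4}$, which is the desired estimate.
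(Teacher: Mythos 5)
Your overall plan — sandwich the barrier between perturbed barriers, couple the CRR and BS models via the Skorohod embedding, and invoke a barrier--perturbation lemma to control the outer terms — is indeed the same route the paper takes (the paper simply packages it through the shortfall--risk machinery, via the reduction $\cV^I=R^I(0)$ for $\mu=0$, so that the estimates of Theorem~\ref{thm2.2} can be reused, and then applies Lemma~\ref{lem3.3} through (\ref{3.19}) with $\epsilon=n^{-1/3}$). However, several of the quantitative claims in your sketch are incorrect in ways that prevent the argument from closing at the rate $n^{-1/4}(\ln n)^{3/4}$.

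First, the set $A_\del=\{\inf_{t\le T}|\tilde S^B_t-L|\wedge|\tilde S^B_t-R|<\del\}$ is far too large: it is, up to $O(\del)$ corrections, the event $\{\tau_I\le T\}$, whose probability is bounded away from zero, not $O(\sqrt\del(\ln(1/\del))^{1/2})$. The event on which the two barrier payoffs actually differ is the symmetric difference, essentially $\{\tau_I\le T<\tau_{I^+_\del}\}$, i.e. the running maximum of $W_t$ lands in a window of width $\sim\del/\kappa$ (or the running minimum in the mirror window); the bounded density of the running extremum gives probability $O(\del)$, not $O(\sqrt\del)$ — this is precisely the computation behind (\ref{5.20})--(\ref{5.21}). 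Second, even with $P(A_\del)=O(\del)$, the ``indicator plus Cauchy--Schwarz'' bound yields only $O(\sqrt\del)$ for the value perturbation, whereas Lemma~\ref{lem3.3} obtains $O(\del^{3/4})$ by working with moments of the time lapse $\tau-\tau\wedge\tau_H$ (see (\ref{3.27})--(\ref{3.33})), not with an indicator. This difference is not cosmetic: balancing a perturbation $\ep^a$ against the Skorohod coupling tail $\sqrt{n^{-3/2}/\ep^3}$ (which is how $P(\Theta)$ in (\ref{4.24})--(\ref{4.25}) enters via Cauchy--Schwarz) gives the exponent $n^{-3a/(4a+6)}$, which equals $n^{-1/4}$ only when $a=3/4$; with $a=1/2$ the best one can get is $n^{-3/16}$. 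Your final balancing step asserts a $\sqrt\del$ perturbation, but your own Cauchy--Schwarz with $P(A_\del)=O(\sqrt\del)$ only gives $\del^{1/4}$, so the numbers do not match internally either.

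Third, the claim that the middle term $|\cV^{I^{\pm}_\del}-\cV^{I^{\pm}_\del}_n|$ can be handled ``exactly as in \cite{Ki2}'' because ``once the barrier is fixed, the Lipschitz conditions transfer path proximity into payoff proximity'' is exactly the step that fails for barrier options: the factor $\bbI_{t<\tau_{I^{\pm}_\del}}$ is not Lipschitz in the path, and a coupling that keeps the paths $\del$--close does not make the payoffs close near the barrier. The paper avoids this by comparing the CRR option at barrier $I$ with the BS option at the \emph{larger} barrier $I_\epsilon$ (see (\ref{4.8}), (\ref{4.17})--(\ref{4.20})): on the good event the CRR knock--out implies the BS knock--out, so the indicators agree, and the bad event $\Theta$ is small because the paths would have to differ by $\ep$; a symmetric same--barrier comparison does not enjoy this one--sided implication. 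So as written, the middle term of your decomposition is not controlled.
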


Denote by $\mathcal{A}^{\xi,n}(x)$ the set of all
\textit{admissible} self financing strategies with an initial
capital $x$ and set
$\mathcal{A}^{\xi,n}=\bigcup_{x\geq{0}}\mathcal{A}^{\xi,n}(x)$.
Recall (see \cite{SKKM}) that a self financing strategy $\pi$ with
an initial capital $x$ and a horizon $n$ is a sequence
$(\pi_1,...,\pi_n)$ of pairs $\pi_k=(\beta_k,\gamma_k)$ where
$\beta_k,\gamma_k$ are $\mathcal{F}^{\xi}_{k-1}$-measurable random
variables representing the number of bond and stock units,
respectively, at time $k$. Thus the portfolio value $V^\pi_k$,
$k=0,1,...,n$ is given by
\begin{equation}\label{2.14}
V^\pi_0=x,\,\,
V^\pi_k=\beta_kb^{(n)}_{\frac{kT}{n}}+\gamma_kS^{(n)}_{\frac{kT}{n}},
\ \ 1\leq{k}\leq{n}.
\end{equation}
Denote by $\tilde{V}^\pi_k=(1+r_n)^{-k}V^\pi_k$ the discounted
portfolio value at time $k$. Since $\pi$ is self financing then
\begin{equation}\label{2.15}
\beta_kb^{(n)}_{\frac{kT}{n}}+\gamma_kS^{(n)}_{\frac{kT}{n}}=
\beta_{k+1}b^{(n)}_{\frac{kT}{n}}+\gamma_{k+1}S^{(n)}_{\frac{kT}{n}},
\end{equation}
and so (see \cite{Sh} and \cite{SKKM}),
\begin{equation}\label{2.16}
\tilde{V}^{\pi}_{k}=x+\sum_{i=0}^{k-1}{\gamma_{i+1}
(\tilde{S}^{(n)}_{\frac{(i+1)T}{n}}-\tilde{S}^{(n)}_{\frac{iT}{n}})}
\ \ \mbox{and} \ \
\beta_k=(x+\sum_{i=0}^{k-1}{\gamma_{i+1}(\tilde{S}^{(n)}_{\frac{(i+1)T}{n}}-
\tilde{S}^{(n)}_{\frac{iT}{n}})}-\gamma_k\tilde{S}^{(n)}_{\frac{kT}{n}})/b_0.
\end{equation}
Hence, as before, in order to determine a self financing strategy it
suffices to introduce a process ${\{\gamma_k\}}_{k=0}^n$ and to
obtain the process ${\{\beta_k\}}_{k=0}^n$ by (\ref{2.16}). We call
a self financing strategy $\pi$ \textit{admissible} if
$V^\pi_k\geq{0}$ for any $k\leq{n}$. A hedge with an initial capital
$x$ is an element in the set
$\mathcal{A}^{\xi,n}(x)\times\mathcal{T}^\xi_{0n}$. The definitions
for the shortfall risks in the CRR markets are similar to the
definitions in the BS model. Thus for the $n$--step CRR market the
shortfall risks are given by
\begin{eqnarray}\label{2.17}
&R^I_n(\pi,\sigma)=\max_{\tau\in\mathcal{T}^\xi_{0n}}E^{\xi}_n(Q
^{I,n}(\sigma,\tau)-\tilde{V}^{\pi}_{\sigma\wedge\tau})^+, \\
&R^I_n(\pi)=\min_{\sigma\in\mathcal{T}^{\xi}_{0n}}R^I_n(\pi,\sigma)
\ \ \mbox{and} \ \ R^I_n(x)=
\inf_{\pi\in\mathcal{A}^{\xi,n}(x)}R^I_n(\pi), \nonumber
\end{eqnarray}
where $E^{\xi}_n$ is the expectation with respect to $P^{\xi}_n$.
\begin{thm}\label{thm2.2}
For any open interval $I$
\begin{equation}\label{2.18}
lim_{n\rightarrow\infty}R^I_n(x)=R^I(x).
\end{equation}
Furthermore, there exists a constant $C_2$ (which does not depend on
the interval $I$) such that for any $n\in\mathbb{N}$
\begin{equation}\label{2.19}
R^I(x)\leq{R^I_n(x)+C_2n^{-\frac{1}{4}}(\ln{n})^{3/4}}.
\end{equation}
\end{thm}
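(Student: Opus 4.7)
The plan is to exploit the Skorokhod embedding of Bernoulli random walks into the driving Brownian motion used in \cite{DK2,Ki2}: on a common probability space we would construct the Brownian path driving $S^B$ together with, for each $n$, embedding times $\theta^n_0\le\cdots\le\theta^n_n$ such that the joint law of $\tilde S^B_{\theta^n_m}$ coincides with that of $\tilde S^{(n)}_{mT/n}$. Standard estimates then yield $\max_{m\le n}|\theta^n_m-mT/n|$ and $\sup_{t\le T}|S^B_t-S^{(n)}_{[nt/T]T/n}|$ bounded by $O(n^{-1/4}(\ln n)^{3/4})$ on an event whose complement has negligible probability, and this is the source of the rate in (\ref{2.19}).

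To prove the one-sided estimate (\ref{2.19}), I would take an (almost) optimal hedge $(\pi^n,\sigma^n)\in\cA^{\xi,n}(x)\times\cT^{\xi}_{0n}$ for $R^I_n(x)$ and lift it to a BS hedge $(\hat\pi,\hat\sigma)$ by holding the CRR portfolio constant between consecutive embedding times and setting $\hat\sigma=\theta^n_{\sigma^n}$. By (\ref{2.11}) and (\ref{2.16}) the discounted portfolios then agree on the grid: $\tilde V^{\hat\pi}_{\theta^n_m}=\tilde V^{\pi^n}_m$. For any BS stopping time $\tau$ one replaces it by its grid projection $\eta_n(\tau)=\max\{m:\theta^n_m\le\tau\}$, invokes the modulus-of-continuity bounds (\ref{2.1})--(\ref{2.2}) on $F$ and $\Delta$, and uses the embedding estimates together with the regularity already established en route to Theorem~\ref{thm2.1} to bound $R^I(\hat\pi,\hat\sigma)\le R^I_n(\pi^n,\sigma^n)+O(n^{-1/4}(\ln n)^{3/4})$, modulo an error coming from the barrier indicator.

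The barrier term will be the main obstacle: the factor $\mathbb{I}_{t<\tau_I}$ in (\ref{2.7}) can jump between the two processes even when their paths are uniformly close, because $S^B$ may exit $I$ during an interval $[\theta^n_{m-1},\theta^n_m]$ in which $S^{(n)}$ stays strictly inside. Following Lemmas \ref{lem3.3}--\ref{lem3.4} announced in the introduction, the remedy is to sandwich $I$ between two slightly shifted intervals $I_n^-\subset I\subset I_n^+$ whose endpoints differ from those of $I$ by $\ve_n=O(n^{-1/4}(\ln n)^{3/4})$. On the good event of the Skorokhod coupling, the CRR exit time with respect to $I_n^+$ occurs no earlier, and that with respect to $I_n^-$ no later, than the BS exit time with respect to $I$; the continuity-in-barrier estimates furnished by those lemmas show that replacing $I$ by $I_n^{\pm}$ changes both $\cV^I$ and $R^I(x)$ (and their CRR counterparts) by at most $O(\ve_n)$, so the outer estimate survives and (\ref{2.19}) follows.

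For the limit (\ref{2.18}) we would additionally need $\limsup_n R^I_n(x)\le R^I(x)$. Starting from an $\ve$-optimal BS hedge $(\pi,\sigma)$, I would discretise $\gamma_t$ on the grid $\{kT/n\}$ (for instance via its conditional expectation given $\cF^B_{(k-1)T/n}$), round $\sigma$ up to the nearest grid time, and then transfer this discrete hedge to the CRR market through the inverse Skorokhod coupling. The same barrier-shift argument via Lemmas \ref{lem3.3}--\ref{lem3.4} controls the payoff discrepancy, and letting $n\to\infty$ followed by $\ve\to 0$ yields the required upper bound. Only qualitative convergence can be expected in this direction since a generic BS hedge lacks any a priori modulus of continuity, which is exactly the reason (\ref{2.19}) is stated one-sidedly.
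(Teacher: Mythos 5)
Your overall architecture matches the paper's: both directions rely on the Skorokhod embedding, the transfer maps $\psi_n,\phi_n$, and the barrier-shift lemmas (Lemmas \ref{lem3.3}, \ref{lem3.4}) to absorb the mismatch between continuous and grid exit times. The qualitative discussion — that (\ref{2.18}) requires an additional density argument while (\ref{2.19}) only needs the lift of a near-optimal CRR hedge, and that the asymmetry is caused by the absence of an a priori modulus of continuity for BS hedges — is exactly the paper's reasoning (cf. the opening remarks of Section~\ref{sec4}).

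The quantitative detail that would break your argument is the size of the barrier shift. You propose $\ve_n=O(n^{-1/4}(\ln n)^{3/4})$, apparently read off from the uniform Skorokhod coupling bound. This is both the wrong source of the rate and the wrong magnitude. The shift must balance two competing contributions: by Lemma~\ref{lem3.3} shifting the barrier by $\ve$ costs $O(\ve^{3/4})$, while by (\ref{4.43})--(\ref{4.44}) the probability of a barrier-crossing mismatch is controlled via a third-moment Markov bound, $P(\tilde\Theta)\lesssim\ve^{-3}n^{-3/2}$, and enters via Cauchy--Schwarz as $\sqrt{P(\tilde\Theta)}\lesssim\ve^{-3/2}n^{-3/4}$. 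Equating these two terms forces $\ve_n=n^{-1/3}$, which yields an $O(n^{-1/4})$ contribution from the barrier and lets the rate be dominated by the payoff-function discretization estimate (\ref{4.18}), which is where the $n^{-1/4}(\ln n)^{3/4}$ actually comes from (inherited from \cite{Ki2}, Lemmas 3.2--3.3). With your choice $\ve_n=n^{-1/4}(\ln n)^{3/4}$, the Lemma~\ref{lem3.3} cost is $\ve_n^{3/4}=n^{-3/16}(\ln n)^{9/16}$, which is strictly slower than $n^{-1/4}(\ln n)^{3/4}$ and would ruin (\ref{2.19}). You also need to handle separately the degenerate regime $\min(R/S_0,S_0/L)\le e^{n^{-1/3}}$, for which the shrunken interval $J_n$ is empty; the paper disposes of it at (\ref{4.27+}) via Lemma~\ref{lem3.4} and the trivial lower bound $R^I_n(x)\ge(F_0(S_0)-x)^+$. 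Finally, for the limit (\ref{2.18}) the paper does not discretize $\gamma_t$ directly; it first replaces the $\ve$-optimal BS hedge by one whose terminal wealth is a smooth function $f_\delta$ of finitely many Brownian increments (density step, via Lemmas 4.1--4.3 of \cite{DK2}) and then transfers that terminal value to the embedding times. Your projection of the hedging ratio by conditional expectation is plausible but would require extra moment and regularity estimates that the density route avoids.
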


The above result says that the shortfall risk $R^{I}(x)$ for double
barrier options in the BS model can be approximated by a sequence of
shortfall risks with an initial capital $x$ for a similar options in
the CRR markets and it provides also a one sided error estimate of
the approximation. This result is a generalization of Theorem 2.1 in
\cite{DK2} which deals with regular game options.

In order to compare the option prices and the shortfall risks in the
BS model with the corresponding quantities in the CRR markets, we
will use (a trivial form of) the Skorohod type embedding (see
\cite{B}) which allows us to consider the above objects on the same
probability space. Thus, define recursively
\begin{equation*}
\theta^{(n)}_0=0, \ \theta^{(n)}_{k+1}=\inf{\{t>\theta^{(n)}_k
:|B^{*}_t-B^{*}_{\theta^{(n)}_k}|=\sqrt{\frac{T}{n}}\}},
\end{equation*}
where, recall, $B^*_t=(\frac{\mu}{\kappa}-\frac{\kappa}{2})t+B_t$.
Using the same arguments as in \cite {Ki2} we obtain that for each
of the measures $P^{B},\tilde{P}^{B}$, the sequence
$\theta^{(n)}_k-\theta^{(n)}_{k-1}$, $k=1,2,...$ is a sequence of
i.i.d. random variables such that
$(\theta^{(n)}_{k+1}-\theta^{(n)}_k,
B^{*}_{\theta^{(n)}_{k+1}}-B^{*}_{\theta^{(n)}_k})$ are independent
of $\mathcal{F}^{B}_{\theta^{(n)}_k}$. Employing the exponential
martingale $\exp((\kappa-\frac{2\mu}{\kappa})B^{*}_t)$ for the
probability $P^B$ we obtain that
$E^{B}\exp((\kappa-\frac{2\mu}{\kappa})B^{*}_{\theta^{(n)}_1}) =1$
concluding that $B^{*}_ {\theta^{(n)}_1}=\sqrt{\frac{T}{n}}$ or
$-\sqrt{\frac{T}{n}}$ with probability $p^{(n)}$ or $1-p^{(n)}$,
respectively. Using the martingale
$\tilde{S}^B_t=S_0\exp({\kappa}B^{*}_t)$ for the probability $\tilde
P^B$ we obtain
$\tilde{E}^{B}\exp({\kappa}B^{*}_{\theta^{(n)}_1})=1$, and so
$B^{*}_ {\theta^{(n)}_1}=\sqrt{\frac{T}{n}}$ or
$-\sqrt{\frac{T}{n}}$ with probability $\tilde{p}^{(n)}$ or
$1-\tilde{p}^{(n)}$ respectively.

The Skorohod embedding also allows us to define mappings (introduced
in \cite{DK2} and \cite{Ki2}) which map hedges in CRR markets to
hedges in the BS model and which will play a decisive role in
Theorems \ref{thm2.3} and \ref{thm2.4} below. For readers
convenience we review the definitions. For any $n\in\mathbb{N}$ set
$\bg^{(n)}_i=B^{*}_{\theta^{(n)}_i} -B^{*}_{\theta^{(n)}_{i-1}},\,
i=1,2,...$ and following \cite{Ki2} introduce for each $k=1,2,...$
the finite $\sigma$--algebra
$\mathcal{G}^{B,n}_k=\sigma\{\bg^{(n)}_1,...,\bg^{(n)}_k\}$ with
$\mathcal{G}^{B,n}_0={\{\emptyset,\Omega_B\}}$.
 Let $\mathcal{S}^{B,n}_{0,n}$ be the set of all
stopping times with respect to the filtration $\mathcal{G}^{B,n}_k,
k=0,1,2...$ with values in ${\{0,1...,n\}}$. Observe that for any
$n$ we have a natural bijection
$\Pi_n:L^{\infty}(\mathcal{F}^{\xi}_n,P^{\xi}_n)\rightarrow{L^{\infty}
(\mathcal{G}^{B,n}_n,P^B})$ which is given by $\Pi_n(Z)=\tilde Z$ so
that if $Z=f(\xi_1,...,\xi_n)$ for a function $f$ on $\{-1,1\}^n$
then $\tilde
Z=f(\sqrt{\frac{n}{T}}\bg^{(n)}_1,...,\sqrt{\frac{n}{T}}\bg^{(n)}_n)$.
Notice that if we restrict $\Pi_n$ to
$L^{\infty}(\mathcal{F}^{\xi}_k,P^{\xi}_n)$ we obtain a bijection
$\Pi_{n,k}:L^{\infty}(\mathcal{F}^{\xi}_k,P^{\xi}_n)\rightarrow{L^{\infty}
(\mathcal{G}^{B,n}_k,P^B})$ and if we restrict $\Pi_n$ to
$\mathcal{T}^{\xi}_{0n}$ we get a bijection
$\Pi_n:\mathcal{T}^{\xi}_{0n}\rightarrow\mathcal{S}^{B,n}_{0,n}$. In
addition to the set $\mathcal{S}^{B,n}_{0,n}$ consider also the set
$\mathcal{T}^{B,n}_{0,n}$ of stopping times with respect to the
filtration ${\{\mathcal{F}^{B}_{\theta^{(n)}_k}\}}_{k=0}^n$ with
values in ${\{0,1,...n\}}$. Clearly
$\mathcal{S}^{B,n}_{0,n}\subset\mathcal{T}^{B,n}_{0,n}$. Next, we
define a function $\phi_n:\mathcal{T}^{\xi}_{0n}\rightarrow
\mathcal{T}^{B}_{0T}$ which maps stopping times in CRR markets to
stopping times in the BS model by
\begin{equation}\label{2.20}
\phi_n(\sigma)=T\wedge\theta^{(n)}_{\Pi_n(\sigma)} \ \ \mbox{if} \ \
\Pi_n(\sigma)<n  \ \ \mbox{and} \ \ \phi_n(\sigma)=T \ \ \mbox{if} \
\  \Pi_n(\sigma)=n.
\end{equation}
It is easy to see that $\phi_n(\sigma)\in\mathcal{T}^{B}_{0T}$ (see
(2.28) in \cite{DK2}). For each $n$ and $x>0$ let
$\mathcal{A}^{B,n}(x)$ be the set of all \textit{admissible} self
financing strategies with an initial capital $x$ in the BS model
which can be managed only on the set
${\{0,\theta^{(n)}_1,...,\theta^{(n)}_n\}}$, such that the
discounted portfolio value remains constant after the moment
$\theta^{(n)}_n$ and set
$\mathcal{A}^{B,n}=\bigcup_{x\geq{0}}\mathcal{A}^{B,n}(x)$. Thus if
$\pi={\{(\beta_t,\gamma_t)\}}_{t=0}^\infty\in\mathcal{A}^{B,n}$ then
$\beta_t=\beta_{\theta^{(n)}_k}$ and
$\gamma_t=\gamma_{\theta^{(n)}_k}$ for any $k<n$ and
$t\in{[\theta^{(n)}_k,\theta^{(n)}_{k+1})}$.
 Furthermore, in order to keep the
discounted portfolio constant after $\theta^{(n)}_n$ the investor
should sell all his stocks at the moment $\theta^{(n)}_n$ and buy
bonds for all money, and so $\gamma_t=0$ for $t\geq\theta^{(n)}_n$.
From (\ref{2.11}) it follows that for
$\pi={\{(\beta_t,\gamma_t)\}}_{t=0}^\infty\in\mathcal{A}^{B,n}$ the
corresponding discounted portfolio value is given by
\begin{equation}\label{2.21}
\tilde{V}^{\pi}_t=\tilde{V}^{\pi}_{\theta^{(n)}_k}+\gamma_{\theta^{(n)}_k}
(\tilde{S}^{B}_{t}-\tilde{S}^{B}_{\theta^{(n)}_{k}}),\,
 t\in{[\theta^{(n)}_k,\theta^{(n)}_{k+1}]}\, \ \mbox{and}\,\
\tilde{V}^{\pi}_t=\tilde{V}^{\pi}_{\theta^{(n)}_n},\,
t>\theta^{(n)}_n.
\end{equation}
Finally, we define a function
$\psi_n:\mathcal{A}^{\xi,n}(x)\rightarrow\mathcal{A}^{B,n}(x)$ which
maps \textit{admissible} self financing strategies in the CRR
$n$--step model to the set of the above self financing strategies in
the BS model. For
$\pi=\{(\beta_k,\gamma_k)\}_{k=1}^n\in{\mathcal{A}^{\xi,n}(x)}$
define $\psi_n(\pi)\in\mathcal{A}^{B,n}(x)$ by
\begin{eqnarray}\label{2.22}
&\tilde{V}^{\psi_n(\pi)}_t=\tilde{V}^{\psi_n(\pi)}_{\theta^{(n)}_k}+
\Pi_n(\gamma_{k+1})
(\tilde{S}^{B}_{t}-\tilde{S}^{B}_{\theta^{(n)}_{k}}),\,
 t\in{[\theta^{(n)}_k,\theta^{(n)}_{k+1}]},\\
&\mbox{and} \ \
\tilde{V}^{\psi_n(\pi)}_t=\tilde{V}^{\psi_n(\pi)}_{\theta^{(n)}_n},
\ \  t>\theta^{(n)}_n. \nonumber
\end{eqnarray}
Observe that $\Pi_n(\tilde{S}^{(n)}_{\frac{kT}{n}})=
\tilde{S}^{B}_{\theta^{(n)}_k}$ for any $k\leq{n}$, and so we obtain
from (\ref{2.16}) and (\ref{2.21}) that
$\tilde{V}^{\psi_n(\pi)}_{\theta^{(n)}_k}=\Pi_n(\tilde{V}^{\pi}_k)\geq{0}$
for any $k\leq{n}$. Since the process $\tilde{V}^{\psi_n(\pi)}_t$,
$t\geq{0}$ is a martingale with respect to the martingale measure
$\tilde{P}^B$ and it remains constant for $t\geq\theta^{(n)}_n$ we
get that the portfolio $\psi_n(\pi)$ is \textit{admissible}
concluding that $\psi_n(\pi)\in\mathcal{A}^{B,n}(x)$, as required.
Clearly, if we restrict the portfolio $\psi_n(\pi)$ to the interval
$[0,T]$ we can consider $\psi_n(\pi)$ as an element in
$\mathcal{A}^{B}(x)$.

Let $I=(L,R)$ be an open interval and set
$L_n=L\exp(-n^{-\frac{1}{3}})$, $R_n=R\exp(n^{-\frac{1}{3}})$ (with
 $R_n=\infty$ if $R=\infty$) and $I_n=(L_n,R_n)$. Let
$(\pi,\sigma)\in\mathcal{A}^{\xi,n}(\cV^{I_n}_n)\times\mathcal{T}^\xi_{0n}$
be a perfect hedge for a double barrier option in the $n$--step CRR
market with the barriers $L_n,R_n$, i.e. a hedge which satisfies
$\tilde{V}^{\pi}_{\sigma\wedge{k}}\geq{Q^{I,n}(\sigma,k)}$ for any
$k\leq{n}$. In general the construction of perfect hedges for game
options in CRR markets can be done explicitly (see \cite{Ki1},
Theorem 2.1). The following result shows that if we embed the
perfect hedge $(\pi,\sigma)$ into the BS model we obtain a hedge
with small shortfall risk for the barrier option with barriers
$L,R$.
\begin{thm}\label{thm2.3}
Let $I=(L,R)$ be an open interval. For any $n$ let
$(\pi^p_n,\sigma^p_n)\in\mathcal{A}^{\xi,n}(\cV^{I_n}_n)\times
\mathcal{T}^\xi_{0n}$ be a perfect hedge for a double barrier option
in the $n$--step CRR market with the barriers $L_n,R_n$. Define
$(\pi^B_n,\sigma^B_n)\in\mathcal{A}^{B}(\cV^{I_n}_n)\times\mathcal{T}^{B}_{0T}$
by $\pi^B_n=\psi_n(\pi^p_n)$ and $\sigma^B_n=\phi_n(\sigma^p_n)$.
There exists a constant $C_3$ (which does not depend on the interval
$I$) such that for any $n,$
\begin{equation}\label{2.23}
R^I(\pi^B_n,\sigma^B_n)\leq{C_3n^{-\frac{1}{4}}(\ln{n})^{\frac{3}{4}}}.
\end{equation}
\end{thm}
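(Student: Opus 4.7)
The objective is to bound, uniformly in $\tau\in\mathcal{T}^B_{0T}$,
\[
E^B\bigl[\bigl(Q^{B,I}(\sigma^B_n,\tau)-\tilde V^{\pi^B_n}_{\sigma^B_n\wedge\tau}\bigr)^+\bigr]
\]
by $O(n^{-1/4}(\ln n)^{3/4})$. The natural route is to discretise $\tau$, transport the perfect--hedge inequality from the CRR side through the bijection $\Pi_n$ to the Skorohod--embedded BS portfolio, and then separately control (i) the portfolio slippage between successive embedding times, (ii) the Lipschitz error in the path--dependent parts $F,\Delta$, and (iii) the barrier--indicator mismatch created by passing from $I$ to the buffered interval $I_n$.

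To a continuous stopping time $\tau\in\mathcal{T}^B_{0T}$ associate the discrete counterpart $\eta=\min\{k\le n:\theta^{(n)}_k\ge\tau\}\in\mathcal{S}^{B,n}_{0,n}$ and set $\zeta=\Pi_n^{-1}(\eta)\in\mathcal{T}^\xi_{0n}$; write also $\hat\sigma=\Pi_n(\sigma^p_n)$, so that $\sigma^B_n=T\wedge\theta^{(n)}_{\hat\sigma}$ on $\{\hat\sigma<n\}$ by (\ref{2.20}). The identity $\tilde V^{\pi^B_n}_{\theta^{(n)}_k}=\Pi_n(\tilde V^{\pi^p_n}_k)$ from (\ref{2.22}) combined with the perfect--hedge inequality $\tilde V^{\pi^p_n}_{\sigma^p_n\wedge\zeta}\ge Q^{I_n,n}(\sigma^p_n,\zeta)$, transported by $\Pi_n$, yields the pointwise estimate
\begin{align*}
\bigl(Q^{B,I}(\sigma^B_n,\tau)-\tilde V^{\pi^B_n}_{\sigma^B_n\wedge\tau}\bigr)^+ &\le \bigl|\tilde V^{\pi^B_n}_{\sigma^B_n\wedge\tau}-\tilde V^{\pi^B_n}_{\theta^{(n)}_{\hat\sigma\wedge\eta}}\bigr| \\
&\quad+\bigl|Q^{B,I}(\sigma^B_n,\tau)-\Pi_n\bigl(Q^{I_n,n}(\sigma^p_n,\zeta)\bigr)\bigr|.
\end{align*}
It is enough to bound the $E^B$--expectation of each of these two random variables uniformly in $\tau$.

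The first term is a single embedding--interval increment of a discounted martingale under $\tilde P^B$: by (\ref{2.21}), $\tilde V^{\pi^B_n}_t-\tilde V^{\pi^B_n}_{\theta^{(n)}_k}=\gamma_{\theta^{(n)}_k}(\tilde S^B_t-\tilde S^B_{\theta^{(n)}_k})$ for $t\in[\theta^{(n)}_k,\theta^{(n)}_{k+1}]$. After changing measure from $\tilde P^B$ to $P^B$ via the density $Z_T$ of (\ref{2.8+}), truncating $\sup_{t\le T}\tilde S^B_t$ and $Z_T$ at levels of order $(\ln n)^{1/2}$, and applying Doob's inequality on a step of $\tilde P^B$--quadratic variation $O(T/n)$, one obtains the required $n^{-1/4}(\ln n)^{3/4}$ rate in the spirit of the argument for regular game options in \cite{Ki2}. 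The regular part of the second term is handled by splitting $Q^{B,I}$ into its $F$-- and $\Delta$--pieces and invoking (\ref{2.1})--(\ref{2.2}), which reduces it to the Skorohod embedding moments $E^B\sup_{k\le n}|\theta^{(n)}_k-kT/n|$ and the uniform distance $\sup_{t\le T}|S^B_t-\Pi_n(S^{(n)}_{[nt/T]T/n})|$, both of the same order after the standard truncation of $\sup_t|S^B_t|$.

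The genuine new obstacle is the remaining barrier--indicator contribution, namely the expectation of $|\mathbb{I}_{\sigma^B_n\wedge\tau<\tau_I}-\Pi_n(\mathbb{I}_{\sigma^p_n\wedge\zeta<\tau^{(n)}_{I_n}})|$ multiplied by the truncated payoff $(F+G)$. The buffer $L_n=L\exp(-n^{-1/3}),\,R_n=R\exp(n^{-1/3})$ is calibrated precisely so that a mismatch between these indicators forces $S^B$ to spend time inside an $e^{\pm n^{-1/3}}$--neighborhood of one of the barriers; the probability of this event, weighted by the truncated payoff, is exactly what is controlled by Lemmas \ref{lem3.3} and \ref{lem3.4}, which bound the change of the knock--out option value under a multiplicative $e^{n^{-1/3}}$ perturbation of the barriers by $O(n^{-1/4}(\ln n)^{3/4})$ via Gaussian tail estimates for the Brownian supremum near a hitting level. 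Crucially, those Gaussian tail bounds are uniform in the location of $L$ and $R$, which is what makes $C_3$ independent of $I$. Summing the three contributions and taking $\sup_\tau$ delivers (\ref{2.23}).
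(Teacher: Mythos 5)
Your high-level decomposition (discretise $\tau$, transport the perfect-hedge inequality through $\Pi_n$, split into portfolio slippage, Lipschitz payoff error, and barrier-indicator mismatch) is the right skeleton, and indeed it matches what happens inside the paper's derivation of the key comparison (4.44+). But there are two substantive problems, the first of which is a genuine gap that breaks the argument as you have written it.

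\textbf{The portfolio slippage term cannot be bounded via Doob's inequality.} You claim to control $\bigl|\tilde V^{\pi^B_n}_{\sigma^B_n\wedge\tau}-\tilde V^{\pi^B_n}_{\theta^{(n)}_{\hat\sigma\wedge\eta}}\bigr|$ by ``applying Doob's inequality on a step of $\tilde P^B$--quadratic variation $O(T/n)$''. But the step quadratic variation of $\tilde V^{\pi^B_n}$ is \emph{not} $O(T/n)$: by (2.21) it equals $\gamma_{\theta^{(n)}_k}^2$ times the quadratic variation of $\tilde S^B$, and for a perfect hedge of a barrier option the delta $\gamma_{\theta^{(n)}_k}$ is not uniformly bounded --- the admissibility constraint only gives $|\gamma_{\theta^{(n)}_k}\tilde S^B_{\theta^{(n)}_k}|\lesssim\sqrt{n}\,\tilde V^{\pi^B_n}_{\theta^{(n)}_k}$, so the single-step increment of the portfolio can be as large as order one times the portfolio value, and your estimate cannot produce a rate. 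The paper circumvents the delta entirely: since $\tilde V^{\pi^B_n}$ is a $\tilde P^B$-martingale and $\sigma^B_n\wedge\tau\wedge\theta^{(n)}_n\le\theta^{(n)}_{\hat\sigma\wedge\eta}$, one has $\tilde V^{\pi^B_n}_{\sigma^B_n\wedge\tau}=\tilde E^B\bigl(\tilde V^{\pi^B_n}_{\theta^{(n)}_{\hat\sigma\wedge\eta}}\big|\mathcal F^B_{\sigma^B_n\wedge\tau\wedge\theta^{(n)}_n}\bigr)$, and applying Jensen's inequality to the convex map $z\mapsto(q-z)^+$ (with the $\mathcal F^B_{\sigma^B_n\wedge\tau\wedge\theta^{(n)}_n}$-measurable payoff $q$) and then switching to $P^B$ via the density $Z$ of (2.8+) reduces the slippage error to a quantity involving only the fluctuation $Z_{\sigma^B_n\wedge\tau\wedge\theta^{(n)}_n}-Z_{\theta^{(n)}_{\hat\sigma\wedge\eta}}$ multiplied by the payoff --- see (4.33)--(4.38). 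That fluctuation is governed by moments of the embedding times $u_n,w_n$, not by the portfolio, which is why the bound is uniform over all perfect hedges.

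\textbf{The barrier-indicator term is not controlled by Lemmas 3.3/3.4 in the way you describe.} Inside the comparison (4.44+), the mismatch event $\tilde\Theta$ is shown, via (4.43), to be contained in $\{r(u_n+w_n)+\kappa\sqrt{T/n}>n^{-1/3}\}$, and its probability is then bounded by the Markov inequality and moments of $u_n,w_n$, giving $O(n^{-1/2})$ --- this is a pure Skorohod-embedding statistic, not a Gaussian hitting-time estimate. Lemmas 3.3 and 3.4 enter elsewhere (in the risk-level comparison $R^I(x)-R^{J_n}(x)$ used for (2.19), and in the small-interval case); they are not what controls the pointwise indicator mismatch on a fixed hedge. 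In the proof of Theorem 2.3, in fact, no appeal to Lemma 3.3 is needed at all: by design, the barriers $L_n,R_n$ are chosen so that shrinking them back by $e^{\mp n^{-1/3}}$ recovers exactly $L,R$, so applying (4.44+) with the outer interval $I_n$ in place of $I$ makes the inner shrunk interval equal to $I$ and immediately yields (2.23), since the perfect hedge has $R^{B,I_n}_n(\pi^B_n,\Pi_n(\sigma^p_n))=0$ by (4.53). The paper's proof of Theorem 2.3 is thus a short corollary of the estimate (4.44+) rather than a standalone argument.
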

We will see (as a conclusion of (\ref{3.19}) and Theorem
\ref{thm2.1}) that there exists a constant $\tilde{C}$ (which does
not depend on the interval $I$) such that
$|\cV^I-\cV^{I_n}_n|\leq{\tilde{C}n^{-\frac{1}{4}}(\ln{n})^{\frac{3}{4}}}$
for any $n$. Since the above term is small then in practice a seller
of a double barrier
 game option with the barriers $L,R$ can invest the amount
$\cV^{I_n}_n$ in the portfolio and use the above hedges facing only
small shortfall risk.

Next, consider an investor in the BS market whose initial capital
$x$ which is less than the option price $\cV^I$. A hedge
$(\pi,\sigma)\in{\mathcal{A}^B(x)\times\mathcal{T}^B_{0T}}$ will be
called $\ve$-optimal if $R^{I}(\pi,\sigma)\leq{R^{I}(x)+\ve}$. For
$\ve=0$ the above hedge is called an optimal hedge. For the CRR
markets we have an analogous definitions. In the next section we
will follow \cite{DK1} and construct optimal hedges
$(\pi_n,\sigma_n)\in\mathcal{A}^{\xi,n}(x)\times\mathcal{T}^{\xi}_{0n}$
for double barrier options in the $n$--step CRR markets with
barriers $L_n,R_n$. By embedding this hedges into the BS model we
obtain a simple representation of $\ve$--optimal hedges for the the
BS model.
\begin{thm}\label{thm2.4}
For any $n$ let
$(\pi_n,\sigma_n)\in\mathcal{A}^{\xi,n}(x)\times\mathcal{T}^{\xi}_{0n}$
be the optimal hedge which is given by (\ref{3.15}) with $H=I_n$.
Then
\begin{equation}\label{2.24}
lim_{n\rightarrow\infty}R^{I}(\psi_n(\pi_n),\phi_n(\sigma_n))=R^I(x).
\end{equation}
\end{thm}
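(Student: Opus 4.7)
The lower bound $R^I(x)\le R^I(\psi_n(\pi_n),\phi_n(\sigma_n))$ is immediate from the definition of $R^I(x)$ as the infimum over admissible hedges, so the task is to prove the matching upper bound $\limsup_{n\to\infty}R^I(\psi_n(\pi_n),\phi_n(\sigma_n))\le R^I(x)$. The plan is to run the chain
\[
R^I(\psi_n(\pi_n),\phi_n(\sigma_n))\le R^{I_n}(\psi_n(\pi_n),\phi_n(\sigma_n))\le R^{I_n}_n(\pi_n,\sigma_n)+\ve_n=R^{I_n}_n(x)+\ve_n,
\]
and then show $\limsup_n R^{I_n}_n(x)\le R^I(x)$ with $\ve_n\to 0$. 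The first inequality is \emph{barrier monotonicity} in the BS model: since $I\subset I_n$, we have $\tau_I\le\tau_{I_n}$, hence $Y^I_t\le Y^{I_n}_t$ and $X^I_t\le X^{I_n}_t$, so $Q^{B,I}\le Q^{B,I_n}$ pointwise and the inequality lifts directly to the shortfall via (\ref{2.11+}). The equality on the right is the defining optimality of $(\pi_n,\sigma_n)$ as the CRR hedge with initial capital $x$ for the $I_n$ option.

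The second inequality is the Skorohod transport of shortfall risk from the BS model to the CRR market, performed along the same lines as the proofs of Theorems \ref{thm2.1}--\ref{thm2.2} and \cite{DK2}. Given an almost-optimal $\tau\in\mathcal{T}^B_{0T}$ for $R^{I_n}(\psi_n(\pi_n),\phi_n(\sigma_n))$, round it to a stopping time $\eta\in\mathcal{T}^\xi_{0n}$ along the Skorohod skeleton $\{\theta^{(n)}_k\}$, and use the identities $\tilde V^{\psi_n(\pi_n)}_{\theta^{(n)}_k}=\Pi_n(\tilde V^{\pi_n}_k)$ and $\tilde S^B_{\theta^{(n)}_k}=\Pi_n(\tilde S^{(n)}_{kT/n})$ to identify portfolios and discounted payoffs at embedding times. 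The resulting error decomposes into: (i) the time-Lipschitz bound (\ref{2.2}) applied at $|\theta^{(n)}_k-kT/n|$, controlled by the usual $L^2$-estimates for the Skorohod times; (ii) the uniform-metric bound (\ref{2.1}) for the oscillation of $S^B$ between skeleton times; (iii) the comparison of the BS hitting time $\tau_{I_n}$ with the CRR hitting time $\tau^{(n)}_{I_n}$, where the $\exp(\pm n^{-1/3})$ widening absorbs the typical modulus of continuity of $B^*$ on scale $\sqrt{T/n}$. All three errors are $o(1)$ (in fact $O(n^{-1/4}(\ln n)^{3/4})$) uniformly in $I$, by the same estimates underlying Theorem \ref{thm2.1}.

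The main obstacle is the remaining inequality $\limsup_n R^{I_n}_n(x)\le R^I(x)$, which compares a CRR shortfall with \emph{widened} barriers $I_n$ to the BS shortfall with the original $I$. I would prove it by fixing $\ve>0$ and taking an $\ve$-optimal BS hedge $(\pi^*,\sigma^*)$ for the $I$ option, discretising it to a CRR hedge with initial capital $x$ by conditional-expectation projection onto the Skorohod filtration $\{\mathcal{G}^{B,n}_k\}$, and then invoking the barrier-perturbation estimates (the analogues of Lemmas \ref{lem3.3}, \ref{lem3.4}, \ref{lem5.1} that are the novel ingredient of the paper) to bridge the $I\mapsto I_n$ change of barriers. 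These perturbation lemmas are precisely tuned so that the widening factor $\exp(\pm n^{-1/3})$ dominates the typical fluctuations of $B^*$ on intervals of length $O(T/n)$; the combined error from discretisation and perturbation is $o(1)$, so that sending first $n\to\infty$ and then $\ve\to 0$ closes the argument. The delicate step, and the principal technical difficulty, is harmonising the barrier perturbation with the time discretisation so that the barrier-hitting events of the discretised process track those of the BS process on the appropriate probability scale.
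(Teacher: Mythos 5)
Your high-level aim — bound $\limsup_n R^I(\psi_n(\pi_n),\phi_n(\sigma_n))$ by $R^I(x)$ using the barrier-perturbation lemmas together with the CRR-vs-BS Skorohod transport estimates — is in the spirit of the paper. The trouble is in the specific decomposition you propose.

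Your middle inequality,
\begin{equation*}
R^{I_n}(\psi_n(\pi_n),\phi_n(\sigma_n))\le R^{I_n}_n(\pi_n,\sigma_n)+\ve_n,
\end{equation*}
is a \emph{same-barrier} comparison: knock-out barrier $I_n$ for the continuous process on the left, barrier $I_n$ for the discrete skeleton on the right. But the paper never proves such a bound, precisely because the probability that the discrete skeleton $S^{B,n}_{\cdot}$ crosses a barrier while $S^B_\cdot$ stays inside (or vice versa) is controlled through a deviation argument that \emph{requires} the two barriers to be separated by a gap of order $n^{-1/3}$: see $\tilde\Theta$ in (\ref{4.40}) and the chain (\ref{4.43})--(\ref{4.44}), where the event forces a one-step CRR/BS price ratio to exceed $e^{n^{-1/3}}$, a tail which is then killed by the third-moment Markov bound. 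If instead the barrier is the same on both sides, the corresponding event is a near-barrier event of the random walk, and the paper's machinery does not bound it — this is a genuinely different and more delicate estimate (and one would need it to be uniform in $I$). Your point (iii) invokes the $\exp(\pm n^{-1/3})$ widening ``absorbing the modulus of $B^*$'' inside this step, but that widening was already spent in your step 1 (barrier monotonicity $I\subset I_n$), leaving no gap between $\tau_{I_n}$ and $\tau^{B,n}_{I_n}$ to work with in step 2. The paper's fix is to collapse your steps 1 and 2 into one cross-barrier estimate: apply (\ref{4.44+}) with the outer CRR barrier $I=H_n$, so that the shrunk interval $J_n$ that appears on the BS side equals $H$ exactly, yielding the one-step bound (\ref{4.46})
\begin{equation*}
R^H(\psi_n(\pi^{H_n}_n),\phi_n(\sigma^{H_n}_n))\le R^{H_n}_n(x)+O\bigl(n^{-1/4}(\ln n)^{3/4}\bigr)
\end{equation*}
without ever passing through $R^{H_n}(\psi_n(\pi^{H_n}_n),\cdot)$.

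For your ``main obstacle'' step, $\limsup_n R^{I_n}_n(x)\le R^I(x)$, you propose to discretise an $\ve$-optimal BS hedge from scratch. This works, but it re-derives a large chunk of the proof of Theorem~\ref{thm2.2}. The paper simply observes that for any fixed $\ve>0$, $I_n\subset J_\ve := (Le^{-\ve},Re^\ve)$ eventually, hence $R^{I_n}_n(x)\le R^{J_\ve}_n(x)$ by CRR-side barrier monotonicity; then $\limsup_n R^{J_\ve}_n(x)=R^{J_\ve}(x)$ by Theorem~\ref{thm2.2} applied to the \emph{fixed} interval $J_\ve$; and finally $R^{J_\ve}(x)\downarrow R^I(x)$ as $\ve\downarrow 0$ by Lemma~\ref{lem3.3}. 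That detour through $J_\ve$ is both shorter and avoids introducing a second, independent BS-to-CRR projection.

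In summary: the argument needs to be reorganised so that the barrier gap $n^{-1/3}$ straddles the CRR/BS divide rather than being confined to the BS side, matching the statement of (\ref{4.44+}); and the final step should be obtained from Theorem~\ref{thm2.2} plus Lemma~\ref{lem3.3} rather than re-argued.
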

In Section \ref{sec5} we formulate and prove corresponding results
for knock--in Israely style barrier options.

\section{Auxiliary lemmas}\label{sec3}\setcounter{equation}{0}

First we introduce the machinery which enables us to reduce
optimization of the shortfall risk to optimal stopping problems for
Dynkin's games with appropriately chosen payoff processes so that on
the next stage we will be able to employ the Skorohod embedding in
order to compare values of the corresponding discrete and continuous
time games. This machinery was used in \cite{DK2} for similar
purposes in the case of regular game options. For any $n$ set
$a^{(n)}_1=e^{\kappa\sqrt{\frac{T}{n}}}-1,\,$
$a^{(n)}_2=e^{-\kappa\sqrt{\frac{T}{n}}}-1$ and observe that for any
$m\leq{n}$ the random variable
\[
\frac{\tilde{S}^{(n)}_{\frac{mT}{n}}}{\tilde{S}^{(n)}_{(m-1)T/n}}-1=
\exp(\kappa(\frac{T}{n})^{1/2}\xi_m)-1
\]
takes on only the values $a^{(n)}_1,a^{(n)}_2$. For each $y>0$ and
$n\in\bbN$ introduce the closed interval $K_n(y)=\big [-\frac
y{a_1^{(n)}}, -\frac y{a_2^{(n)}}\big ]$ and for $0\leq{k}<n$ and a
given positive $\mathcal{F}^{\xi}_k$-measurable random variable $X$
define
\begin{eqnarray}\label{3.1}
&\mathcal{A}^{\xi,n}_k(X)=\{Y|Y=X+\alpha(\exp(\kappa(\frac{T}{n})^{1/2}
\xi_{k+1})-1)\,\,\mbox{for some}\\
&\mathcal{F}^{\xi}_k-\mbox{measurable}\,\,\alpha\in K_n(X)\}.
\nonumber
\end{eqnarray}
Notice that if for $\pi={\{(\beta_k,\gamma_k)\}}_{k=1}^n$,
$\tilde{V}^\pi_k=X$ and $\tilde{V}^\pi_{k+1}=Y$ then by
(\ref{2.16}),
$Y=X+\alpha(\exp(\kappa(\frac{T}{n})^{1/2}\xi_{k+1})-1)$ where
$\alpha=\gamma_{k+1}\tilde{S}^{(n)}_{\frac{kT}{n}}$ is
$\mathcal{F}^{\xi}_k$-measurable. Since we allow only nonnegative
portfolio values, and so $Y\geq{0}$ which must be satisfied for all
possible values of $\exp(\kappa(\frac{T}{n})^{1/2}\xi_{k+1})-1$ we
conclude in view of independency of $\alpha$ and $\xi_{k+1}$ that
$\mathcal{A}^{\xi,n}_k(X)$ is the set of all possible discounted
portfolio values at the time $k+1$ provided that the discounted
portfolio value at the time $k$ is $X$.

Let $H$ be an open interval. For any $\pi\in\mathcal{A}^{\xi,n}$
define a sequence of random variables ${\{W^{H,\pi}_k\}}_{k=0}^n$
\begin{eqnarray}\label{3.2}
&W^{H,\pi}_n=(\tilde{Y}^{H,n}_n-\tilde{V}^\pi_n)^+, \ \ \mbox{and} \
\
W^{H,\pi}_k=\min\bigg((\tilde{X}^{H,n}_k-\tilde{V}^{\pi}_k)^{+},\\
&\max\bigg((\tilde{Y}^{H,n}_k-\tilde{V}^{\pi}_k)^{+},E^{\xi}_n(W^{H,\pi}_{k+1}
|\mathcal{F}^{\xi}_k)\bigg)\bigg) \ \ k<n. \nonumber
\end{eqnarray}
Applying the results for Dynkin's games from \cite {YO} for the
processes
\[
{\{(\tilde{X}^{H,n}_k-\tilde{V}^{\pi}_k)^{+}\}}_{k=0}^n,
{\{(\tilde{Y}^{H,n}_k-\tilde{V}^{\pi}_k)^{+}\}}_{k=0}^n
\]
we obtain
\begin{equation}\label{3.3}
W^{H,\pi}_0=\min_{\sigma\in{\mathcal{T}^\xi_{0n}}}
\max_{\tau\in{\mathcal{T}^\xi_{0n}}}E^{\xi}_n(Q
^{H,n}(\sigma,\tau)-\tilde{V}^{\pi}_{\sigma\wedge\tau})^+=R^H_n(\pi)=
R^H_n(\pi,\sigma(H,\pi))
\end{equation}
where
\begin{equation}\label{3.4}
\sigma(H,\pi)=\min{\{k|(\tilde{X}^{H,n}_k-\tilde{V}^{\pi}_k)^{+}=
W^{H,\pi}_k\}}\wedge{n}.
\end{equation}

On the Brownian probability space set
\begin{equation}\label{3.5}
S^{B,n}_t=S_0, \ \ t\in [0,T/n] \ \ \mbox{and} \ \
S^{B,n}_t=S_0\exp(\sum_{k=1}^{[nt/T]}(\frac{rT}{n}+
\kappa\bg^{(n)}_k)), \ \  t\in{[T/n,T]}.
\end{equation}
Define
\begin{equation}\label{3.6}
\tau^{B,n}_H=\min\{k\geq{0}|S^{B,n}_{\frac{kT}{n}}\notin{H}\}.
\end{equation}
Clearly $\tau^{B,n}_H$ is a stopping time with respect to the
filtration $\mathcal{G}^{B,n}_k$, $k\geq{0}$. Consider the new
payoffs
$Y^{B,H,n}_k=F_{\frac{kT}{n}}(S^{B,n})\mathbb{I}_{k<\tau^{B,n}_H}$
and
$X^{B,H,n}_k=G_{\frac{kT}{n}}(S^{B,n})\mathbb{I}_{k<\tau^{B,n}_H}$,
$k\leq{n}$. The corresponding payoff function is given by
\begin{equation}\label{3.7}
Q^{B,H,n}(k,l)=\tilde{X}^{B,H,n}_k\mathbb{I}_{k<l}+
\tilde{Y}^{B,H,n}_l\mathbb{I}_{l\leq{k}}, \ \ k,l\leq{n}
\end{equation}
where $\tilde{Y}^{B,H,n}_k=(1+r_n)^{-k}Y^{B,H,n}_k$ and
$\tilde{X}^{B,H,n}_k=(1+r_n)^{-k}X^{B,H,n}_k$ are the discounted
payoffs. For any $n$ we consider now hedges which are elements in
$\mathcal{A}^{B,n}\times\mathcal{T}^{B,n}_{0,n}$. Given a positive
$\mathcal{F}^{B}_{\theta^{(n)}_k}$--measurable random variable $X$
define $\mathcal{A}^{B,n}_k(X)$ by (\ref{3.1}) with
$\sqrt{\frac{T}{n}}\xi_{k+1}$ and $\cF^\xi_k$ replaced by
$\bg^{(n)}_{k+1}$ and $\cF^B_{\te_k^{(n)}}$, respectively. By
(\ref{2.21}) we conclude similarly to the above that
$\mathcal{A}^{B,n}_k(X)$ consists of all possible discounted values
at the time $\theta^{(n)}_{k+1}$ of portfolios managed only at
embedding times $\{\te^{(n)}_i\}$ with the discounted stock
evolution $\tilde S_t^B$, provided the discounted portfolio value at
the time $\theta^{(n)}_k$ is $X$.

Next, define the shortfall risk by
\begin{eqnarray}\label{3.8}
&R^{B,H}_n(\pi,\zeta)=\sup_{\eta\in\mathcal{T}^{B,n}_{0n}}E^B(Q^{B,H,n}
(\zeta,\eta)-\tilde{V}^{\pi}_{\theta^{(n)}_{\zeta\wedge\eta}}) ^+,\\
&R^{B,H}_n(\pi)=\inf_{\zeta\in\mathcal{T}^{B,n}_{0n}}R^{B,H}_n(\pi,\zeta)
\ \ \mbox{and} \ \
R^{B,H}_n(x)=\inf_{\pi\in\mathcal{A}^{B,n}(x)}R^{B,H}_n(\pi).
\nonumber
\end{eqnarray}
For any $\pi\in{\mathcal{A}^{B,n}}$ define a sequence of random
variables ${\{U^{H,\pi}_k\}}_{k=0}^n$,
\begin{eqnarray}\label{3.9}
&U^{H,\pi}_n=(\tilde{Y}^{B,H,n}_n-\tilde{V}^\pi_{\theta^{(n)}_n})^+
\ \mbox{and} \ U^{H,\pi}_k=\min\bigg((\tilde{X}^{B,H,n}_k-
\tilde{V}^\pi_{\theta^{(n)}_k})^+,\\
&\max\bigg((\tilde{Y}^{B,H,n}_k-\tilde{V}^{\pi}_{\theta^{(n)}_k})^+,
E^{B}(U^{H,\pi}_{k+1}|\mathcal{F}^{B}_{\theta^{(n)}_{k}})\bigg)\bigg),
\ \ k<n \nonumber
\end{eqnarray}
and a stopping time
\begin{equation}\label{3.10}
\zeta(H,\pi)=\min{\{k|(\tilde{X}^{B,H,n}_k-
\tilde{V}^{\pi}_{\theta^{(n)}_k})^{+}=U^{H,\pi}_{k}\}}\wedge{n}.
\end{equation}
Again, using the results on Dynkin's games from \cite {YO} for the
adapted (with respect to the filtration
$\mathcal{F}^{B}_{\theta^{(n)}_k}$, $k\geq{0}$) payoff processes
${\{(\tilde{Y}^{B,H,n}_k-
\tilde{V}^\pi_{\theta^{(n)}_k})^+\}}_{k=0}^n,\,$
${\{(\tilde{X}^{B,H,n}_k-\tilde{V}^\pi_{\theta^{(n)}_k})^+\}}_{k=0}^n$
we obtain that
\begin{eqnarray}\label{3.11}
&U^{H,\pi}_0=\inf_{\zeta\in\mathcal{T}^{B,n}_{0n}}\sup_{\eta
\in\mathcal{T}^{B,n}_{0,n}}E^B(Q^{B,H,n}(\zeta,
\eta)-\tilde{V}^{\pi}_{\theta^{(n)}_{\zeta\wedge\eta}})^+\\
&=R^{B,H}_n(\pi,\zeta(H,\pi))=R^{B,H}_n(\pi).\nonumber
\end{eqnarray}
For ${k}\leq{n}$ and $x_1,...,x_k\in{\mathbb{R}}$, consider the
function $\psi^{x_1,...,x_k}\in{M[0,\frac{kT}{n}]}$ given by
\begin{eqnarray*}
&\psi^{x_1,...,x_k}(t)=S_0\exp(\frac{rjT}{n}+\kappa\sum_{i=1}^jx_i),\,
t\in{[jt/n,(j+1)T/n)}, \ 1\leq{j}\leq{k}  \\
&\mbox{and} \ \ \psi^{x_1,...,x_k}(0)=S_0, \ \ t\in{[0,T/n)},
\end{eqnarray*}
there exist $f^n_k,g^n_k:\mathbb{R}^k\rightarrow \mathbb{R}$ such
that for any $x_1,...,x_k\in{\mathbb{R}}$,
\begin{eqnarray*}
&f^n_k(x_1,...,x_k)=(1+r_n)^{-k}F_{\frac{kT}{n}}(\psi^{x_1,...,x_k})=
e^{-rkT/n}F_{\frac{kT}{n}}(\psi^{x_1,
...,x_k}),\\
&\mbox{and} \ \
g^n_k(x_1,...,x_k)=(1+r_n)^{-k}G_{\frac{kT}{n}}(\psi^{x_1,...,x_k})=
e^{-rkT/n}G_{\frac{kT}{n}}(\psi^{x_1, ...,x_k}).
\end{eqnarray*}
Set
\begin{equation*}
q^{H,n}_k(x_1,...,x_k)=\mathbb{I}_{{[\min_{0\leq{i}\leq{k}}
\psi^{x_1,...,x_i}(\frac{iT}{n}),\max_{0\leq{i}\leq{k}}\psi^{x_1,...,x_i}
(\frac{iT}{n})] \subset{H}}}.
\end{equation*}
Observe that for the above functions,
\begin{eqnarray}\label{3.12}
&\tilde{Y}^{B,H,n}_k=f^n_k(\bg^{(n)}_1,...,\bg^{(n)}_k)
q^{H,n}_k(\bg^{(n)}_1,...,\bg^{(n)}_k),
\ \ \tilde{X}^{B,H,n}_k=g^n_k(\bg^{(n)}_1,...,\bg^{(n)}_k)\\
&\times q^{H,n}_k(\bg^{(n)}_1,...,\bg^{(n)}_k), \ \
\tilde{Y}^{(n)}_k=f^n_k(\sqrt{\frac{T}{n}}\xi_1,...,
\sqrt{\frac{T}{n}}\xi_k)q^{H,n}_k(\sqrt{\frac{T}{n}}\xi_1,...,
\sqrt{\frac{T}{n}}\xi_k)\nonumber\\
& \mbox{and} \ \
\tilde{X}^{(n)}_k=g^n_k(\sqrt{\frac{T}{n}}\xi_1,...,\sqrt{\frac{T}{n}}\xi_k)
q^{H,n}_k(\sqrt{\frac{T}{n}}\xi_1,...,\sqrt{\frac{T}{n}}\xi_k).
\nonumber
\end{eqnarray}
Finally, define a sequence ${\{J^{H,n}_k\}}_{k=0}^n$ of functions
$J^{H,n}_k:[0,\infty)\times\mathbb{R}^k\rightarrow \mathbb{R}$ by
the following backward recursion
\begin{eqnarray}\label{3.13}
&J^{H,n}_n(y,u_1,u_2...,u_n)=(f^n_n(u_1,...,u_n)q^{H,n}_n(u_1,...,u_n)-y)^+
\  \ \mbox{and}\\
&J^{H,n}_k(y,u_1,...,u_k)=\min\bigg((g^n_k(u_1,...,u_k)
q^{H,n}_k(u_1,...,u_k)-y)^+\nonumber,
\max\bigg((f^n_k(u_1,...,u_k)\nonumber\\
&\times q^{H,n}_k(u_1,...,u_k)-y)^+,\inf_{u\in K_n(y)}\big(p^{(n)}
J^{H,n}_{k+1}(y+ua^{(n)}_1,u_1,...,u_k,\sqrt{\frac{T}{n}})+\nonumber\\
&(1-p^{(n)})J^{H,n}_{k+1}(y+ua^{(n)}_2,u_1,...,u_k,-\sqrt{\frac{T}{n}})\big)
\bigg)\bigg)  \ \mbox{for}  \ k=n-1,n-2,...,0.\nonumber
\end{eqnarray}
Similarly to \cite{DK2} this dynamical programming relations will
enable us to compute shortfall risks defined in (\ref{2.17}) and
(\ref{3.8}).
\begin{lem}\label{lem3.1}
The function $J^{H,n}_k(y,u_1,...,u_k)$ is continuous and decreasing
with respect to $y$ for any $n$, ${k}\leq{n}$ and an open interval
$H$.
\end{lem}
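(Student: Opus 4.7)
The plan is to prove the lemma by backward induction on $k$, from $k=n$ down to $k=0$, carrying both statements (continuous and nonincreasing in $y$, for every fixed $u_1,\dots,u_k$) as a joint hypothesis. The base case $k=n$ is immediate since $J^{H,n}_n(y,u_1,\dots,u_n)=(c-y)^+$ for the constant $c=f^n_n(u_1,\dots,u_n)q^{H,n}_n(u_1,\dots,u_n)\geq 0$, and the function $y\mapsto(c-y)^+$ is obviously continuous and nonincreasing on $[0,\infty)$.

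For the inductive step, suppose $J^{H,n}_{k+1}(\cdot,u_1,\dots,u_{k+1})$ is continuous and nonincreasing on $[0,\infty)$ for every fixed $u_1,\dots,u_{k+1}$. I write the right-hand side of (\ref{3.13}) as $\min\bigl(A(y),\max(B(y),C(y))\bigr)$, where $A(y)=(g^n_k q^{H,n}_k-y)^+$, $B(y)=(f^n_k q^{H,n}_k-y)^+$ and
\[
C(y)=\inf_{u\in K_n(y)}\Phi(u,y),\quad \Phi(u,y)=p^{(n)}J^{H,n}_{k+1}\bigl(y+ua^{(n)}_1,u_1,\dots,u_k,\sqrt{T/n}\bigr)+(1-p^{(n)})J^{H,n}_{k+1}\bigl(y+ua^{(n)}_2,u_1,\dots,u_k,-\sqrt{T/n}\bigr).
\]
The functions $A,B$ are clearly continuous and nonincreasing in $y$, and the operations $\min$ and $\max$ preserve both properties, so everything reduces to proving that $C$ is continuous and nonincreasing.

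For monotonicity, recall $K_n(y)=[-y/a^{(n)}_1,-y/a^{(n)}_2]$, with $a^{(n)}_1>0>a^{(n)}_2$; thus for $0\leq y_1\leq y_2$ we have $K_n(y_1)\subseteq K_n(y_2)$. For any fixed $u\in K_n(y_1)$, the maps $y\mapsto y+ua^{(n)}_j$ are nondecreasing, so by the induction hypothesis $\Phi(u,y_2)\leq\Phi(u,y_1)$; since $u\in K_n(y_2)$ we get $C(y_2)\leq\Phi(u,y_2)\leq\Phi(u,y_1)$, and taking the infimum over $u\in K_n(y_1)$ yields $C(y_2)\leq C(y_1)$.

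For continuity, I first note that $\Phi$ is jointly continuous on $\mathbb{R}\times[0,\infty)$: each summand is the composition of the continuous map $(u,y)\mapsto y+ua^{(n)}_j$ with the continuous function $J^{H,n}_{k+1}(\cdot,u_1,\dots,u_k,\pm\sqrt{T/n})$ supplied by the induction hypothesis. The endpoints $-y/a^{(n)}_1$ and $-y/a^{(n)}_2$ of $K_n(y)$ are linear, hence continuous, in $y$, so $K_n(\cdot)$ is a continuous compact-valued correspondence. By Berge's maximum theorem, $C(y)=\min_{u\in K_n(y)}\Phi(u,y)$ is continuous. (If one prefers an elementary argument, take $y_m\to y$: pick minimizers $u^*_m\in K_n(y_m)$, pass to a convergent subsequence $u^*_m\to u^*\in K_n(y)$ using continuity of the endpoints, and conclude $\liminf C(y_m)\geq\Phi(u^*,y)\geq C(y)$; for the reverse inequality, approximate a minimizer $u^\dagger\in K_n(y)$ by $u_m\in K_n(y_m)$ with $u_m\to u^\dagger$ and use continuity of $\Phi$ to obtain $\limsup C(y_m)\leq\Phi(u^\dagger,y)=C(y)$.)

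The only real obstacle is that the feasible set $K_n(y)$ depends on $y$, so one cannot just interchange the infimum with limits and monotonicity comparisons. This is handled precisely by the nesting $K_n(y_1)\subseteq K_n(y_2)$ (for monotonicity) and the continuity of its endpoints (for continuity via Berge); all other steps amount to routine stability of continuity and monotonicity under $\min$, $\max$, and composition with nondecreasing affine maps.
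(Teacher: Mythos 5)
Your proof is correct, and it fills in exactly the backward-induction argument (reduction to continuity and monotonicity of the constrained infimum $C(y)=\inf_{u\in K_n(y)}\Phi(u,y)$, using the nesting $K_n(y_1)\subseteq K_n(y_2)$ and the continuity of the interval endpoints) that the paper itself delegates to Lemma~3.2 of \cite{DK2}, whose recursion has the same structure. The one minor point worth flagging is that $\Phi(u,y)$ is only defined on the closed set $\{(u,y):y\ge 0,\ u\in K_n(y)\}$ (so that $y+ua^{(n)}_j\ge 0$); your elementary limiting argument works verbatim on that domain and is perhaps cleaner than invoking Berge's theorem, which is usually stated for correspondences into a fixed ambient space.
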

\begin{proof}
The proof is the same as the proof of Lemma 3.2 in \cite{DK2}, just
replace $J^{H,n}_k$ by $J^{n}_k$.
\end{proof}\

For a given closed interval $K=[a,b]$ and a function
$f:K\times\mathbb{R}^k\to\bbR$ such that $f(\cdot,v)$ is continuous
for all $v\in\mathbb{R}^k$ define
$argmin_{{a}\leq{u}\leq{b}}f(u,v)=\min\{w\in{K}|f(w,v)=\min_{\beta\in{K}}
f(\beta,v)\}$. Lemma \ref{lem3.1} enables us to define the following
functions
\begin{eqnarray}\label{3.14}
&h^{H,n}_k(y,x_1,...,x_k)=argmin_{u\in K_n(y)}
\big(p^{(n)}J^{H,n}_{k+1}(y+ua^{(n)}_1,\\
&u_1, ...,u_k,\sqrt{\frac{T}{n}})+
(1-p^{(n)})J^{H,n}_{k+1}(y+ua^{(n)}_2,u_1,...,u_k,-\sqrt{\frac{T}{n}})\big),
\ \ k<n. \nonumber
\end{eqnarray}
Let $x$ be an initial capital. For any $n$ and an open interval $H$
there exists a hedge
$(\pi^{H}_n,\sigma^{H}_n)\in{\mathcal{A}^{\xi,n}(x)\times
\mathcal{T}^{\xi}_{0n}}$ such that
\begin{eqnarray}\label{3.15}
 &\tilde{V}^{\pi^{H}_n}_0=x \ \ \mbox{and} \ \  \tilde{V}^{\pi^{H}_n}_{k+1}=
 \tilde{V}^{\pi^{H}_n}_k+h^{H,n}_k(\tilde{V}^{\pi^{H}_n}_k,e^{\kappa\sqrt
 \frac{T}{n}
 \xi_1},...,e^{\kappa\sqrt\frac{T}{n}\xi_k})\\
 &\times(e^{\kappa\sqrt\frac{T}{n}\xi_{k+1}}-1) \ \ \mbox{for} \ \
 k>0 \ \
\mbox{and} \ \ \sigma^{H}_n=\sigma(H,\pi^{H}_n). \nonumber
\end{eqnarray}
From the arguments concerning $\mathcal{A}^{\xi,n}_k(X)$ at the
beginning of this section it follows that $\pi^{H}_n$ is an
$\textit{admissible}$ strategy. Let $(\pi^{B,H}_n,\zeta^{H}_n)
\in\mathcal{A}^{B,n}(x)\times\mathcal{T}^{B,n}_{0,n}$ be a hedge
which is given by $\pi^{B,H}_n=\psi_n(\pi^{H}_n)$ and
$\zeta^{H}_n=\Pi_n(\sigma^{H}_n)$ where, recall, the maps
$\psi_n,\Pi_n$ were defined in Section 2. Namely, we consider a
hedge which is determined by
\begin{eqnarray}\label{3.16}
 &\tilde{V}^{\pi^{B,H}_n}_0=x \ \ \mbox{and} \ \ \tilde{V}^{\pi^{B,H}_n}_{k+1}
 = \tilde{V}^{\pi^{B,H}_n}_k+h^{H,n}_k(\tilde{V}^{\pi^{B,H}_n}_k,
 e^{\bg^{(n)}_1},...,e^{\bg^{(n)}_k})\\
 &\times(e^{\bg^{(n)}_k+1}-1) \ \ \mbox{for} \ \ k>0 \ \
\mbox{and} \ \ \zeta^{H}_n=\zeta(H,\pi^{B,H}_n). \nonumber
\end{eqnarray}
The following lemma enables us to consider all relevant processes on
the Brownian probability space and to deal with stopping times with
respect to the same filtration.
\begin{lem}\label{lem3.2}
For any initial capital $x$, $n\in\mathbb{N}$ and an open interval
$H$.
\begin{equation}\label{3.17}
R^H_n(x)=R^H_n(\pi^{H}_n,\sigma^{H}_n)=J^{H,n}_0(x)=
R^{B,H}_n(\pi^{B,H}_n,\zeta^{H}_n)=R^{B,H}_n(x).
\end{equation}
\end{lem}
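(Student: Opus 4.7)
The plan is to imitate Lemma 3.3 of \cite{DK2} and exploit the distributional identity between the i.i.d. $\pm 1$ variables $\xi_k$ under $P^\xi_n$ and the rescaled Brownian increments $\sqrt{n/T}\,\bg^{(n)}_k$ under $P^B$, which was established right after (\ref{2.20}). The key tool is a backward induction that identifies the Dynkin value processes $W^{H,\pi}_k$ from (\ref{3.2}) and $U^{H,\pi}_k$ from (\ref{3.9}) with the dynamical programming function $J^{H,n}_k$ evaluated at the current discounted portfolio value and the past stock increments.

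First I would treat the CRR side. For an arbitrary $\pi\in\mathcal{A}^{\xi,n}(x)$, set $\Xi_k=(\sqrt{T/n}\xi_1,\dots,\sqrt{T/n}\xi_k)$. By backward induction on $k$ starting from $k=n$, using (\ref{3.12}) and (\ref{2.16}), I would show that
\[
W^{H,\pi}_k\;\geq\;J^{H,n}_k\bigl(\tilde V^{\pi}_k,\Xi_k\bigr)\qquad P^\xi_n\text{-a.s.},
\]
because the conditional expectation $E^\xi_n(W^{H,\pi}_{k+1}\mid\mathcal{F}^\xi_k)$ equals the convex combination $p^{(n)}J^{H,n}_{k+1}(\tilde V^\pi_k+\alpha_k a^{(n)}_1,\Xi_k,\sqrt{T/n})+(1-p^{(n)})J^{H,n}_{k+1}(\tilde V^\pi_k+\alpha_k a^{(n)}_2,\Xi_k,-\sqrt{T/n})$ with $\alpha_k=\gamma_{k+1}\tilde S^{(n)}_{kT/n}\in K_n(\tilde V^\pi_k)$ (admissibility), which is $\geq$ the $\inf_{u\in K_n(\cdot)}$ in (\ref{3.13}). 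For the specific strategy $\pi^H_n$ defined via the selector $h^{H,n}_k$ in (\ref{3.14})--(\ref{3.15}) this infimum is attained by construction, so the inequality becomes equality for $\pi^H_n$. Combined with (\ref{3.3})--(\ref{3.4}) at $k=0$ this yields $R^H_n(\pi^H_n,\sigma^H_n)=J^{H,n}_0(x)\leq R^H_n(\pi)$ for every admissible $\pi$, hence $R^H_n(x)=R^H_n(\pi^H_n,\sigma^H_n)=J^{H,n}_0(x)$.

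Next I would run the same induction on the Brownian side. Since $\bg^{(n)}_{k+1}\in\{\pm\sqrt{T/n}\}$ with probabilities $p^{(n)},1-p^{(n)}$ conditionally on $\mathcal{F}^B_{\theta^{(n)}_k}$ (by the Skorohod embedding computation recalled before (\ref{2.20})), and since $\tilde S^{B,n}_{kT/n}=\tilde S^B_{\theta^{(n)}_k}$ by (\ref{3.5}), the relation (\ref{2.21}) gives the Brownian analogue
\[
U^{H,\pi}_k\;\geq\;J^{H,n}_k\bigl(\tilde V^{\pi}_{\theta^{(n)}_k},\bg^{(n)}_1,\dots,\bg^{(n)}_k\bigr)
\]
for every $\pi\in\mathcal{A}^{B,n}(x)$, with equality for $\pi^{B,H}_n=\psi_n(\pi^H_n)$ by (\ref{3.16}). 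Applying (\ref{3.11}) at $k=0$ gives $R^{B,H}_n(\pi^{B,H}_n,\zeta^H_n)=J^{H,n}_0(x)=R^{B,H}_n(x)$, completing all the required identities.

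The only subtlety, and the step I would handle most carefully, is checking that the knock-out indicator $q^{H,n}_k$ and the time-shift of the $u_i$-coordinates in (\ref{3.13}) line up with $\tilde X^{H,n}_k,\tilde Y^{H,n}_k$ on the CRR side and with $\tilde X^{B,H,n}_k,\tilde Y^{B,H,n}_k$ on the Brownian side; this is precisely what (\ref{3.12}) encodes, and it is the only place where the barrier-specific ingredient (as opposed to the regular-payoff argument in \cite{DK2}) enters, but it enters in a purely algebraic way, without disturbing the induction.
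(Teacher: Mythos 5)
Your proposal is correct and reconstructs, in full, the standard dynamic-programming argument that the paper simply delegates to Lemma 3.3 of \cite{DK2}: a backward induction identifying the Dynkin value processes $W^{H,\pi}_k$ and $U^{H,\pi}_k$ with $J^{H,n}_k$ evaluated at the current discounted portfolio value, with the inequality for arbitrary admissible $\pi$ coming from the $\inf$ in (\ref{3.13}) over $K_n(\tilde V^\pi_k)$ (admissibility being exactly $\alpha_k\in K_n(\tilde V^\pi_k)$) and equality for the selector strategies $\pi^H_n$, $\pi^{B,H}_n$ built from $h^{H,n}_k$. You also correctly isolate the only barrier-specific ingredient, namely that the factorization (\ref{3.12}) of the payoffs through $q^{H,n}_k$ is purely algebraic and commutes with the induction, which is precisely why the paper can reuse \cite{DK2} verbatim after a notational substitution.
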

\begin{proof}
The proof is the same as in Lemma 3.3 of \cite{DK2}, just replace
$J^n_k$, $R_n$, $R^{B,n}$, $(\pi_n,\sigma_n)$ and
$(\tilde\pi_n,\zeta_n)$ by $J^{H,n}_k$, $R^H_n$, $R^{B,H}_n$,
$(\pi^{H}_n,\sigma^{H}_n)$ and $(\pi^{B,H}_n,\zeta^{H}_n)$,
respectively.
\end{proof}

Observe that if the initial capital $x$ is no less than $\cV^H_n$
then the hedge which is given by (\ref{3.15}) satisfy
$R^H_n(\pi^{H}_n,\sigma^{H}_n)=R^H_n(x)=0$. Namely,
$(\pi^{H}_n,\sigma^{H}_n)$ is a perfect hedge for a game option with
the payoffs $Y^{H,n}_k,X^{H,n}_k$, $k\geq{0}$. Thus, the dynamical
algorithm which is given by (\ref{3.13}) provides a way to find a
perfect hedge (when the initial capital is no less than the option
price) for CRR markets. Of course, in general a perfect hedge should
not be unique taking different versions of the term $argmin$ which
was defined before (\ref{3.14}) we will obtain other perfect hedges.
However, a more efficient way to find a perfect hedge is via the
Doob decomposition exactly as in Theorem 2.1 of \cite{Ki1}.

Next we deal with estimates for the BS model. Let $H=(L,R)$ be an
open interval. For any $\epsilon>0$ set
$H_\epsilon=(Le^{-\epsilon},Re^{\epsilon})$. Clearly,
 $\cV^{H_\epsilon}\geq \cV^H$ for any $\epsilon>0$ and
  $R^{H_\epsilon}(x)\geq R^H(x)$ for any initial capital $x$.
The following result provides an estimate from above of the term
$R^{H_\epsilon}(x)-R^H(x)$.
\begin{lem}\label{lem3.3}
There exists a constant $A_1$ such that for any initial capital $x$,
$\epsilon>0$ and an open interval $H$,
\begin{equation}\label{3.18}
R^{H_\epsilon}(x)-R^H(x)\leq A_1\epsilon^{3/4}.
\end{equation}
\end{lem}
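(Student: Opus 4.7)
The plan is to take a near-optimal hedge $(\pi_0,\sigma_0)\in\cA^B(x)\times\cT^B_{0T}$ for $R^H(x)$ with $R^H(\pi_0,\sigma_0)\le R^H(x)+\eta$, and compare it with a slightly modified hedge for the $H_\epsilon$-option. The initial simplification is the identity $R^H(\pi,\sigma)=R^H(\pi,\sigma\wedge\tau_H)$, which holds because both $\tilde X^H_t,\tilde Y^H_t$ carry the knock-out factor $\mathbb{I}_{t<\tau_H}$, so that $Q^H(\sigma,\tau)=0$ on $\{\sigma\wedge\tau\ge\tau_H\}$; thus WLOG $\sigma_0\le\tau_H$. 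Since $Q^{H_\epsilon}\ge Q^H$ and the two payoffs coincide outside the ``annular'' event $\{\tau_H\le\sigma\wedge\tau<\tau_{H_\epsilon}\}$, applying $(a)^+-(b)^+\le a-b$ (valid here because $a\ge b\ge 0$) yields
\begin{equation*}
R^{H_\epsilon}(\pi_0,\sigma_0)-R^H(\pi_0,\sigma_0)\le\sup_\tau E^B\bigl[Q^{H_\epsilon}(\sigma_0,\tau)\,\mathbb{I}_{\tau_H\le\sigma_0\wedge\tau<\tau_{H_\epsilon}}\bigr].
\end{equation*}

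The difficulty is that with $\sigma_0\le\tau_H$ this event forces $\sigma_0=\tau_H$, whose probability need not be small in $\epsilon$. To overcome this I introduce the shrunken interval $\tilde H_\delta=(Le^\delta,Re^{-\delta})$ (with $\delta>0$ to be optimized) and the modified cancellation time $\sigma_\delta:=\sigma_0\wedge\tau_{\tilde H_\delta}$. Since $\tilde H_\delta\subsetneq H$, we have $\sigma_\delta<\tau_H$ a.s., so the annular event vanishes and $R^{H_\epsilon}(\pi_0,\sigma_\delta)=R^H(\pi_0,\sigma_\delta)$; consequently $R^{H_\epsilon}(x)\le R^H(\pi_0,\sigma_\delta)$. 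The remaining task is to control the modification cost $R^H(\pi_0,\sigma_\delta)-R^H(\pi_0,\sigma_0)$, whose integrand is supported on $\{\sigma_0>\tau_{\tilde H_\delta}\}$ and bounded pointwise by $C(1+\sup_{t\le T}S^B_t)$ via (2.3). H\"older's inequality with exponents $4$ and $4/3$ then gives
\begin{equation*}
R^H(\pi_0,\sigma_\delta)-R^H(\pi_0,\sigma_0)\le C\,\bigl\|1+\sup_{t\le T}S^B_t\bigr\|_{L^4(P^B)}\bigl(P^B(\tau_{\tilde H_\delta}<\sigma_0\le\tau_H)\bigr)^{3/4},
\end{equation*}
with a finite $L^4$-norm since $S^B$ is a geometric Brownian motion.

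The main obstacle is the probabilistic estimate $P^B(\tau_{\tilde H_\delta}<\sigma_0\le\tau_H)\le C\delta$, expressing the fact that the near-optimal cancellation time $\sigma_0$ can fall into the thin log-annulus $H\setminus\tilde H_\delta$ only with probability proportional to its width. I expect this to follow from first-passage and local-time estimates for $S^B$ near the boundary $\partial H$, together with the special structure of the optimal Dynkin-game stopping time $\sigma_0$ (which, as the solution of a variational inequality, interacts with the stock path in a controlled way). Once established, choosing $\delta=\epsilon$ and letting $\eta\downarrow 0$ yields $R^{H_\epsilon}(x)-R^H(x)\le A_1\epsilon^{3/4}$, completing the proof.
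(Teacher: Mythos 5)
Your first reductions are sound: the identity $R^H(\pi,\sigma)=R^H(\pi,\sigma\wedge\tau_H)$, the inequality $(a)^+-(b)^+\le a-b$ for $a\ge b$, and the observation that replacing the cancellation time by anything strictly before $\tau_H$ makes $Q^{B,H_\epsilon}$ and $Q^{B,H}$ coincide. The proof breaks, however, at the probabilistic estimate you flag as the ``main obstacle,'' and it is not merely unestablished --- it is false. The modification event $\{\sigma_0>\tau_{\tilde H_\delta}\}$ (equivalently $\{\tau_{\tilde H_\delta}<\sigma_0\le\tau_H\}$) is the event that the stock exits the slightly shrunken interval \emph{at some time before $\sigma_0$}, which has nothing to do with where the stock sits at time $\sigma_0$. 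Take, for instance, a game option with a large constant penalty $\Delta$, for which cancellation is never profitable and a (near-)optimal seller's time is $\sigma_0=T\wedge\tau_H$. Then $\{\tau_{\tilde H_\delta}<\sigma_0\}=\{\tau_{\tilde H_\delta}<T\}$, and $P^B(\tau_{\tilde H_\delta}<T)$ stays bounded away from $0$ as $\delta\downarrow 0$. So $P^B(\tau_{\tilde H_\delta}<\sigma_0\le\tau_H)\not\le C\delta$, and the H\"older step cannot deliver a bound that vanishes with $\delta$. The structural reason this cannot be repaired within your scheme is that on the modification event you replace $\sigma_0$ by $\tau_{\tilde H_\delta}$, and the gap $\sigma_0-\tau_{\tilde H_\delta}$ can be of order $T$; you have traded an annular event in \emph{space} for an uncontrolled shift in \emph{time} on a non-negligible event.

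The paper's proof sidesteps this by modifying the \emph{buyer's} stopping time, not the seller's. Keeping $\sigma_1\le\tau_H$ essentially fixed (after freezing the portfolio at $\sigma_1$ and redirecting $\{\sigma_1=\tau_H\}$ to $T$), it takes $\tau$ near-optimal for the $H_\epsilon$-problem, so $\tau\le\tau_{H_\epsilon}$, and compares $\tau$ with $\tau\wedge\tau_H$. The crucial point is that $\tau-\tau\wedge\tau_H\le T\wedge(\tau_{H_\epsilon}-\tau_H)$, a \emph{time} increment, not a probability. Once the stock hits $\partial H$, it exits $H_\epsilon$ very quickly because the log-annulus has width $\epsilon$: via the strong Markov property and the first-passage density, $E_W\big(T\wedge(\tau_{H_\epsilon}-\tau_H)\big)^{2/3}\le C\epsilon$, which after (2.2), the H\"older inequality and Burkholder--Davis--Gundy gives the $\epsilon^{3/4}$ rate. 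In short, the thinness of the annulus controls how \emph{long} it takes to traverse it, not how often a generic near-optimal $\sigma$ lands in it, and the proof has to be organized so that only the former quantity appears.
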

\begin{proof} Before proving the lemma observe that if $P=\tilde{P}$
then the option price can be represented as the shortfall risk for
an initial capital $x=0$, i.e. if $\mu=0$ then $\cV^I=R^I(0)$ for
any open interval $I$. Hence, by (\ref{3.18}) there exists a
constant $A_2$ (which is equal to $A_1$ for the case $\mu=0$) such
that for any open interval $H$ and $\epsilon>0,$
\begin{equation}\label{3.19}
\cV^{H_\epsilon}-\cV^H(x)\leq A_2\epsilon^{3/4}.
\end{equation}
Next we turn to the proof of the lemma. Choose an initial capital
$x$, an open interval $H=(L,R)$, some $\epsilon>0$ and fix
$\delta>0$. There exists a $\pi_1\in\mathcal{A}^{B}(x)$ such that
$R^H(\pi_1)<R^H(x)+\delta$. According to (\ref{2.11}) the discounted
portfolio process ${\{\tilde{V}^{\pi_1}_t\}}_{t=0}^T$ is given by a
stochastic integral whose integrand in view of (\ref{2.9}) satisfies
the standard conditions assumed in the construction of stochastic
integrals, and so ${\{\tilde{V}^{\pi_1}_t\}}_{t=0}^T$ has a
continuous modification (see, for instance, Ch.2 in \cite{Mc} or
Ch.4 in \cite{LS}) which we take as the portfolio process. Observe
that $(Q^{B,H}(\sigma,\tau)-\tilde{V}^{\pi_1}_{\sigma\wedge\tau})^+=
(Q^{B,H}(\tau_H\wedge\sigma,\tau)-\tilde{V}^{\pi_1}_{\tau_H\wedge\sigma
\wedge\tau})^+$ for all stopping times
$\sigma,\tau\in\mathcal{T}^{B}_{0T}$. Thus, there exists a hedge
$(\pi_1,\sigma_1)\in\mathcal{A}^{B}(x)\times\mathcal{T}^{B}_{0T}$
such that
\begin{equation}\label{3.20}
R^H(\pi_1,\sigma_1)<R^H(x)+\delta \ \ \mbox{and} \ \
\sigma_1\leq{\tau_H}.
\end{equation}
Set
$\sigma_2=\sigma_1\mathbb{I}_{\sigma_1<\tau_H}+T\mathbb{I}_{\sigma_1
\geq\tau_H}$. Clearly,
$\{\sigma_2\leq{t}\}=\{\sigma_1\leq{t}\}\cap{\{\sigma_1<\tau_H\}}
\in\mathcal{F}^{B}_t$ for any $t<T$, and so we conclude that
$\sigma_2\in\mathcal{T}^{B}_{0T}$. Observe that if $\pi_1=
\{(\be_t,\gam_t)\}_{t=0}^T$ and
$\pi_2=\{(\tilde\be_t,\tilde\gam_t)\}_{t=0}^T$ with
$\tilde\gam_t=\gam_t\bbI_{\sig_1\leq t}$ and
$\tilde\be_t=(x+\int_0^t \tilde\gam_ud\tilde
S_u^B-\tilde\gam_t\tilde S_t^B)/b_0$ then $\pi_2$ is an admissible
self financing strategy and  $\tilde{V}^{\pi_2}_t=
\tilde{V}^{\pi_1}_{t\wedge\sigma_1}$. Consider the hedge
$(\pi_2,\sigma_2)\in\mathcal{A}^{B}(x)\times\mathcal{T}^{B}_{0T}$
then
$(Q^{B,H_\epsilon}(\sigma_2,\tilde\tau)-\tilde{V}^{\pi_2}_{\sigma_2\wedge
\tilde\tau})^+=
(Q^{B,H_\epsilon}(\sigma_2,\tilde\tau\wedge\tau_{H_\epsilon})-
\tilde{V}^{\pi_2}_{\tau_{H_\epsilon}\wedge\sigma_2\wedge\tilde\tau})^+$
for any $\tilde\tau\in\mathcal{T}^{B}_{0T}$. Thus, there exists a
stopping time $\tau\in\mathcal{T}^{B}_{0T}$ such that
\begin{equation}\label{3.21}
R^{H_\epsilon}(\pi_2,\sigma_2)<E^B[Q^{B,H_\epsilon}(\sigma_2,\tau)-
\tilde{V}^{\pi_2}_{\sigma_2\wedge\tau})^+]+\delta \ \ \mbox{and} \ \
\tau\leq \tau_{H_\epsilon}.
\end{equation}
For any $\alpha>0$ denote $J_\alpha=(Le^\alpha,Re^{-\alpha})$. Set
$U_{\alpha}=(Q^{B,H}(\sigma_1,\tau\wedge\tau_{J_{\alpha}})-
\tilde{V}^{\pi_1}_{\sigma_1\wedge\tau\wedge\tau_{J_{\alpha}}})^+$.
Clearly, $\tau\wedge\tau_{J_\alpha}\leq\tau\wedge\tau_H$ for any
$\alpha>0$ and $\tau\wedge\tau_{J_\alpha}
\uparrow\tau\wedge\tau_{H}$ as $\alpha\rightarrow{0}$. This together
with (\ref{2.2}) yields that
\begin{eqnarray*}
&\lim_{\alpha\rightarrow{0}}Q^{B,H}(\sigma_1,\tau\wedge\tau_{J_{\alpha}})=
\tilde{X}^{H}_{\sigma_1}\lim_{\alpha\rightarrow{0}}
\mathbb{I}_{\sigma_1<\tau\wedge\tau_{J_{\alpha}}}+
\lim_{\alpha\rightarrow{0}}\tilde{Y}^H_{\tau\wedge\tau_{J_{\alpha}}}
\mathbb{I}_{\sigma_1\geq\tau\wedge\tau_{J_{\alpha}}}\\
&=e^{-r\sigma_1}G_{\sigma_1}(S^B)\mathbb{I}_{\sigma_1<\tau\wedge\tau_H}+
\lim_{\alpha\rightarrow{0}}e^{-r(\tau\wedge\tau_{J_{\alpha}})}F_{\tau\wedge
\tau_{J_{\alpha}}}(S^B)\mathbb{I}_{\sigma_1\geq\tau\wedge\tau_{J_{\alpha}}}\\
&=e^{-r\sigma_1}G_{\sigma_1}(S^B)\mathbb{I}_{\sigma_1<\tau\wedge\tau_H}+
e^{-r(\tau\wedge\tau_H)}F_{\tau\wedge\tau_H}(S^B)\mathbb{I}_{\sigma_1\geq\tau
\wedge\tau_H}.
\end{eqnarray*}
Since the process ${\{\tilde{V}^{\pi_1}_t\}}_{t=0}^T$ is continuous
and $\sigma_1\leq{\tau_H}$ we obtain by the choice of $\pi_2$ that
\begin{eqnarray}\label{3.22}
&\lim_{\alpha\rightarrow{0}}U_{\alpha}=(e^{-r\sigma_1}
G_{\sigma_1}(S^B)\mathbb{I}_{\sigma_1<\tau\wedge\tau_H}+
e^{-r(\tau\wedge\tau_H)}F_{\tau\wedge\tau_H}(S^B)\mathbb{I}_{\sigma_1\geq\tau
\wedge\tau_H}
\\
&-\tilde{V}^{\pi_1}_{\sigma_1\wedge\tau})^+=(e^{-r\sigma_1}
G_{\sigma_1}(S^B)\mathbb{I}_{\sigma_1<\tau\wedge\tau_H}+
e^{-r(\tau\wedge\tau_H)}F_{\tau\wedge\tau_H}(S^B)\mathbb{I}_{\sigma_1\geq
\tau\wedge\tau_H}- \tilde{V}^{\pi_2}_\tau)^+.\nonumber
\end{eqnarray}
Observe that $R^{H}(\pi_1,\sigma_1)\geq{E^BU_{\alpha}}$ for any
$\alpha$. Thus from (\ref{3.22}) and the Fatou's lemma we obtain
\begin{eqnarray}\label{3.23}
&R^{H}(\pi_1,\sigma_1)\geq{E^B\lim_{\alpha\rightarrow{0}}U_{\alpha}}=E^B(
e^{-r\sigma_1}G_{\sigma_1}(S^B)\mathbb{I}_{\sigma_1<\tau\wedge\tau_{J_1}}+\\
&e^{-r(\tau\wedge\tau_H)}F_{\tau\wedge\tau_H}(S^B)\mathbb{I}_{\sigma_1\geq
\tau\wedge\tau_H}- \tilde{V}^{\pi_2}_{\tau})^+.\nonumber
\end{eqnarray}
Since $\sigma_2\geq\sigma_1$ a.s. then from the definition of
$\pi_2$ it follows that
$\tilde{V}^{\pi_2}_{\sigma_2\wedge{t}}=\tilde{V}^{\pi_1}_{\sigma_1\wedge
\sigma_2\wedge{t}}=
\tilde{V}^{\pi_1}_{\sigma_1\wedge{t}}=\tilde{V}^{\pi_2}_{{t}}$ for
all $t$. This together with (\ref{3.21}) gives
\begin{equation}\label{3.24}
R^{H_\epsilon}(\pi_2,\sigma_2)<E^B(e^{-r\sigma_2}G_{\sigma_2}(S^B)
\mathbb{I}_{\sigma_2<\tau}+
e^{-r\tau}F_{\tau}(S^B)\mathbb{I}_{\sigma_2\geq\tau}-
\tilde{V}^{\pi_2}_{\tau})^+ +\delta.
\end{equation}
Observe that if $\sigma_2<\tau$ then
$\sigma_2=\sigma_1<\tau\wedge\tau_H$ and if $\sigma_2\geq\tau$ then
$\sigma_1\geq\tau\wedge\tau_H$. And so from (\ref{3.20}),
(\ref{3.23}) and (\ref{3.24}) we obtain that
\begin{eqnarray}\label{3.25}
&R^{H_\epsilon}(x)-R^{H}(x)\leq
R^{H_\epsilon}(\pi_2,\sigma_2)-R^{H}(\pi_1,\sigma_1)+\delta\leq
{2\delta}+\\
&E^B|e^{-r\tau}F_{\tau}(S^B)-e^{-r(\tau\wedge\tau_H)}F_{\tau\wedge\tau_H}(S^B)|
\leq 2\delta+E^B\Gamma_1+E^B\Gamma_2 \nonumber
\end{eqnarray}
where
\begin{equation*}
\Gamma_1=|e^{-r\tau}-e^{-r(\tau\wedge\tau_H)}|F_{\tau}(S^B) \ \
\mbox{and} \ \ \Gamma_2=|F_{\tau}(S^B)-F_{\tau\wedge\tau_H}(S^B)|.
\end{equation*}
In order to estimate $E^B\Gamma_1$ and $E^B\Gamma_2$ introduce the
process $W_{t}=\frac{\ln{S^B_t}-\ln{S_0}}{\kappa}=
B_t+(\frac{r+\mu}{\kappa}-\frac{\kappa}{2})t$, $t\geq{0}$. From
Girsanov's theorem (see \cite{KS}) it follows that
${\{W_t\}}_{t=0}^T$ is a Brownian motion with respect to the measure
$P_W$ whose restriction to the $\sigma$--algebra $\mathcal{F}^{B}_t$
satisfies
\begin{equation}\label{3.26}
D_t=\frac{d{P}^{B}}{dP_W}|\mathcal{F}^{B}_t=
\exp\bigg((\frac{r+\mu}{\kappa}-\frac{\kappa}{2})B_t+
\frac{(\frac{r+\mu}{\kappa}-\frac{\kappa}{2})^2}{2}t\bigg).
\end{equation}
Denote the expectation with respect to $P_W$ by $E_W$ then by
(\ref{2.3}) and the H\"{o}lder inequality,
\begin{eqnarray}\label{3.27}
&E^B\Gamma_1\leq
E_W\bigg(r(\tau-\tau\wedge\tau_H)(F_0(S_0)+\cL(T+2)(1+\sup_{0\leq{t}
\leq{T}}S^B_t))D_T\bigg)\\
&\leq c_1(E_W(\tau-\tau\wedge\tau_H)^{4/3})^{3/4} \nonumber
\end{eqnarray}
for some constant $c_1$. From (\ref{2.2}) it follows that
$\Gamma_2\leq \Gamma_3+\Gamma_4$ where
\begin{equation*}
\Gamma_3=\cL(\tau-\tau\wedge\tau_H)(1+\sup_{0\leq{t}
\leq{T}}S^{B}_t)\ \ \mbox{and} \ \
\Gamma_4=\sup_{\tau\wedge\tau_H\leq{t}\leq\tau}\cL|S^{B}_t-
S^{B}_{\tau\wedge\tau_H}|.
\end{equation*}
By the H\"{o}lder inequality,
\begin{equation}\label{3.28}
E^B\Gamma_3=E_W(\cL(\tau-\tau\wedge\tau_H)(1+\sup_{0\leq{t}
\leq{T}}S^{B}_t)D_T)\leq c_2(E_W(\tau-\tau\wedge\tau_H)^{4/3})^{3/4}
\end{equation}
for some constant $c_2$. Set
$\Gamma_5=\sup_{\tau\wedge\tau_H\leq{t}\leq\tau}\kappa|W_t-W_{\tau\wedge
\tau_H}|$. Employing the inequality $|e^x-1|\leq{x}$ for $0\leq
x\leq 1$ it follows that $\Gamma_4\leq
\cL\sup_{0\leq{t}\leq{T}}S^B_t(\mathbb{I}_{\Gamma_5>1}+\Gamma_5)$
and together with the Markov and H\"{o}lder inequalities we obtain
that there exists a constant $c_3$ such that
\begin{eqnarray}\label{3.29}
&E^B\Gamma_4\leq
E_W(D_T\cL\sup_{0\leq{t}\leq{T}}S^B_t\mathbb{I}_{\Gamma_5>1})+
E_W(D_T\cL\sup_{0\leq{t}\leq{T}}S^B_t\Gamma_5)\leq\\
&c_3(P_W\{\Gamma_5>1\})^{3/4}+c_3(E_W\Gamma^{4/3}_5)^{3/4}\leq
2c_3(E_W\Gamma^{4/3}_5)^{3/4}\nonumber.
\end{eqnarray}
Using the Burkholder--Davis--Gandy inequality (see \cite{KS}) for
the martingale $W_t-W_{\tau\wedge\tau_H}$, $t\geq \tau\wedge\tau_H$
we obtain that there exists a constant $c_4$ such that
\begin{equation}\label{3.30}
E_W\Gamma^{4/3}_5\leq c_4 E_W(\tau-\tau\wedge\tau_H)^{2/3}.
\end{equation}
Since $\tau-\tau\wedge\tau_H\leq T$ then from
(\ref{3.27})-(\ref{3.30}) we obtain
\begin{equation}\label{3.31}
E^B(\Gamma_1+\Gamma_2)\leq c_5
(E_W(\tau-\tau\wedge\tau_H)^{2/3})^{3/4}
\end{equation}
for some constant $c_5$. Finally, we estimate the term
$E_W(\tau-\tau\wedge\tau_H)^{2/3}$. First assume that $L>0$ and
$R<\infty$. Set $x_1=({\ln{L}-\ln{S_0}})/{\kappa}$,
$x_2=({\ln{R}-\ln{S_0}})/{\kappa}$,
$y_1=x_1-\frac{\epsilon}{\kappa}$ and
$y_2=x_2+\frac{\epsilon}{\kappa}$. For any $x\in\mathbb{R}$ let
$\tau^{(x)}=\inf{\{t\geq{0}|W_t=x\}}$ be the first time the process
${\{W_{t}\}}_{t=0}^\infty$ hits the level $x$. Clearly $\tau^{(x)}$
is a finite stopping time with respect to $P_W$. By (\ref{3.21}) we
obtain that
\begin{eqnarray}\label{3.32}
&\tau-\tau\wedge\tau_H\leq
T\wedge(\tau_{H_\epsilon}-\tau_H)=T\wedge(\tau^{(y_1)}\wedge\tau^{(y_2)}-
\tau^{(x_1)}\wedge\tau^{(x_2)})\leq\\
&T\wedge(\tau^{(y_1)}-\tau^{(x_1)})+
T\wedge(\tau^{(y_2)}-\tau^{(x_2)})\nonumber.
\end{eqnarray}
From the strong Markov property of the Brownian motion it follows
that under $P_W$  the random variable $\tau^{(y_1)}-\tau^{(x_1)}$
has the same distribution as
$\tau^{(y_1-x_1)}=\tau^{(-\frac{\epsilon}{\kappa})}$ and the random
variable $\tau^{(y_2)}-\tau^{(x_2)}$ has the same distribution as
$\tau^{(y_2-x_2)}=\tau^{(\frac{\epsilon}{\kappa})}$. Recall, (see
\cite{KS}) that for any $z\in\mathbb{R}$ the probability density
function of $\tau^{(z)}$ (with respect to $P_W$) is
$f_{\tau^{(z)}}(t)=\frac{|z|}{\sqrt{2\pi{t^3}}}\exp(-\frac{z^2}{2t})$.
Hence, using the inequality $(a+b)^{2/3}\leq a^{2/3}+b^{2/3}$
together with (\ref{3.32}) we obtain that
\begin{eqnarray}\label{3.33}
&E_W(\tau-\tau\wedge\tau_H)^{2/3}\leq
E_W(T\wedge\tau^{(-\frac{\epsilon}{\kappa})})^{2/3}+E_W(T\wedge
\tau^{(\frac{\epsilon}{\kappa})})^{2/3}\leq\\
&\frac{2\epsilon}{\sqrt{2\pi}\kappa}\bigg(\int_{0}^{T} \frac{1}{
t^{3/2-2/3}}dt+T^{2/3} \int_{T}^\infty \frac{1}{
t^{3/2}}dt\bigg)=\frac{16\epsilon}{\sqrt{2\pi}\kappa}T^{1/6}.\nonumber
\end{eqnarray}
Observe that when either $L=0$ or $R=\infty$ (but not both) we
obtain either $\tau-\tau\wedge\tau_H\leq
T\wedge(\tau^{(y_2)}-\tau^{(x_2)})$ or $\tau-\tau\wedge\tau_H\leq
T\wedge(\tau^{(y_1)}-\tau^{(x_1)})$, respectively. Thus for these
cases (\ref{3.33}) holds true, as well. From (\ref{3.25}),
(\ref{3.31}) and (\ref{3.33}) we see that there exists a constant
$A_1$ such that
\begin{equation*}
R^{H_\epsilon}(x)-R^H(x)\leq 2\delta+A_1\epsilon^{3/4}
\end{equation*}
and since $\delta>0$ is arbitrary we complete the proof.
\end{proof}

The next result provides an estimate from above of the shortfall
risk when one of the barriers is close to the initial stock price
$S_0$.
\begin{lem}\label{lem3.4}
Let $I=(L,R)$ be an open interval which satisfy
$\min(\frac{R}{S_0},\frac{S_0}{L})\leq e^\epsilon$, where we set
$\frac{S_0}{0}=\frac{\infty}{S_0}=\infty$. There exists a constant
$A_3$ independent of $L,R$ such that for any $\epsilon>0$ and an
initial capital $x$
\begin{equation}\label{3.34}
R^I(x)\leq (F_0(S_0)-x)^{+}+A_3\epsilon^{3/4}.
\end{equation}
\end{lem}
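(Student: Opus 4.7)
The plan is to exhibit an explicit hedge whose shortfall risk realizes the bound. I take $\pi\in\cA^B(x)$ with $\gamma_t\equiv 0$ (so $\tilde V^\pi_t=x$ for all $t$) paired with the cancellation time $\sigma=\tau_I\wedge T$. A short case analysis shows that $Q^{B,I}(\sigma,\tau)$ vanishes on $\{\sigma<\tau\}$ (either $\sigma=\tau_I$ kills the indicator in $\tilde X^I_\sigma$, or $\sigma=T$ forbids $\sigma<\tau$) and equals $e^{-r\tau}F_\tau(S^B)\bbI_{\tau<\tau_I}$ on $\{\tau\le\sigma\}$. Splitting $F_\tau(S^B)=F_0(S_0)+(F_\tau(S^B)-F_0(S_0))$ and using the elementary $(a+b-x)^+\le (a-x)^++b^+$ yields, with $\tilde\tau:=\tau\wedge\tau_I\wedge T$,
\begin{equation*}
\sup_{\tau\in\cT^B_{0T}}E^B[(Q^{B,I}(\sigma,\tau)-x)^+]\le (F_0(S_0)-x)^+ +\sup_{\tilde\tau\le\tau_I\wedge T}E^B|F_{\tilde\tau}(S^B)-F_0(S_0)|,
\end{equation*}
which reduces the problem to estimating the last term by $O(\epsilon^{3/4})$.

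For that estimate I reuse exactly the machinery already developed in the proof of Lemma \ref{lem3.3}. Without loss of generality assume $R/S_0\le e^\epsilon$ (the case $S_0/L\le e^\epsilon$ is symmetric, via the lower hitting time). Set $x_2=\kappa^{-1}\ln(R/S_0)\le \epsilon/\kappa$ and pass to the Girsanov measure $P_W$ with density $D_T$ from (\ref{3.26}), so that $\tilde\tau\le \tau_I\le\tau^{(x_2)}$. H\"{o}lder's inequality with exponents $(4,4/3)$, together with the fact that $E_W D_T^4$ is a finite constant independent of $I$ and $\epsilon$, reduces matters to estimating $E_W|F_{\tilde\tau}(S^B)-F_0(S_0)|^{4/3}$. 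By (\ref{2.2}) this splits into controlling $E_W[\tilde\tau^{4/3}(1+\sup_{u\le\tilde\tau}S^B_u)^{4/3}]$ and $E_W[(\sup_{u\le\tilde\tau}|S^B_u-S_0|)^{4/3}]$. On $\{u\le\tau^{(x_2)}\}$ one has $W_u\le x_2\le 1/\kappa$ (for $\epsilon\le 1$), hence $S^B_u\le S_0 e^\epsilon$ is bounded, and examining $W_u\ge 0$ and $W_u\le 0$ separately gives $|e^{\kappa W_u}-1|\le e\kappa|W_u|$, so that $|S^B_u-S_0|\le e\kappa S_0|W_u|$. The Burkholder--Davis--Gundy inequality then yields $E_W[(\sup_{u\le\tilde\tau}|W_u|)^{4/3}]\le C E_W[\tilde\tau^{2/3}]$, and the hitting-time density computation used for (\ref{3.33}) gives $E_W[(\tau^{(x_2)}\wedge T)^{2/3}]\le C\epsilon$. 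Stringing these together yields $E_W|F_{\tilde\tau}(S^B)-F_0(S_0)|^{4/3}\le C\epsilon$ and therefore $E^B|F_{\tilde\tau}(S^B)-F_0(S_0)|\le C\epsilon^{3/4}$.

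The main difficulty compared with Lemma \ref{lem3.3} is that only one side of the barrier is controlled by $\epsilon$, so a priori the path of $W$ may wander arbitrarily far on the other side before the option is knocked out. The trick is that $\tilde\tau$ itself is stopped before $\tau^{(x_2)}$, which by BDG converts the smallness of $\tilde\tau$ directly into smallness of $\sup_{u\le\tilde\tau}|W_u|$ without any need for a two-sided barrier; this is the step that replaces the symmetric two-sided argument available in Lemma \ref{lem3.3}. The case $\epsilon>1$ is disposed of by absorbing the residual bound into a sufficiently large $A_3$ using the universal estimate $R^I(x)\le E^B\sup_{t\le T}\tilde Y^I_t<\infty$ that follows from (\ref{2.3}) and integrability of $\sup_t S^B_t$; the resulting $A_3$ is independent of $L,R$.
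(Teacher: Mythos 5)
Your proof is correct and follows essentially the same route as the paper: a constant portfolio $\tilde V^\pi\equiv x$, a cancellation time at which the knock-out indicator $\mathbb{I}_{\sigma<\tau_I}$ vanishes so that the seller never pays the penalty, and then the $\Gamma$-decomposition, Girsanov change of measure, H\"older, Burkholder--Davis--Gundy and hitting-time-density estimates lifted verbatim from the proof of Lemma~\ref{lem3.3}. The only cosmetic difference is that you take $\sigma=\tau_I\wedge T$ and dominate $\tilde\tau\le\tau^{(x_2)}\wedge T$ directly via the nearer barrier (plus a separate disposal of $\epsilon>1$), whereas the paper takes $\sigma=(\tau^{(\epsilon/\kappa)}\vee\tau^{(-\epsilon/\kappa)})\wedge T$ and bounds the max of both one-sided hitting times; the resulting integrals agree up to a factor of two and the estimate is the same.
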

\begin{proof}
Let $x$ be an initial capital. Consider the constant portfolio
$\pi\in\mathcal{A}^{B}(x)$ which satisfy $\tilde{V}^{\pi}_t=x$ for
all t. Using the same notations as in Lemma \ref{lem3.3} set
$\sigma=(\tau^{(\frac{\epsilon}{\kappa})}\vee
\tau^{(-\frac{\epsilon}{\kappa})})\wedge{T}$. Since
$\tau^{(\frac{\epsilon}{\kappa})}\vee\tau^{(-\frac{\epsilon}{\kappa})}
\geq \tau_I$ we obtain that
\begin{equation}\label{3.35}
R^I(x)\leq
R^I(\pi,\sigma)\leq\sup_{\tau\in\mathcal{T}^{B}_{0T}}E^B(e^{-r(\tau\wedge
(\tau^{(\frac{\epsilon}{\kappa})}\vee\tau^{(-\frac{\epsilon}{\kappa})}))}
F_{\tau\wedge
(\tau^{(\frac{\epsilon}{\kappa})}\vee\tau^{(-\frac{\epsilon}{\kappa})})}
(S^B)-x)^{+}.
\end{equation}
Similarly to (\ref{3.31}) (by letting $\tau_H$=0) we obtain that
\begin{eqnarray}\label{3.36}
&\sup_{\tau\in\mathcal{T}^{B}_{0T}}E^B|e^{-r(\tau\wedge
(\tau^{(\frac{\epsilon}{\kappa})}\vee\tau^{(-\frac{\epsilon}{\kappa})}))}
F_{\tau\wedge (\tau^{(\frac{\epsilon}{\kappa})}\vee
\tau^{(-\frac{\epsilon}{\kappa})})}(S^B)-F_0(S_0)|
\leq \\
&c_5 (E_W({T}\wedge (\tau^{(\frac{\epsilon}{\kappa})}\vee
\tau^{(-\frac{\epsilon}{\kappa})}))^{2/3})^{3/4}.\nonumber
\end{eqnarray}
In the same way as in (\ref{3.33}) we derive that
\begin{eqnarray}\label{3.37}
E_W({T}\wedge
(\tau^{(\frac{\epsilon}{\kappa})}\vee\tau^{(-\frac{\epsilon}{\kappa})}))^{2/3}
 \leq \frac{16\epsilon}{\sqrt{2\pi}\kappa}T^{1/6}
\end{eqnarray}
and combining (\ref{3.35})-(\ref{3.37}) we complete the proof.
\end{proof}

\section{Proving the main results}\label{sec4}\setcounter{equation}{0}
In this section we complete the proof of Theorems
\ref{thm2.1}--\ref{thm2.4}. We start with the proof of Theorem
\ref{thm2.2}. Though Theorem \ref{thm2.2} provides only one sided
estimates for shortfall risks we will see that Theorem \ref{thm2.1}
which provide two sided estimates for option prices follows from the
proof of Theorem \ref{thm2.2}. In order to provide second side
estimates in Theorem \ref{thm2.2} we should have more precise
information on optimal portfolios of shortfall risk in the BS model.
However, this problem does not arise when we are dealing with option
prices. Theorem \ref{thm2.4} will also follow from the proof of
Theorem \ref{thm2.2}. At the end of this section we prove Theorem
\ref{thm2.3}. The proof of (2.27) and (2.28) is necessarily rather
technical and it is marked by various risk comparisons via the
formulas (\ref{4.1}), (\ref{4.8}), (\ref{4.9}), (\ref{4.12}), then
estimates of terms in the right hand side of (\ref{4.12}), then
(\ref{4.26})--(\ref{4.31}), then (\ref{4.35}) and estimates of its
right hand side and, finally, (\ref{4.44+}) and (\ref{4.45}) so that
these formulas may serve as road posts for the reader going through
all these details.

Let $x>0$ be an initial capital and let $I=(L,R)$ be an open
interval as before. Fix $\epsilon>0$ and denote
$I_\epsilon=(Le^{-\epsilon},Re^{\epsilon})$. Choose $\delta>0$. For
any $z$ let $\mathcal{A}^{B,C}(z)\subset\mathcal{A}^{B}(z)$ be the
subset consisting of all $\pi\in\mathcal{A}^{B}(z)$ such that the
discounted portfolio process ${\{\tilde{V}^{\pi}_t\}}_{t=0}^T$ is a
right continuous martingale with respect to the martingale measure
$\tilde{P}^B$ and $\tilde{V}^{\pi}_T=f(B^*_{t_1},...,B^*_{t_k})$ for
some smooth function $f\in{C^{\infty}_0(\mathbb{R}^k)}$ with a
compact support and $t_1,...,t_k\in{[0,T]}$. Using the same
arguments as in Lemmas 4.1--4.3 in \cite{DK2} we obtain that there
exists $z<x$ and $\pi\in\mathcal{A}^{B,C}(z)$ such that
$R^{I_\epsilon}(\pi)<R^{I_\epsilon}(x)+\delta$. Thus there exist
$k$, $0<t_1<t_2...<t_k\leq{T}$ and
$0\leq{f_\delta}\in{C^{\infty}_0(\mathbb{R}^k)}$ such that the
portfolio $\pi\in\mathcal{A}^{B}$ with
$\tilde{V}^{\pi}_t=\tilde{E}(f_{\delta}(B^{*}_{t_1},...,
B^{*}_{t_k})|\mathcal{F}^B_t)$ satisfies
\begin{equation}\label{4.1}
R^{I_\epsilon}(\pi)<R^{I_\epsilon}(x)+\delta \ \ \mbox{and} \ \
V^\pi_0<x.
\end{equation}
Set
\begin{equation}\label{4.2}
\Psi_n=f_{\delta}(B^{*}_{\theta^{(n)}_{[nt_1/T]}},...,
B^{*}_{\theta^{(n)}_{[nt_k/T]}}),
\end{equation}
 $u_n=\max_{0\leq{k}\leq{n}}|\theta^{(n)}_k-\frac{kT}{n}|$ and
$w_n=\max_{1\leq{k}\leq{n}}|\theta^{(n)}_k-\theta^{(n)}_{k-1}|+
|T-\theta^{(n)}_n|$. Since $w_n\leq{3u_n}+\frac{T}{n}$ then from
(4.7) in \cite{Ki2} we obtain that for any $m\in\mathbb{R}_+$ there
exists a constant $K^{(m)}$ such that for all $n$,
\begin{equation}\label{4.3}
 E^{B}u^{2m}_n\leq{K^{(m)}n^{-m}}\,\,\mbox{and}\,\,
E^{B}w^{2m}_n\leq{K^{(m)}n^{-m}}.
\end{equation}
From the exponential moment estimates (4.8) and (4.25) of \cite
{Ki2} it follows that there exists a constant $K_1$ such that for
any natural $n$ and a real $a$,
\begin{equation}\label{4.4}
\begin{split}
E^{B}e^{|a|\theta^{(n)}_n\vee{T}}\leq{e^{|a|K_1T}} \ \mbox{and} \ \
E^B\sup_{0\leq{t}\leq{\theta^{(n)}_n\vee{T}}}\exp(aB_t)\leq{2e^{a^2K_1T}}.
\end{split}
\end{equation}
Clearly $(B^{*}_t-B^{*}_{\theta^{(n)}_{[nt/T]}})^2\leq{2(B_t-
B_{\theta^{(n)}_{[nt/T]}})^2+2((\frac{\mu}{\kappa}-\frac{\kappa}{2})
(t-\theta^{(n)}_{[nt/T]}))^2}$ and
$|t-\theta^{(n)}_{[nt/T]}|\leq{\frac{T}{n}+u_n}$. Hence, from
(\ref{4.3}) and It\^{o}'s isometry for the Brownian motion it
follows that there exists a constant $C^{(1)}$ such that
$E^{B}|B^{*}_t-B^{*}_{\theta^{(n)}_{[nt/T]}}|^2\leq{C^{(1)}n^{-1/2}}$
for all $t$. Let
$\cL(f_\delta)=\max_{1\leq{i}\leq{k}}\sup_{x\in\mathbb{R}^k}|
\frac{\partial{f_\delta}}{\partial{x_i}}(x_1,...,x_k)|$. Then by
(\ref{4.2}) and the inequality
$(\sum_{i=1}^ka_i)^2\leq{k\sum_{i=1}^ka^2_i}$ we obtain
\begin{eqnarray}\label{4.6}
&E^{B}(\Psi_n-\tilde{V}^{\pi}_T)^2\leq{\cL(f_\delta)^2
E^{B}(\sum_{i=1}^k|B^{*}_{t_k}-
B^{*}_{\theta^{(n)}_{[nt_k/T]}}|)^2}\leq\\
&{k\cL(f_\delta)^2}\sum_{i=1}^kE^{B}(B^{*}_{t_k}-
B^{*}_{\theta^{(n)}_{[nt_k/T]}})^2\leq{k^2\cL(f_\delta)^2C^{(1)}n^{-1/2}}.
\nonumber
\end{eqnarray}
By (\ref{4.4}) and the Cauchy-Schwarz inequality,
\begin{equation*}
\lim_{n\rightarrow\infty}\tilde{E}^{B}|\Psi_n-\tilde{V}^{\pi}_T|=
\lim_{n\rightarrow\infty}({E}^{B}|\Psi_n-
\tilde{V}^{\pi}_T|^2)^{1/2}(E^{B}Z^{-2}_{\theta^{(n)}_n\vee{T}})^{1/2}=0
\end{equation*}
where $Z_t$ is the Radon-Nikodim derivative given by (\ref{2.8+}).
Since $\tilde{E}^{B}\tilde{V}^{\pi}_T<x$ then for sufficiently large
$n$ we can assume that $v_n=\tilde{E}(\Psi_n)<x$. Observe that the
finite dimensional distributions of the sequence
$\sqrt{\frac{T}{n}}\xi_1,..,\sqrt{\frac{T}{n}}\xi_n$ with respect to
$\tilde{P}^{\xi}_n$ and the finite dimensional distributions of the
sequence $\bg^{(n)}_1,...,\bg^{(n)}_n$ with respect to
$\tilde{P}^{B}$ are the same, and so $v_n=\tilde{E}^{\xi}_n
f_{\delta}\bigg(\sqrt{\frac{T}{n}}\sum_{i=1}^{[nt_1/T]}\xi_i,
...,\sqrt{\frac{T}{n}}\sum_{i=1}^{[nt_k/T]}\xi_i\bigg)<x$ (for
sufficiently large $n$). Since CRR markets are complete we can find
a portfolio $\tilde{\pi}(n)\in\mathcal{A}^{\xi,n}(v_n)$ such that
\begin{equation}\label{4.7}
\tilde{V}^{\tilde{\pi}}_n=f_{\delta}\bigg(\sqrt{\frac{T}{n}}
\sum_{i=1}^{[nt_1/T]}\xi_i,
...,\sqrt{\frac{T}{n}}\sum_{i=1}^{[nt_k/T]}\xi_i\bigg).
\end{equation}
For a fixed $n$ let
$\pi'=\psi_n(\tilde{\pi})\in\mathcal{A}^{B,n}(v_n)$. From
(\ref{2.22}) it follows that
$\tilde{V}^{\pi'}_{\theta^{(n)}_n}=\Psi_n$. Since $R^I_n(\cdot)$ is
a non increasing function then by (\ref{4.1}) and Lemma 3.2,
\begin{equation}\label{4.8}
R^{I}_n(x)-R^{I_\epsilon}(x)\leq{R^I_n(v_n)-R^{I_\epsilon}(x)}
\leq{\delta+R^{B,I}_n(\pi')-R^{I_\epsilon}(\pi)}.
\end{equation}
There exists a stopping time $\sigma\in\mathcal{T}^{B}_{0T}$ such
that
\begin{equation}\label{4.9}
R^{I_\epsilon}(\pi)>{\sup_{\tau\in\mathcal{T}^{B}_{0T}}E^{B}(Q^{B,I_\epsilon}
(\sigma,\tau)-\tilde{V}^{\pi}_{\sigma\wedge\tau})^{+}}-\delta.
\end{equation}
Set
\begin{equation}\label{4.10}
\zeta=(n\wedge\min{\{i|\theta^{(n)}_i\geq{\sigma}\}})\mathbb{I}_{\sigma<T}+
n\mathbb{I}_{\sigma=T}.
\end{equation}
Clearly, $\zeta\leq{n}$ a.s. and
$\{\zeta\leq{i}\}={\{\sigma\leq{\theta^{(n)}_i}\}}\cap{\{\sigma<T\}}\in
\mathcal{F}^{B}_{\theta^{(n)}_i}$ for any $i<n$ implying that
$\zeta\in{\mathcal{T}^{B,n}_{0,n}}$. There exists a stopping time
$\eta\in\mathcal{T}^{B,n}_{0,n}$ such that
\begin{eqnarray}\label{4.11}
&E^{B}(Q^{B,I,n}(\frac{\zeta{T}}{n},\frac{{\eta}T}{n})-
\tilde{V}^{\pi'}_{\theta^{(n)}_{\zeta\wedge\eta}})^{+}>
\sup_{\tilde{\eta}\in\mathcal{T}^{B,n}_{0,n}}E^{B}(Q^{B,I,n}
(\frac{\zeta{T}}{n},\frac{{\tilde\eta}T}{n})\\
&-\tilde{V}^{\pi'}_{\theta^{(n)}_{\zeta\wedge\tilde{\eta}}})^{+}-\delta
\geq{R^{B,I}_n(\pi')-\delta}.\nonumber
\end{eqnarray}
From (\ref{4.9}) and (\ref{4.11}) we obtain that
\begin{eqnarray}\label{4.12}
&R^{B,I}_n(\pi')-R^{I_\epsilon}(\pi)<2\delta+E^{B}(Q^{B,I,n}
(\frac{\zeta{T}}{n},\frac{{\eta}T}{n})-
\tilde{V}^{\pi'}_{\theta^{(n)}_{\zeta\wedge\eta}})^{+}\\
&-E^{B}(Q^{B,I_\epsilon}(\sigma,\theta^{(n)}_{\eta}\wedge{T})-
\tilde{V}^{\pi}_{\sigma\wedge\theta^{(n)}_{\eta}})^{+} \leq 2\delta+
E^B(\Lambda_1+\Lambda_2+\Lambda_3) \nonumber
\end{eqnarray}
where
\begin{eqnarray}\label{4.13}
&\Lambda_1=|\tilde{V}^{\pi'}_{\theta^{(n)}_{\zeta\wedge\eta}}-
\tilde{V}^{\pi}_{\theta^{(n)}_{\zeta\wedge\eta}\wedge{T}}|, \ \
\Lambda_2=|\tilde{V}^{\pi}_{\theta^{(n)}_{\zeta\wedge\eta}\wedge{T}}-
\tilde{V}^{\pi}_{\theta^{(n)}_{\eta}\wedge{\sigma}}| \\
&\mbox{and} \ \
\Lambda_3=(Q^{B,I,n}(\frac{\zeta{T}}{n},\frac{{\eta}T}{n})-Q^{B,I_\epsilon}
(\sigma,\theta^{(n)}_{\eta}\wedge{T}))^+\nonumber.
\end{eqnarray}
Since the processes ${\{\tilde{V}^{\pi}_t\}}$, $t\geq{0}$ is a
martingale then
$\tilde{V}^{\pi}_{\theta^{(n)}_{\zeta\wedge\eta}\wedge{T}}=
\tilde{E}^B(\tilde{V}^{\pi}_T|\mathcal{F}^B_{\theta^{(n)}_{\zeta\wedge\eta}
\wedge{T}})=
\tilde{E}^B(\tilde{V}^{\pi}_T|\mathcal{F}^B_{\theta^{(n)}_{\zeta\wedge\eta}})$
taking into account that $\tilde{V}^{\pi}_T$ is $\mathcal{F}^B_T$
measurable. Since the processes ${\{\tilde{V}^{\pi'}_t\}}$,
$t\geq{0}$ is a martingale and
$\Psi_n=\tilde{V}^{\pi'}_{\theta^{(n)}_n}$ then
$\tilde{V}^{\pi'}_{\theta^{(n)}_{\zeta\wedge\eta}}=
\tilde{E}^B(\Psi_n|\mathcal{F}^B_{\theta^{(n)}_{\zeta\wedge\eta}})$.
Thus
\begin{equation}\label{4.14}
\tilde{V}^{\pi'}_{\theta^{(n)}_{\zeta\wedge\eta}}
-\tilde{V}^{\pi}_{\theta^{(n)}_{\zeta\wedge\eta}\wedge{T}}=
\tilde{E}^B(\Psi_n-\tilde{V}^{\pi}_T|\mathcal{F}^B_{\theta^{(n)}_{\zeta\wedge
\eta}
})=E^B\bigg(\frac{Z_{\theta^{(n)}_{\zeta\wedge\eta}}}{Z_{T\vee\theta^{(n)}_n}}
(\Psi_n-\tilde{V}^{\pi}_T)|\mathcal{F}^B_{\theta^{(n)}_{\zeta\wedge\eta}}\bigg).
\end{equation}
By (\ref{4.4}), (\ref{4.6}), (\ref{4.14}), the Cauchy-Schwarz and
Jensen inequalities,
\begin{equation}\label{4.15}
E^B\Lambda_1\leq
\bigg(E^B\bigg(\frac{Z_{\theta^{(n)}_{\zeta\wedge\eta}}}
{Z_{T\vee\theta^{(n)}_n}}\bigg)^2\bigg)^{1/2}
(E^B(\Psi_n-\tilde{V}^{\pi}_T)^2)^{1/2}\leq C(f_\delta)n^{-1/4}
\end{equation}
where $C(f_\delta)$ is a constant which depends only on $f_\delta$.
By using the same arguments as in (5.14)-(5.17) of \cite{DK2} we
obtain that
\begin{equation}\label{4.16}
E^B\Lambda_2\leq{\tilde{C}(f_\delta)n^{-1/2}}
\end{equation}
for some constant $\tilde{C}(f_\delta)$ which depends only on
$f_\delta$. Next, we estimate $\Lambda_3$. Set
\begin{eqnarray*}
&Q^B(s,t)=e^{-rt}G_t(S^B)\mathbb{I}_{t<s}+e^{-rs}F_s(S^B)\mathbb{I}_{s\leq{t}}, \ \ s,t\geq{0} \ \ \mbox{and}\\
&
Q^{B,n}(k,l)=(1+r_n)^{-k}G_{\frac{kT}{n}}(S^{B,n})\mathbb{I}_{k<l}+
(1+r_n)^{-l}F_{\frac{lT}{n}}(S^{B,n}) \mathbb{I}_{l\leq{k}}, \ \
k,l\leq{n}.
\end{eqnarray*}
From (\ref{2.3}) and (\ref{4.13}) we get
\begin{equation}\label{4.17}
\Lambda_3\leq
(Q^{B,n}(\frac{\zeta{T}}{n},\frac{{\eta}T}{n})-Q^{B}(\sigma,\theta^{(n)}_{\eta}
\wedge{T}))^{+}
+\mathbb{I}_\Theta(G_0(S_0)+\cL(T+2)(1+\max_{0\leq{k}
\leq{n}}S^{B,n}_{\frac{kT}{n}}))
\end{equation}
where
$\Theta=\{\zeta\wedge\eta<\tau^{B,n}_{I}\}\cap\{\sigma\wedge\theta^{(n)}_{\eta}
\geq\tau_{I_\epsilon}\}$. Similarly to Lemmas 3.2 and 3.3 in
\cite{Ki2} it follows that there exists a constant $C^{(2)}$  such
that
\begin{eqnarray}\label{4.18}
&\sup_{\zeta\in\mathcal{T}^{B,n}_{0,n}}\sup_{\eta\in\mathcal{T}^{B,n}_{0,n}}
E^{B}|Q^{B}(\theta^{(n)}_{\zeta},\theta^{(n)}_{\eta})-Q^{B,n}(\frac{\zeta{T}}
{n},\frac{\eta{T}}{n})|\leq\\
&{C^{(2)}n^{-1/4}(\ln{n})^{3/4}}.\nonumber
\end{eqnarray}
From (\ref{4.4}) and the Cauchy-Schwarz inequality it follows that
\begin{equation}\label{4.19}
E^B\bigg(\mathbb{I}_\Theta(G_0(S_0)+\cL(T+2)(1+\max_{0\leq{k}\leq{n}}
S^{B,n}_{\frac{kT}{n}}))\bigg)\leq C^{(3)}P(\Theta)^{1/2}
\end{equation}
for some constant $C^{(3)}$. By (\ref{4.10}) we see that
  ${\sigma}<\theta^{(n)}_{\eta}\wedge{T}$ provided $\zeta<\eta$. This
together with (\ref{4.17})--(\ref{4.19}) gives
\begin{eqnarray}\label{4.20}
&E^B\Lambda_3\leq
C^{(3)}P(\Theta)^{1/2}+C^{(2)}n^{-1/4}(\ln{n})^{3/4}+
E^B(Q^B(\theta^{(n)}_\zeta,\theta^{(n)}_\eta)-\\
&Q^{B}(\sigma,\theta^{(n)}_{\eta}\wedge{T}))^{+}\leq
C^{(3)}P(\Theta)^{1/2}+C^{(2)}n^{-1/4}(\ln{n})^{3/4}+\alpha_1+\alpha_2
\nonumber
\end{eqnarray}
where
\begin{eqnarray}\label{4.21}
&\alpha_1=E^{B}|e^{-r\theta^{(n)}_{\zeta\wedge\eta}}
G_{\theta^{(n)}_{\zeta\wedge\eta}}
(S^{B})-e^{-r{{\sigma}\wedge\theta^{(n)}_{\eta}}}
G_{\sigma\wedge\theta^{(n)}_{\eta}}(S^{B})|\\
&\mbox{and} \ \ \alpha_2=E^{B}|e^{-r\theta^{(n)}_{\zeta\wedge\eta}}
F_{\theta^{(n)}_{\zeta\wedge\eta}}
(S^{B})-e^{-r{{\sigma}\wedge\theta^{(n)}_{\eta}}}
F_{\sigma\wedge\theta^{(n)}_{\eta}}(S^{B})|. \nonumber
\end{eqnarray}
From Lemma 4.4 in \cite{DK2} it follows that there exists a
constants $C^{(4)},C^{(5)}$ such that
\begin{equation}\label{4.22}
\alpha_1+\alpha_2\leq
C^{(4)}(E^{B}(\theta^{(n)}_{\zeta\wedge\eta}-\theta^{(n)}_{\eta}
\wedge{\sigma})^2)^{1/2}+
C^{(5)}(E^{B}(\theta^{(n)}_{\zeta\wedge\eta}-\theta^{(n)}_{\eta}
\wedge{\sigma})^2)^{1/4}.
\end{equation}
By ({\ref{4.10}) we obtain that
$|\theta^{(n)}_{\zeta\wedge\eta}-\theta^{(n)}_{\eta}\wedge{\sigma}|\leq
|\theta^{(n)}_\zeta-\sigma|\leq |T-\theta^{(n)}_n|\leq{u_n}$. Thus
by (\ref{4.3}),
\begin{equation}\label{4.23}
\alpha_1+\alpha_2\leq{C^{(6)}n^{-1/4}}
\end{equation}
for some constant $C^{(6)}$. Finally, we estimate $P(\Theta)$.
Observe that $\sigma\wedge\theta^{(n)}_{\eta}\leq
\theta^{(n)}_{\zeta\wedge\eta}$, and so
\begin{eqnarray}\label{4.24}
&\Theta\subseteq\bigg\{
 {\frac{\sup_{0\leq{t}\leq{\sigma\wedge\theta^{(n)}_{\eta}}}
{S^B_t}}{\max_{0\leq{k}\leq{\zeta\wedge\eta}}S^{B,n}_{kT/n}}}>e^\epsilon\bigg\}
\bigcup \bigg\{
 {\frac{\inf_{0\leq{t}\leq{\sigma\wedge\theta^{(n)}_{\eta}}}
{S^B_t}}{\min_{0\leq{k}\leq{\zeta\wedge\eta}}S^{B,n}_{kT/n}}}<e^{-\epsilon}
\bigg\}\subseteq\\
&\bigg\{
 {\frac{\sup_{0\leq{t}\leq{\theta^{(n)}_{\zeta\wedge\eta}}}
{S^B_t}}{\max_{0\leq{k}\leq{\zeta\wedge\eta}}S^{B,n}_{kT/n}}}>e^\epsilon\bigg\}
\bigcup\bigg\{
 {\frac{\inf_{0\leq{t}\leq{\theta^{(n)}_{\zeta\wedge\eta}}}
{S^B_t}}{\min_{0\leq{k}\leq{\zeta\wedge\eta}}S^{B,n}_{kT/n}}}<e^{-\epsilon}
\bigg\}\subseteq\nonumber\\
&\bigg\{
 {{\max_{0\leq{k}\leq{n-1}}\sup_{\theta^{(n)}_k\leq{t}\leq\theta^{(n)}_{k+1}}
 \max(\frac{S^B_t}{S^{B,n}_{kT/n}},\frac{S^{B,n}_{kT/n}}{S^B_t})}}>e^\epsilon
\bigg\}\subseteq\nonumber\\
&\bigg\{
 {{\max_{0\leq{k}\leq{n-1}}\sup_{\theta^{(n)}_k\leq{t}\leq
 \theta^{(n)}_{k+1}}r|t-\frac{kT}{n}|+\kappa|B^{*}_t-B^{*}_{\theta^{(n)}_k}|>
 \epsilon
}} \bigg\}.\nonumber
\end{eqnarray}
Since $|B^{*}_t-B^{*}_{\theta^{(n)}_k}|\leq \sqrt{\frac{T}{n}}$ and
$|t-\frac{kT}{n}|\leq u_n+\frac{T}{n}$ for any $k<n$ and
$t\in{[\theta^{(n)}_k,\theta^{(n)}_{k+1}]}$ (where $u_n$ was defined
after (\ref{4.2})) then using the inequality
$(a+b)^{3}\leq{4(a^3+b^3)}$ for $a,b\geq{0}$ we obtain by
(\ref{4.3}) that
\begin{equation}\label{4.25}
E^B(
\max_{0\leq{k}\leq{n-1}}\sup_{\theta^{(n)}_k\leq{t}\leq\theta^{(n)}_{k+1}}r|t-
\frac{kT}{n}|+\kappa|B^{*}_t-B^{*}_{\theta^{(n)}_k}|)^{3}\leq
 C^{(7)} n^{-3/2}
\end{equation}
for some constant $C^{(7)}$. From (\ref{4.24}) and the Markov
inequality it follows that
$P(\Theta)\leq{C^{(7)}\frac{n^{-3/2}}{\epsilon^3}}$ and together
with (\ref{4.8}), (\ref{4.12}), (\ref{4.15}), (\ref{4.16}),
(\ref{4.20}) and (\ref{4.23}) we conclude that
\begin{eqnarray}\label{4.26}
&R^I_n(x)-R^{I_\epsilon}(x)\leq
3\delta+(C^{(6)}+C(f_\delta))n^{-1/4}+\tilde{C}(f_\delta)n^{-1/2}+\\
&{C^{(2)}n^{-1/4}(\ln{n})^{3/4}}+
C^{(3)}\sqrt{C^{(7)}\frac{n^{-3/2}}{\epsilon^3}}. \nonumber
\end{eqnarray}
Since the above constants do not depend on $n$ then
$R^{I_\epsilon}(x)\geq\limsup_{n\rightarrow\infty}R^I_n(x)-3\delta$.
Letting $\delta\downarrow 0$ we obtain that
$R^{I_\epsilon}(x)\geq\limsup_{n\rightarrow\infty}R^I_n(x)$ and by
Lemma \ref{lem3.3},
\begin{equation}\label{4.27}
R^{I}(x)=\lim_{\epsilon\rightarrow 0}R^{I_\epsilon}(x)\geq
\limsup_{n\rightarrow\infty}R^I_n(x).
\end{equation}

In order to compete the proof of Theorem \ref{thm2.2} we should
prove (\ref{2.19}). Fix an initial capital $x$, an open interval
$I=(L,R)$ and a natural number $n$. If
$\min(\frac{R}{S_0},\frac{S_0}{L})\leq e^{n^{-1/3}}$ then from Lemma
\ref{lem3.4} and the inequality $R^I_n(x)\geq (F_0(S_0)-x)^{+}$ it
follows
\begin{equation}\label{4.27+}
R^I(x)-R^I_n(x) \leq R^I(x)-(F_0(S_0)-x)^{+}\leq A_3n^{-1/4}.
\end{equation}
Next, we deal with the case where
$\min(\frac{R}{S_0},\frac{S_0}{L})>e^{n^{-1/3}}$ (which is true for
sufficiently large $n$). Introduce the open interval
$J_n=(L\exp(n^{-1/3}),R\exp(-n^{-1/3}))$. Set
$(\pi,\sigma)=(\psi_n(\pi^I_n),\phi_n(\sigma^I_n))$ where
$(\pi^I_n,\sigma^I_n)$ is the optimal hedge given by (\ref{3.15})
and the functions $\psi_n,\phi_n$ were defined in Section 2. We can
consider the portfolio $\pi=\psi_n(\pi_n)$ not only as an element in
$\mathcal{A}^{B,n}(x)$ but also as an element in
$\mathcal{A}^{B}(x)$ if we restrict the above portfolio to the
interval ${[0,T]}$. From Lemma \ref{lem3.2} we obtain that
\begin{equation}\label{4.28}
R^{J_n}(\pi,\sigma)-R^I_n(x)=R^{J_n}(\pi,\sigma)-R^{B,I}_n(\pi,\zeta^I_n)
\end{equation}
where, recall, $\zeta^I_n$ was defined in (\ref{3.16}). Since $I$
and $n$ are fixed we denote $\zeta=\zeta^I_n$. Recall that
$\Pi_n(\sigma^I_n)=\zeta$ and so from (\ref{2.20}) we get
$\sigma=(T\wedge\theta^{(n)}_{\zeta})\mathbb{I}_{\zeta<n}
+T\mathbb{I}_{\zeta=n}$. For a fixed $\delta>0$ choose a stopping
time $\tau$ such that
\begin{equation}\label{4.30}
R^{J_n}(\pi,\sigma)<\delta+E^{B}[(Q^{B,J_n}(\sigma,{\tau})-
\tilde{V}^{\pi}_{\sigma \wedge{{\tau}}})^{+}].
\end{equation}
 Observe that
$\min\{k|\theta^{(n)}_k\geq{\tau}\}\in\mathcal{T}^{B,n}$ since
$\{\min\{k|\theta^{(n)}_k\geq{\tau}\}\leq{j}\}=\{\theta^{(n)}_j\geq\tau\}
\in\mathcal{F}^{B}_{\theta^{(n)}_j}$ and set
$\eta=n\wedge\min\{k|\theta^{(n)}_k\geq{\tau}\}\in
\mathcal{T}^{B,n}_{0,n}$. Denote
\begin{eqnarray*}
&\Gamma_1=(Q^{B,J_n}(\sigma,{\tau})-Q^{B,J_n}(\sigma\wedge
\theta^{(n)}_n,{\tau}\wedge\theta^{(n)}_n))^+\\
&\mbox{and} \ \
\Gamma_2=(Q^{B,J_n}(\sigma\wedge\theta^{(n)}_n,\tau\wedge\theta^{(n)}_n)-
Q^{B,I,n}(\frac{\zeta{T}}{n},\frac{\eta{T}}{n}))^+.
\end{eqnarray*}
From (\ref{4.30}) it follows that
\begin{eqnarray*}
&R^{J_n}(\pi,\sigma)<\delta+E^{B}(Q^{B,J_n}(\sigma\wedge\theta^{(n)}_n,
{\tau}\wedge
\theta^{(n)}_n)-\tilde{V}^{\pi}_{\sigma\wedge{{\tau}}})^{+}+E^{B}\Gamma_1\\
&\mbox{and}  \ \ R^{B,I}_n(\pi,\zeta)\geq
E^{B}(Q^{B,J_n}(\sigma\wedge\theta^{(n)}_n,{\tau}
\wedge\theta^{(n)}_n)-\tilde{V}^{\pi}_{\theta^{(n)}_{\zeta\wedge\eta}})^{+}-
E^B{\Gamma_2}.
\end{eqnarray*}
Hence,
\begin{eqnarray}\label{4.31}
&R^{J_n}(\pi,\sigma)-R^{B,I}_n({\pi},\zeta)<E^{B}(Q^{B,J_n}(\sigma\wedge
\theta^{(n)}_n,\tau\wedge\theta^{(n)}_n)-\tilde{V}^{\pi}_{\sigma
\wedge{\tau}})^{+}\\
&-E^{B}(Q^{B,J_n}(\sigma\wedge\theta^{(n)}_n,{\tau}\wedge\theta^{(n)}_n)-
\tilde{V}^{\pi}_{\theta^{(n)}_{\zeta\wedge\eta}})^{+}+
\delta+E^{B}(\Gamma_1+ \Gamma_2). \nonumber
\end{eqnarray}
Observe that $\sigma\wedge\theta^{(n)}_n\leq\theta^{(n)}_\zeta$ and
$\tau\wedge\theta^{(n)}_n\leq\theta^{(n)}_\eta$, thus
\begin{equation}\label{4.32}
\sigma\wedge{\tau}\wedge{\theta^{(n)}_n}\leq\theta^{(n)}_{\zeta\wedge\eta}.
\end{equation}
Since $\pi\in{\mathcal{A}^{B,n}(x)}$ then by (\ref{2.21}),
$\tilde{V}^{\pi}_{\sigma\wedge\tau}=\tilde{V}^{\pi}_{\sigma\wedge\tau\wedge
\theta^{(n)}_n}=\tilde{E}^{B}(\tilde{V}^{\pi}_{\theta^{(n)}_{\zeta
\wedge\eta}}|\mathcal{F}^{B}_{\sigma\wedge\tau\wedge\theta^{(n)}_n})$.
This together with the Jensen inequality yields that
\begin{eqnarray}\label{4.33}
&(Q^{B,J_n}(\sigma\wedge\theta^{(n)}_n,{\tau}\wedge\theta^{(n)}_n)-
\tilde{V}^{\pi}_{\sigma\wedge{{\tau}}})^{+}\leq\\
&{\tilde{E}^{B}((Q^{B,J_n}(\sigma\wedge\theta^{(n)}_n,{\tau}\wedge
\theta^{(n)}_n)-
\tilde{V}^{\pi}_{\theta^{(n)}_{\zeta\wedge\eta}})^{+}|
\mathcal{F}^{B}_{\sigma\wedge\tau\wedge\theta^{(n)}_n})}=\nonumber\\
&{E}^{B}\bigg(\frac{Z_{\sigma\wedge\tau\wedge\theta^{(n)}_n}}
{Z_{\theta^{(n)}_{\zeta\wedge\eta}}}(Q^{B,J_n}(\sigma\wedge\theta^{(n)}_n,
{\tau}\wedge\theta^{(n)}_n)-
\tilde{V}^{\pi}_{\theta^{(n)}_{\zeta\wedge\eta}})^{+}
|\mathcal{F}^{B}_{\sigma\wedge\tau\wedge\theta^{(n)}_n}\bigg).\nonumber
\end{eqnarray}
Thus,
\begin{eqnarray}\label{4.34}
&E^{B}(Q^{B,J_n}(\sigma\wedge\theta^{(n)}_n,{\tau}\wedge\theta^{(n)}_n)-
\tilde{V}^{\pi}_{\sigma\wedge{{\tau}}})^{+}\leq\\
&{E}^{B}\bigg(\frac{Z_{\sigma\wedge\tau\wedge\theta^{(n)}_n}}
{Z_{\theta^{(n)}_{\zeta\wedge\eta}}}(Q^{B,J_n}(\sigma\wedge\theta^{(n)}_n,
{\tau}\wedge\theta^{(n)}_n)-
\tilde{V}^{\pi}_{\theta^{(n)}_{\zeta\wedge\eta}})^{+}\bigg).\nonumber
\end{eqnarray}
By (\ref{4.31}) and (\ref{4.34}) we obtain that
\begin{equation}\label{4.35}
R^{J_n}(\pi,\sigma)-R^{B,I}_n(\pi,\zeta)<\delta+E^{B}(\Gamma_1+\Gamma_2)+
\alpha_3
\end{equation}
where
\begin{equation*}
\alpha_3=E^B\bigg(\frac{Z_{\sigma\wedge\tau\wedge\theta^{(n)}_n}-
Z_{\theta^{(n)}_{\zeta\wedge\eta}}}{Z_{\theta^{(n)}_{\zeta\wedge\eta}}}
Q^{B,J_n}(\sigma\wedge\theta^{(n)}_n,{\tau}\wedge\theta^{(n)}_n)
\bigg).
\end{equation*}
Notice that $|\sigma-\theta^{(n)}_\zeta|\leq{w_n}$ and
$|\tau-\theta^{(n)}_\eta|\leq{w_n}$ (where $w_n$ was defined after
(\ref{4.2})). Thus by (\ref{4.32}) we obtain that
\begin{equation}\label{4.36}
0 \leq
\theta^{(n)}_{\zeta\wedge\eta}-\sigma\wedge{\tau}\wedge\theta^{(n)}_n\leq
\max(|\sigma-\theta^{(n)}_\zeta|,|\tau-\theta^{(n)}_\eta|)
\leq{w_n}.
\end{equation}
From Ito's formula it follows that
$dZ_t=\frac{\mu}{\kappa}Z_tdB_t+(\frac{\mu}{\kappa})^2Z_tdt$, and so
\begin{equation*}
Z_{\theta^{(n)}_{\zeta\wedge\eta}}-Z_{\sigma\wedge\tau\wedge\theta^{(n)}_n}=
\frac{\mu}{\kappa}\int_{\sigma\wedge\tau\wedge
\theta^{(n)}_n}^{\theta^{(n)}_{\zeta\wedge\eta}}Z_tdB_t
+(\frac{\mu}{\kappa})^2\int_{\sigma\wedge\tau\wedge
\theta^{(n)}_n}^{\theta^{(n)}_{\zeta\wedge\eta}}Z_tdt.
\end{equation*}
Set $E_n=\sup_{0\leq{t}\leq{\theta^{(n)}_n\vee{T}}}Z_t$. From
(\ref{4.3}), (\ref{4.4}), the Cauchy-Schwarz inequality and Ito's
isometry we obtain that
\begin{eqnarray}\label{4.37}
&E^{B}(Z_{\theta^{(n)}_{\zeta\wedge\eta}}-Z_{\sigma\wedge\tau\wedge
\theta^{(n)}_n})^2\leq{
2(\frac{\mu}{\kappa})^2E^{B}\int_{\sigma\wedge\tau\wedge
\theta^{(n)}_n}^{\theta^{(n)}_{\zeta\wedge\eta}}Z^2_tdt}+\\
&2(\frac{\mu}{\kappa})^4E^{B}(w_nE_n)^2\leq
{2(\frac{\mu}{\kappa})^2}E^{B}(w_nE^2_n)+2(\frac{\mu}{\kappa})^4E^{B}
(w_nE_n)^2\leq{C^{(8)}n^{-1/2}} \nonumber
\end{eqnarray}
for some constant $C^{(8)}$. By (\ref{2.3}) it follows that
$Q^{B,J_n}(\sigma\wedge\theta^{(n)}_n,{\tau}\wedge\theta^{(n)}_n)
\leq{G_0(S_0)+}$\\
$\cL(T+2)(1+\sup_{0\leq{t}\leq{T}}S^B_t)$, and so (\ref{4.4}) and
the Cauchy-Schwarz inequality yields that
\begin{equation}\label{4.38}
\alpha\leq{C^{(9)}n^{-1/4}}
\end{equation}
for some $C^{(9)}>0$ independent of $n$. Now we estimate
$E^{B}\Gamma_1$. Clearly $\Gamma_1\leq
(Q^{B}(\sigma,\tau)-Q^B(\sigma\wedge
\theta^{(n)}_n,{\tau}\wedge\theta^{(n)}_n))^+$. From the definitions
it follows easily that $\sigma<\tau$ is equivalent to
$\sigma\wedge\theta^{(n)}_n<\tau\wedge\theta^{(n)}_n$. Furthermore,
$\sigma\wedge\tau-\sigma\wedge\tau\wedge\theta^{(n)}_n\leq
|T-\theta^{(n)}_n|\leq u_n$ (with $u_n$ defined after (\ref{4.2})).
Thus from (\ref{4.3}) and Lemma 4.4 in \cite{DK2} we obtain that
there exists a constant $C^{(10)}$ such that for all $n\in\bbN$,
\begin{eqnarray}\label{4.39}
&E^{B}\Gamma_1\leq
E^B{|e^{-r\sigma\wedge\tau}G_{\sigma\wedge\tau}(S^B)-
e^{-r\theta^{(n)}_n\wedge\sigma\wedge{\tau}}G_{\theta^{(n)}_{n}
\wedge\sigma\wedge\tau}(S^{B})|}+\\
&E^B|e^{-r\sigma\wedge\tau}F_{\sigma\wedge\tau}(S^B)-
e^{-r\theta^{(n)}_n\wedge\sigma\wedge{\tau}}F_{\theta^{(n)}_{n}
\wedge\sigma\wedge\tau}(S^{B})|\leq \nonumber\\
&C^{(4)}(E^{B}(u_n)^2)^{1/2}+
C^{(5)}(E^{B}(u_n)^2)^{1/4}\leq{C^{(10)}n^{-1/4}}\nonumber
\end{eqnarray}
where the last inequality follows from (\ref{4.3}). Next, we
estimate $E^{B}\Gamma_2$. From (\ref{2.3}) it follows that
\begin{eqnarray}\label{4.40}
&\Gamma_2\leq
(Q^{B}(\sigma\wedge\theta^{(n)}_n,\tau\wedge\theta^{(n)}_n)-
Q^{B,n}(\frac{\zeta{T}}{n},\frac{\eta{T}}{n}))^+ \\
&+\mathbb{I}_{\tilde\Theta}(G_0(S_0)+\cL(T+2)(1+\sup_{0\leq{t}\leq{T}}S^B_t))
\nonumber
\end{eqnarray}
where
$\tilde\Theta={\{\eta\wedge\zeta\geq\tau^{B,n}_{I}\}}\cap{\{\sigma\wedge\tau
\wedge\theta^{(n)}_n<\tau_{J_n}\}}$. By the Cauchy-Schwarz
inequality,
\begin{equation}\label{4.41}
E^B\mathbb{I}_{\tilde\Theta}(G_0(S_0)+\cL(T+2)(1+\sup_{0\leq{t}\leq{T}}S^B_t))
\leq C^{(11)}(P(\tilde\Theta))^{1/2}
\end{equation}
for some constant $C^{(11)}$ independent of $n$. From (\ref{4.18}),
(\ref{4.40}) and (\ref{4.41}),
\begin{eqnarray}\label{4.41+}
&E^B\Gamma_2\leq
E^B(Q^{B}(\sigma\wedge\theta^{(n)}_n,\tau\wedge\theta^{(n)}_n)-
Q^{B,n}(\frac{\zeta{T}}{n},\frac{\eta{T}}{n}))^++\\
&C^{(11)}(P(\tilde\Theta))^{1/2} \leq
C^{(2)}n^{-1/4}(\ln{n})^{3/4}+C^{(11)}(P(\tilde\Theta))^{1/2}+\nonumber\\
&E^B(Q^{B}(\sigma\wedge\theta^{(n)}_n,\tau\wedge\theta^{(n)}_n)-
Q^{B}(\theta^{(n)}_{\zeta}, \theta^{(n)}_{\eta}))^{+}.\nonumber
\end{eqnarray}
From the definitions it follows easily that if
$\sigma\wedge\theta^{(n)}_n<\tau\wedge\theta^{(n)}_n$ then
$\zeta<\eta$. Hence, from (\ref{4.3}), (\ref{4.36}) and Lemma 4.4 in
\cite{DK2} we obtain that
\begin{eqnarray}\label{4.42}
&E^B(Q^{B}(\sigma\wedge\theta^{(n)}_n,\tau\wedge\theta^{(n)}_n)-Q^{B}
(\theta^{(n)}_{\zeta},
\theta^{(n)}_{\eta}))^{+}\leq\\
&E^B|e^{-r(\sigma\wedge\tau\wedge\theta^{(n)}_n)}G_{\sigma\wedge\tau\wedge
\theta^{(n)}_n}(S^B)-
e^{-r\theta^{(n)}_{\zeta\wedge\eta}}G_{\theta^{(n)}_{\zeta\wedge\eta}}
(S^{B})|+\nonumber\\
&E^B|e^{-r(\sigma\wedge\tau\wedge\theta^{(n)}_n)}F_{\sigma\wedge\tau
\wedge\theta^{(n)}_n}(S^B)-
e^{-r\theta^{(n)}_{\zeta\wedge\eta}}F_{\theta^{(n)}_{\zeta\wedge\eta}}
(S^{B})|\leq\nonumber\\
&C^{(4)}(E^{B}(w_n)^2)^{1/2}+ C^{(5)}(E^{B}(w_n)^2)^{1/4}
\leq{C^{(12)} n^{-1/4}}\nonumber
\end{eqnarray}
for some constant $C^{(12)}$. Finally, we estimate
$P(\tilde\Theta)$. Observe that
$\sigma\wedge\tau\wedge\theta^{(n)}_n
\geq\theta^{(n)}_{(\zeta\wedge\eta-1)^+}$. Indeed, from the
definitions it follows that $\tau\geq\theta^{(n)}_{(\eta-1)^+}$. If
$\sigma=T$ then
$\sigma\wedge\tau\wedge\theta^{(n)}_n=\tau\wedge\theta^{(n)}_n\geq
\theta^{(n)}_{(\eta-1)^+}\geq\theta^{(n)}_{(\zeta\wedge\eta-1)^+}$.
If $\sigma<T$ then $\sigma=\theta^{(n)}_{\zeta}$, and so
$\sigma\wedge\tau\wedge\theta^{(n)}_n\geq\theta^{(n)}_{\zeta}\wedge
\theta^{(n)}_{(\eta-1)^+} \geq\theta^{(n)}_{(\zeta\wedge\eta-1)^+}$.
 Thus
\begin{eqnarray}\label{4.43}
&\tilde\Theta\subseteq\bigg\{
 {\frac{\max_{0\leq{k}\leq{\zeta\wedge\eta}}S^{B,n}_{kT/n}}
 {\sup_{0\leq{t}\leq{\sigma\wedge\tau\wedge\theta^{(n)}_n}}
{S^B_t}}}>e^{n^{-1/3}}\bigg\}\bigcup  \bigg\{
{\frac{\min_{0\leq{k}\leq{\zeta\wedge\eta}}S^{B,n}_{kT/n}}
 {\inf_{0\leq{t}\leq{\sigma\wedge\tau\wedge\theta^{(n)}_n}}
{S^B_t}}}<e^{-n^{-1/3}}\bigg\}\\
 &\subseteq\bigg\{
 {\frac{\max_{0\leq{k}\leq{\zeta\wedge\eta}}S^{B,n}_{kT/n}}
 {\max_{0\leq{k}\leq{(\zeta\wedge\eta-1)^+}}
{S^B_{\theta^{(n)}_k}}}}>e^{n^{-1/3}}\bigg\}\ \bigcup \bigg\{
{\frac{\min_{0\leq{k}\leq{\zeta\wedge\eta}}S^{B,n}_{kT/n}}
 {\min_{0\leq{k}\leq{(\zeta\wedge\eta-1)^+}}
{S^B_{\theta^{(n)}_k}}}}<e^{-n^{-1/3}}\bigg\}\subseteq
\nonumber\\
&\bigg\{
 \max_{0\leq{k}\leq{n-1}}
 \max\bigg(\frac{S^B_{\theta^{(n)}_{k+1}}}{S^{B,n}_{kT/n}},
 \frac{S^{B,n}_{kT/n}}{S^B_{\theta^{(n)}_{k+1}}}\bigg)>e^{n^{-1/3}}
\bigg\}\subseteq \bigg\{
 \max_{0\leq{k}\leq{n-1}}\big(r|\theta^{(n)}_{k+1}-\frac{kT}{n}|\nonumber\\
 &+\kappa|B^{*}_{\theta^{(n)}_{k+1}}-B^{*}_{\theta^{(n)}_k}|\big)>n^{-1/3}
  \bigg\} \subseteq\{ r(u_n+w_n)+\kappa\sqrt\frac{T}{n}>n^{-1/3}\}. \nonumber
\end{eqnarray}
From (\ref{4.3}), (\ref{4.43}) and the Markov inequality it follows
that
\begin{equation}\label{4.44}
P(\tilde\Theta)\leq
nE^B\bigg(r(u_n+w_n)+\kappa\sqrt\frac{T}{n}\bigg)^3\leq
C^{(13)}n^{-1/2}
\end{equation}
for some constant $C^{(13)}$ independent of $n$. Since $\delta$ is
arbitrary then combining (\ref{4.28}), (\ref{4.35}), (\ref{4.38}),
(\ref{4.39}), (\ref{4.41+}) and (\ref{4.44}) we conclude that there
exists a constant $C^{(14)}$ such that
\begin{equation}\label{4.44+}
R^{J_n}(\pi,\sigma)-R^I_n(x)=R^{J_n}(\pi,\sigma)-R^{B,I}_n(\pi,\zeta)
\leq C^{(14)}n^{-1/4}(\ln{n})^{3/4}.
\end{equation}
By (\ref{4.44+}) and Lemma \ref{lem3.3} it follows that for $n$
which satisfy $\min(\frac{R}{S_0},\frac{S_0}{L})>e^{n^{-1/3}}$ we
have
\begin{equation}\label{4.45}
R^I(x)-R^I_n(x)\leq
R^I(x)-R^{J_n}(x)+R^{J_n}(\pi,\sigma)-R^I_n(x)\leq
A_1n^{-1/4}+C^{(14)}n^{-1/4}(\ln{n})^{3/4}.
\end{equation}
From (\ref{4.27+}) and (\ref{4.45}) we derive (\ref{2.19}) and
complete the proof of Theorem \ref{thm2.2}.

Next, we prove Theorem \ref{thm2.4}. Let $H=(L,R)$ be an open
interval as before and for any $n$ set
$H_n=(L\exp(-n^{-1/3}),R\exp(n^{-1/3}))$. Fix $n$ and let
$(\pi^{H_n}_n,\sigma^{H_n}_n)\in\mathcal{A}^{\xi,n}(x)\times
\mathcal{T}^{\xi}_{0n}$ be the optimal hedge given by (\ref{3.15}).
Using (\ref{4.44+}) for $I=H_n$ we obtain that
\begin{equation}\label{4.46}
R^H(\psi_n(\pi^{H_n}_n),\phi_n(\sigma^{H_n}_n))\leq
R^{H_n}_n(x)+A_1n^{-1/4}+C^{(14)}n^{-1/4}(\ln{n})^{3/4}.
\end{equation}
Thus
\begin{equation}\label{4.47}
\limsup_{n\rightarrow\infty}R^H(\psi_n(\pi^{H_n}_n),\phi_n(\sigma^{H_n}_n))
\leq \limsup_{n\rightarrow\infty}R^{H_n}_n(x).
\end{equation}
For any $\epsilon>0$ denote
$J_{\epsilon}=(Le^{-\epsilon},Re^\epsilon)$. Since $H_n\subseteq
J_\epsilon$
 for sufficiently large $n$ then from (\ref{4.47}) and
Theorem \ref{thm2.2} we obtain that for any $\epsilon>0$,
\begin{equation}\label{4.48}
\limsup_{n\rightarrow\infty}R^H(\psi_n(\pi^{H_n}_n),\phi_n(\sigma^{H_n}_n))
\leq
\limsup_{n\rightarrow\infty}R^{J_\epsilon}_n(x)=R^{J_\epsilon}(x).
\end{equation}
By (\ref{4.48}) and Lemma \ref{lem3.3},
\begin{equation}\label{4.49}
\limsup_{n\rightarrow\infty}R^H(\psi_n(\pi^{H_n}_n),\phi_n(\sigma^{H_n}_n))
\leq \lim_{\epsilon\rightarrow{0}}R^{J_\epsilon}(x)=R^{H}(x)
\end{equation}
which completes the proof of Theorem \ref{thm2.4}.

Next, we prove Theorem \ref{thm2.1}. Let $I=(L,R)$ be an open
interval as before. Assume that $\mu=0$. In this case
$P^B=\tilde{P}^B$ and $\tilde{P}^{\xi}_n=P^{\xi}_n$ for any $n$.
Thus $\cV^I=R^I(0)$ and $\cV^I_n=R^I_n(0)$. Hence, using the same
procedure as in first part of the proof of Theorem \ref{thm2.2} and
taking into account that the value of the portfolios $\pi,\pi'$ is
zero (which means that $C(f_\delta)=\tilde{C}(f_\delta)=0$ and so we
can let $\delta\downarrow{0}$ in (\ref{4.26})) we obtain that there
exist constants $C^{(15)}$ and $C^{(16)}$ such that for any
$\epsilon>0$,
\begin{equation}\label{4.50}
\cV^I_n-\cV^{I_\epsilon}\leq {C^{(15)}n^{-1/4}(\ln{n})^{3/4}}+
C^{(16)}\sqrt{\frac{n^{-3/2}}{\epsilon^3}}
\end{equation}
where $I_\epsilon=(Le^{-\epsilon},Re^{\epsilon})$. Taking
$\epsilon=n^{-1/3}$ we obtain by (\ref{3.19}) and (\ref{4.50}) that
\begin{equation}\label{4.51}
\cV^I_n-\cV^I\leq C^{(15)}n^{-1/4}(\ln{n})^{3/4}+
(C^{(16)}+A_2)n^{-1/4}.
\end{equation}
From Theorem \ref{thm2.2} it follows that there exists a constant
$C^{(17)}$ such that
\begin{equation*}
\cV^I-\cV^I_n=R^I(0)-R^I_n(0)\leq C^{(17)}n^{-1/4}(\ln{n})^{3/4}.
\end{equation*}
This together with (\ref{4.51}) completes the proof of Theorem
\ref{thm2.1}.

Finally, we prove Theorem \ref{thm2.3}. Let $H=(L,R)$ be an open
interval and $n$ be a natural number. Set
$H_n=(L\exp(-n^{-1/3}),R\exp(n^{-1/3}))$ and let
$(\pi_n,\sigma_n)\in\mathcal{A}^{\xi,n}(\cV^{H_n}_n)\times\mathcal{T}^\xi_{0n}$
be a perfect hedge for a double barrier option in the $n$--step CRR
market with the barriers $L\exp(-n^{-1/3}),R\exp(n^{-1/3})$, i.e.
for any $k\leq{n}$,
\begin{equation}\label{4.52}
\tilde{V}^{\pi_n}_{\sigma_n\wedge{k}}\geq Q^{H_n,n}(\sigma_n,k).
\end{equation}
Set
$(\pi,\zeta)=(\psi_n(\pi_n),\Pi_n(\sigma_n))\in\mathcal{A}^{B,n}(V^{H_n}_n)
\times\mathcal{T}^{B,n}_{0,n}$. From (\ref{4.52}) and the definition
of $\Pi_n$ we obtain that for any $k\leq{n}$,
\begin{equation}\label{4.53}
\tilde{V}^{\pi}_{\zeta\wedge{k}}=\Pi_n(\tilde{V}^{\pi_n}_{\sigma_n\wedge{k}})
\geq \Pi_n(Q^{H_n,n}(\sigma_n,k))=Q^{B,H_n,n}(\zeta,k)
\end{equation}
implying that $R^{B,H_n}_n(\pi,\zeta)=0$. Set
$\sigma=\phi_n(\sigma_n)\in\mathcal{T}^{B}_{0T}$ then
$\sigma=(T\wedge\theta^{(n)}_{\zeta})\mathbb{I}_{\zeta<n}+
T\mathbb{I}_{\zeta=n}$. Hence, using (\ref{4.44+}) for $I=H_n$ we
obtain that
\begin{equation}\label{4.54}
R^H(\pi,\sigma)\leq R^{B,H_n}_n(\pi,\zeta)+
C^{(14)}n^{-1/4}(\ln{n})^{3/4}=C^{(14)}n^{-1/4}(\ln{n})^{3/4}
\end{equation}
completing the proof.
\begin{rem}\label{rem4.2}
Consider another definition of the discounted payoff function where
in place of (\ref{2.8}) we set
\begin{equation}\label{4.55}
Q^{B,I}_1(t,s)=e^{-r(t\wedge{s})}(G_t(S^B)\mathbb{I}_{s<t}+
Y^I_t\mathbb{I}_{t\leq{s}})
\end{equation}
which means that the seller pays for cancellation an amount which
does
 not depend on the barriers. For such discounted payoff function
the option price will be equal to the original option price $\cV^I$
given by (\ref{2.8++}) and for any initial capital $x$ the shortfall
risk will be equal to $R^I(x)$ given by (\ref{2.11++}). Indeed, the
terms in the formula (\ref{4.55}) for the discounted payoff function
are not less than the corresponding terms for the payoff function
given by (\ref{2.8}). On the other hand, for any
$\pi\in\mathcal{A}^{B}$ and $\sigma\in\mathcal{T}^{B}_{0T}$,
\begin{equation*}
(Q^{B,I}(\sigma,\tau)-\tilde{V}^{\pi}_{\sigma\wedge\tau})^+=
(Q^{B,I}_1(\tilde\sigma,\tau)-\tilde{V}^{\pi}_{\tilde\sigma\wedge\tau})^+
\end{equation*}
where
$\tilde\sigma=\sigma\mathbb{I}_{\sigma<\tau_I}+T\mathbb{I}_{\sigma\geq\tau_I}$.
Thus for any portfolio $\pi\in\mathcal{A}^{B}$,
\begin{equation}\label{4.56}
R^I(\pi)=\inf_{\sigma\in\mathcal{T}^{B}_{0T}}\sup_{\tau\in\mathcal{T}^{B}_{0T}}
{E}^B(Q^{B,I}(\sigma,\tau)-\tilde{V}^{\pi}_{\sigma\wedge\tau})^+\geq
\inf_{\sigma\in\mathcal{T}^{B}_{0T}}\sup_{\tau\in\mathcal{T}^{B}_{0T}}
{E}^B(Q^{B,I}_1(\sigma,\tau)-\tilde{V}^{\pi}_{\sigma\wedge\tau})^+
\end{equation}
and we conclude that (\ref{4.56}) is, in fact, an equality which
proves that for a given portfolio the shortfall risk remains as
before. Since option prices can be represented as shortfall risk for
the case where $P=\tilde{P}$ and the initial capital is $0$ then it
follows that the option price remains as before, as well. The same
holds true for CRR markets. We note that the proof of our main
results for the discounted payoff function
 given by (\ref{4.55}) becomes a bit simpler than for the original definitions
 but the latter seem to more natural.
\end{rem}

\section{The knock--in case}\label{sec5}\setcounter{equation}{0}
In this section we present results similar to Theorems
\ref{thm2.1}--\ref{thm2.4}  (with a little bit different estimates)
for knock--in barrier options. For a given open interval $I=(L,R)$
the payoff processes in the BS model and the $n$--step CRR market
are defined in this case by
\begin{equation}\label{5.1}
\mathbb{X}_t=G_t(S^B), \
\mathbb{Y}^I_t=F_t(S^B)\mathbb{I}_{t\geq\tau_I} \ \ \mbox{and} \ \
\mathbb{X}^{(n)}_k=G_{\frac{kT}{n}}(S^B), \
\mathbb{Y}^{I,n}_k=F_{\frac{kT}{n}}(S^B)\mathbb{I}_{k\geq\tau^{(n)}_I},
\end{equation}
respectively. Notice that the seller will pay for cancellation an
amount which does not depend on the barriers. If we would define the
high payoff process $\mathbb{X}^I_t$, $t\geq{0}$ in a way similar
 to the low payoff process $\mathbb{Y}^I_t$, $t\geq{0}$, namely,
 $\mathbb{X}^I_t=G_t(S^B)\mathbb{I}_{t\geq\tau_I}$
then the seller could cancel the contract at the moment $t=0$
without paying anything to the buyer which would make such contract
 worthless.

Now, for the BS model we define the option price and the shortfall
risks by
\begin{eqnarray}\label{5.2}
&\tilde\cV^I=\inf_{\sigma\in\mathcal{T}^{B}_{0T}}
\sup_{\tau\in\mathcal{T}^{B}_{0T}}
\tilde{E}^B\tilde{Q}^{B,I}(\sigma,\tau),   \
\tilde{R}^{I}(\pi,\sigma)=\sup_{\tau\in\mathcal{T}^B_{0T}}
E^B(\tilde{Q}^{B,I}(\sigma,\tau)\quad\quad\\
&-\tilde{V}^{\pi}_{\sigma\wedge\tau})^+, \ \
\tilde{R}^{I}(\pi)=\inf_{\sigma\in\mathcal{T}^{B}_{0T}}\tilde{R}^{I}(\pi,\sigma)
\ \ \mbox{and} \ \ \tilde{R}^{I}(x)=\inf_{\pi\in\mathcal{A}^B(x)}
\tilde{R}^{I}(\pi)\nonumber
\end{eqnarray}
where $\tilde{Q}^I(t,s)=e^{-r(t\wedge
s)}(\mathbb{X}_t\mathbb{I}_{t<s}+\mathbb{Y}^I_t\mathbb{I}_{s\leq{t}})$
is the discounted payoff function. For the $n$--step CRR market the
corresponding definitions are
\begin{eqnarray}\label{5.3}
&\tilde\cV^{I}_n=\min_{\zeta\in\mathcal{T}^{\xi}_{0n}}\max_{\eta\in
\mathcal{T}^{\xi}_{0n}}\tilde{E}^{\xi}_n\tilde{Q}^{I,n}(\zeta,\eta),
\
\tilde{R}^I_n(\pi,\sigma)=\max_{\tau\in\mathcal{T}^\xi_{0n}}E^{\xi}_n(Q
^{I,n}(\sigma,\tau)\quad\quad\\
&-\tilde{V}^{\pi}_{\sigma\wedge\tau})^+, \ \
\tilde{R}^I_n(\pi)=\min_{\sigma\in\mathcal{T}^{\xi}_{0n}}\tilde{R}^I_n(\pi,
\sigma)\ \ \mbox{and} \ \ \tilde{R}^I_n(x)=
\inf_{\pi\in\mathcal{A}^{\xi,n}(x)}\tilde{R}^I_n(\pi) \nonumber
\end{eqnarray}
where $\tilde{Q}^{I,n}(k,l)=(1+r_n)^{-k\wedge
l}(\mathbb{X}^{(n)}_k\mathbb{I}_{k<l}+\mathbb{Y}^{
I,n}_l\mathbb{I}_{l\leq{k}})$ is the discounted payoff function.
Denote also by $Q^{(n)}(k,l)=(1+r_n)^{-k\wedge
l}(G_{\frac{kT}{n}}(S^{n})\mathbb{I}_{k<l}+
F_{\frac{lT}{n}}(S^{n})\mathbb{I}_{l\leq k})$ the regular payoff and
let $\cV_n=\min_{\zeta\in\mathcal{T}^{\xi}_{0n}}\max_{\eta\in
\mathcal{T}^{\xi}_{0n}}\tilde{E}^{\xi}_n{Q}^{(n)}(\zeta,\eta)$ be
the option price for this payoff.
\begin{thm}\label{thm5.1} Let $I=(L,R)$ be an open interval.

(i) For each $\epsilon>0$ there exists a constant
$\tilde{C}_{1,\epsilon}$ such that for any $n\in\mathbb{N}$,
\begin{equation}\label{5.4}
|\tilde\cV^I-\tilde\cV^{I}_n|\leq \tilde{C}_{1,\epsilon}
n^{-\frac{1}{4}+\epsilon}.
\end{equation}
(ii) For each initial capital $x$,
\begin{equation}\label{5.5}
lim_{n\rightarrow\infty}\tilde{R}^I_n(x)=\tilde{R}^I(x).
\end{equation}
Furthermore, for each $\epsilon>0$ there exists a constant
$\tilde{C}_{2,\epsilon}$ such that for any $x$ and $n\in\mathbb{N}$,
\begin{equation}\label{5.6}
\tilde{R}^I(x)\leq \tilde{R}^I_n(x)+\tilde{C}_{2,\epsilon}
n^{-\frac{1}{4}+\epsilon}.
\end{equation}
(iii) For each $n\in\bbN$ let
$(\pi^p_n,\sigma^p_n)\in\mathcal{A}^{\xi,n}(\tilde\cV^{I}_n)\times
\mathcal{T}^\xi_{0n}$ be a perfect hedge for a double barrier
knock-in option as above in the $n$--step CRR market with the
barriers $L,R$. Then for any $\epsilon>0$ and $n\in\mathbb{N}$,
\begin{equation}\label{5.7}
\tilde{R}^I(\psi_n(\pi^p_n),\phi_n(\sigma^p_n))\leq\tilde{C}_{2,\epsilon}
n^{-\frac{1}{4}+\epsilon}.
\end{equation}
(iv) For any $n\in\bbN$ let
$(\tilde\pi^I_n,\tilde\sigma^I_n)\in\mathcal{A}^{\xi,n}(x)\times
\mathcal{T}^{\xi}_{0n}$ be the optimal hedge which is given by
(\ref{5.13}) below. Then
\begin{equation}\label{5.8}
lim_{n\rightarrow\infty}\tilde{R}^{I}(\psi_n(\tilde\pi^I_n),\phi_n(\tilde
\sigma^I_n))=\tilde{R}^I(x).
\end{equation}
All the constants above are not depend on the interval $I$.
\end{thm}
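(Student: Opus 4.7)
The plan is to mirror the framework of Sections~\ref{sec3}--\ref{sec4} with the knock--out payoffs replaced by the knock--in payoffs (\ref{5.1}). First I set up the analogous dynamic programming: a recursion $\tilde J^{I,n}_k$ identical to (\ref{3.13}) except that $g^n_k q^{I,n}_k$ is replaced by $g^n_k$ alone (the seller's cancellation payoff $G_t(S^B)$ is barrier--independent) and $f^n_k q^{I,n}_k$ by $f^n_k(1-q^{I,n}_k)$ (the buyer's exercise payoff activates only after the stock leaves $I$). Lemma~\ref{lem3.1} carries over verbatim and yields the optimal CRR hedge $(\tilde\pi^I_n,\tilde\sigma^I_n)\in\cA^{\xi,n}(x)\times\cT^\xi_{0n}$ via formula (\ref{3.15}) with the modified data---this is the hedge referred to as (5.13) in part (iv). An analog of Lemma~\ref{lem3.2} then gives
\[
\tilde R^I_n(x)=\tilde J^{I,n}_0(x)=\tilde R^{B,I}_n(\psi_n(\tilde\pi^I_n),\Pi_n(\tilde\sigma^I_n)).
\]

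The decisive new ingredient is a knock--in analog of Lemma~\ref{lem3.3}. For $H=(L,R)$ and $H_{-\epsilon}=(Le^\epsilon,Re^{-\epsilon})\subset H$, the knock--in monotonicity is reversed relative to knock--out: $\tilde R^{H_{-\epsilon}}(x)\geq\tilde R^H(x)$, and the target is
\[
\tilde R^{H_{-\epsilon}}(x)-\tilde R^H(x)\leq \tilde A_\delta\,\epsilon^{1-\delta}\quad\mbox{for every }\delta>0,
\]
with $\tilde A_\delta$ independent of the barriers. Following the template of Lemma~\ref{lem3.3}, I take a near--optimal hedge $(\pi_1,\sigma_1)$ for $\tilde R^H(x)$, modify it to $(\pi_2,\sigma_2)$ with $\sigma_2=\sigma_1\wedge\tau_{H_{-\epsilon}}$ on $\{\sigma_1<\tau_H\}$ and $\sigma_2=T$ otherwise, and freeze the portfolio at $\sigma_1$ using the continuity of $\tilde V^{\pi_1}$. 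The resulting difference reduces to three types of terms: (a) continuity--of--$G,F$ errors over the random interval $[\tau_{H_{-\epsilon}},\tau_H]$, handled as in (\ref{3.27})--(\ref{3.31}); (b) martingale oscillation of $\tilde V^{\pi_1}$ over the same interval, controlled by Burkholder--Davis--Gundy; and (c) a genuinely new term $E^B[F_\tau(S^B)\,\mathbb{I}_{\tau_{H_{-\epsilon}}\leq\tau<\tau_H,\,\tau\leq\sigma_2}]$, which unlike in the knock--out case does not vanish. H\"older's inequality with exponent $1/(1-\delta)$ splits (c): $L^p$-integrability of $F_\tau(S^B)\leq F_0(S_0)+\cL(T+2)(1+\sup_{t\leq T}S^B_t)$ via (\ref{4.4}) handles the function, while the indicator probability is bounded using the strong Markov property at $\tau_{H_{-\epsilon}}$ and the hitting--time density used in (\ref{3.33}), giving a $P_W$-probability of order $\epsilon$ and, after H\"older, the claimed $\epsilon^{1-\delta}$ rate.

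With this lemma in hand, parts (i)--(iv) follow the template of Section~\ref{sec4} with the barrier shift reversed: one uses $\widehat I_n=(Le^{n^{-1/3}},Re^{-n^{-1/3}})\subset I$ (narrowed rather than widened, matching the reversed monotonicity). For (ii), the inequality $\tilde R^{H_{-\epsilon}}(x)\geq\limsup_n\tilde R^I_n(x)$ is obtained verbatim as in (\ref{4.1})--(\ref{4.27}) via $f_\delta$--smoothed near--optimal portfolios and Skorohod embedding, and (\ref{5.5}) then follows from the new lemma as $\epsilon\downarrow 0$. For (\ref{5.6}), one embeds the optimal CRR hedge for barriers $I$ into the BS model; the analog of the small--event estimate (\ref{4.43})--(\ref{4.44}) remains valid and combines with the knock--in lemma at $\epsilon=n^{-1/3}$ to yield a bound of order $n^{-1/4}(\ln n)^{3/4}+n^{-(1-\delta)/3}$, which is $O(n^{-1/4+\epsilon})$ upon choosing $\delta$ small enough in terms of $\epsilon$. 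Part (i) follows from (ii) by the $\mu=0$ reduction of Theorem~\ref{thm2.1}, and parts (iii)--(iv) follow by applying (\ref{5.6}) to the perfect and to the optimal hedge, respectively.

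The principal obstacle is the $\epsilon^{1-\delta}$ rate in the new Lemma~\ref{lem3.3} analog. In the knock--out case, the factor $F_\tau(S^B)$ vanishes on the hitting--time gap $\{\tau_H\leq\tau<\tau_{H_\epsilon}\}$, which yields the clean $\epsilon^{3/4}$ rate via $L^{4/3}$ control of the gap length. In the knock--in case this vanishing is absent, forcing a H\"older split between the size of $F_\tau$ and the probability of the gap; this split costs an arbitrarily small $\delta>0$ in the exponent and degrades the final rate from $n^{-1/4}(\ln n)^{3/4}$ to $n^{-1/4+\epsilon}$.
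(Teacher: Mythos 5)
Your DP setup for the knock--in case (replacing $g^n_kq^{I,n}_k$ by $g^n_k$ and $f^n_kq^{I,n}_k$ by $f^n_k(1-q^{I,n}_k)$) is exactly the paper's (\ref{5.10})--(\ref{5.13}), and your knock--in analogue of Lemma~\ref{lem3.3} with rate $\epsilon^{1-\delta}$, proved via the extra term $\mathbb I_{\tau_{H_{-\epsilon}}\le\tau<\tau_H}F_\tau(S^B)$, matches the paper's Lemma~\ref{lem5.1} in both the mechanism and the rate (their $\epsilon^{1/\gamma}$, $\gamma>1$, is the same family of exponents). However, there is a genuine gap in the way you carry this into the proof of (\ref{5.6}): you propose to shift the barriers by $n^{-1/3}$ as in the knock--out case, but this choice fails. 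In the knock--in problem there is a new class of bad events --- the paper's $\Xi_1$ in (\ref{5.43}), namely $\{\tau\ge\tau_{J^{(n,\beta)}}\}\cap\{\tau\wedge\theta^{(n)}_n<\tau_{I^{(n,\beta)}}\}$ --- which is controlled not by the $\sqrt{T/n}$ Skorohod overshoot (as $\tilde\Theta$ in the knock--out case was) but by the Brownian fluctuation over the residual interval $[\theta^{(n)}_n\wedge T,\,T]$, which is of order $n^{-1/4}$ since $|T-\theta^{(n)}_n|$ is of order $n^{-1/2}$. With a shift of $n^{-1/3}<n^{-1/4}$ the Markov/Burkholder--Davis--Gundy estimate gives $P(\Xi_1)\le \tilde K^{(m)}\,n^{-m/4}/(n^{-1/3})^m=\tilde K^{(m)}n^{m/12}$, which diverges; your claim that ``the analog of the small--event estimate (\ref{4.43})--(\ref{4.44}) remains valid'' is therefore incorrect for this part. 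The paper instead shifts by $2n^{-1/4+\beta}$ (using the two nested intervals $I^{(n,\beta)}\subset J^{(n,\beta)}$), obtaining $P(\Xi_1)\le\tilde K^{(m)}n^{-m\beta}$, and then jointly optimizes over $\beta$, $\gamma$ and $m$ to get $n^{-1/4+\epsilon}$. This is the step that actually dictates the degraded rate, in combination with needing the Lemma~\ref{lem5.1} exponent close to $1$; your explanation attributes the degradation solely to the H\"older split in the lemma, which misses this point.

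Two secondary issues: your Lemma~\ref{lem3.3}--analogue is stated for the infimum $\tilde R^H(x)$, but parts (iii) and (iv) concern $\tilde R^I(\pi,\sigma)$ for specific embedded hedges, so you need the per--hedge bound $\tilde R^H(\pi,\sigma)-\tilde R^{H_\epsilon}(\pi,\sigma)\le A_\gamma\epsilon^{1/\gamma}$ exactly as in the paper's Lemma~\ref{lem5.1}; your construction via $(\pi_1,\sigma_1)\mapsto(\pi_2,\sigma_2)$ produces this, but it must be stated and invoked at the hedge level. Also, for part~(i) the paper splits into the cases $\exp(n^{-1/4+\epsilon/2})\ge\min(R/S_0,S_0/L)$ (handled via Corollary~\ref{cor5.1} plus the regular--option estimate $|\cV_n-\cV|\le Cn^{-1/4}(\ln n)^{3/4}$ from \cite{Ki2}) and its complement; this case split is absent in your sketch, and without it the $\mu=0$ reduction alone does not cover barriers that degenerate toward $S_0$.
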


In order to prove Theorem 5.1 we should establish a result similar
to Lemma \ref{lem3.2}. For each open interval $H$ set
$\mathbb{Y}^{B,H,n}_k=F_{\frac{kT}{n}}(S^{B,n})\mathbb{I}_{k\geq\tau^{B,n}_H}$,
$\mathbb{X}^{B,n}_k=G_{\frac{kT}{n}}(S^{B,n})$ and
$\tilde{Q}^{H,B,n}(k,l)=(1+r_n)^{-k\wedge l}(\mathbb{X}^{B,n}_k
\mathbb{I}_{k<l}+\mathbb{Y}^{ H,B,n}_l\mathbb{I}_{l\leq{k}})$,
$k,l\leq{n}$. Similarly to (\ref{3.8})--(\ref{3.10}) define the
shortfall risk by
\begin{eqnarray}\label{5.9}
&\tilde{R}^{B,H}_n(\pi,\zeta)=\sup_{\eta\in\mathcal{T}^{B,n}_{0n}}
E^B(\tilde{Q}^{B,H,n}
(\zeta,\eta)-\tilde{V}^{\pi}_{\theta^{(n)}_{\zeta\wedge\eta}}) ^+,\\
&\tilde{R}^{B,H}_n(\pi)=\inf_{\zeta\in\mathcal{T}^{B,n}_{0n}}
\tilde{R}^{B,H}_n(\pi,\zeta)\,\,\mbox{and}\,\,
\tilde{R}^{B,H}_n(x)=\inf_{\pi\in\mathcal{A}^{B,n}(x)}\tilde{R}^{B,H}_n(\pi).
\nonumber
\end{eqnarray}
Similarly to (\ref{3.9}) and (\ref{3.10}) for any
$\pi\in{\mathcal{A}^{B,n}}$ set
\begin{eqnarray}\label{5.10}
&\tilde{U}^{H,\pi}_n=((1+r_n)^{-n}\mathbb{Y}^{B,H,n}_n-
\tilde{V}^\pi_{\theta^{(n)}_n})^+,
 \
 \tilde{U}^{H,\pi}_k=\min\bigg(((1+r_n)^{-k}\mathbb{X}^{B,n}_k-\\
&\tilde{V}^\pi_{\theta^{(n)}_k})^+
,\max\bigg(((1+r_n)^{-k}\mathbb{Y}^{B,H,n}_k-\tilde{V}^{\pi}_{\theta^{(n)}_k})^+,
E^{B}(\tilde{U}^{H,\pi}_{k+1}|\mathcal{F}^{B}_{\theta^{(n)}_{k}})\bigg)\bigg),
\ \ k<n \nonumber\\
&\mbox{and} \ \
\tilde\zeta(H,\pi)=\min{\{k|((1+r_n)^{-k}\mathbb{X}^{B,n}_k-
\tilde{V}^{\pi}_{\theta^{(n)}_k})^{+}=\tilde{U}^{H,\pi}_{k}\}}\wedge{n}.
\nonumber
\end{eqnarray}
Similarly to (\ref{3.2}) and (\ref{3.4}) for any
$\pi\in\mathcal{A}^{\xi,n}$
 define
\begin{eqnarray}\label{5.11}
&\tilde{W}^{H,\pi}_n=((1+r_n)^{-n}\mathbb{Y}^{H,n}_n-\tilde{V}^\pi_n)^+,
\ \
\tilde{W}^{H,\pi}_k=\min\bigg(((1+r_n)^{-k}\mathbb{X}^{(n)}_k-\\
&\tilde{V}^{\pi}_k)^{+}
,\max\bigg(((1+r_n)^{-k}\mathbb{Y}^{H,n}_k-\tilde{V}^{\pi}_k)^{+},
E^{\xi}_n(\tilde{W}^{H,\pi}_{k+1} |\mathcal{F}^{\xi}_k)\bigg)\bigg),
\ \
k<n \nonumber\\
&\mbox{and} \ \
\tilde\sigma(H,\pi)=\min{\{k|(1+r_n)^{-k}\mathbb{X}^{(n)}_k-\tilde{V}^{\pi}_k)^{+}=
\tilde{W}^{H,\pi}_k\}}\wedge{n}. \nonumber
\end{eqnarray}
 For $k\leq{n}$ and $x_1,...,x_k$ set
\begin{equation*}
\tilde{q}^{H,n}_k(x_1,...,x_k)=1-\mathbb{I}_{{[\min_{0\leq{i}\leq{k}}
\psi^{x_1,...,x_i}(\frac{iT}{n}),
 \max_{0\leq{i}\leq{k}}\psi^{x_1,...,x_i} (\frac{iT}{n})]
\subset{(L,R)}}}
\end{equation*}
with the functions $\psi^{x_1,...,x_i}$ introduced after
(\ref{3.11}). Similarly to (\ref{3.13}) define a sequence
${\{\tilde{J}^{H,n}_k\}}_{k=0}^n$ of functions
$\tilde{J}^{H,n}_k:[0,\infty)\times\mathbb{R}^k\rightarrow
\mathbb{R}$ by the following backward recursion
\begin{eqnarray}\label{5.12}
&\tilde{J}^{H,n}_n(y,u_1,u_2...,u_n)=(f^n_n(u_1,...,u_n)\tilde{q}^{H,n}_n(u_1,...,u_n)-y)^+
\  \ \mbox{and}\\
&\tilde{J}^{H,n}_k(y,u_1,...,u_k)=\min\bigg((g^n_k(u_1,...,u_k)
-y)^+\nonumber,
\max\bigg((f^n_k(u_1,...,u_k)\tilde{q}^{H,n}_k(u_1,...,u_k)\nonumber\\
&-y)^+,\inf_{u\in K_n(y)}\big(p^{(n)}
\tilde{J}^{H,n}_{k+1}(y+ua^{(n)}_1,u_1,...,u_k,\sqrt{\frac{T}{n}})+(1-p^{(n)})\times\nonumber\\
&\tilde{J}^{H,n}_{k+1}(y+ua^{(n)}_2,u_1,...,u_k,-\sqrt{\frac{T}{n}})\big)
\bigg)\bigg)  \ \mbox{for}  \ k=n-1,n-2,...,0.\nonumber
\end{eqnarray}
Set also $\tilde{h}^{H,n}_k(y,x_1,...,x_k)=argmin_{u\in K_n(y)}
\big(p^{(n)}\tilde{J}^{H,n}_{k+1}(y+ua^{(n)}_1,u_1,
...,u_k,\sqrt{\frac{T}{n}})+
(1-p^{(n)})\tilde{J}^{H,n}_{k+1}(y+ua^{(n)}_2,u_1,...,u_k,-\sqrt{\frac{T}{n}})\big)$.
Finally, for any initial capital $x$ define the hedges
$(\tilde\pi^{H}_n,\tilde\sigma^{H}_n)\in{\mathcal{A}^{\xi,n}(x)\times
\mathcal{T}^{\xi}_{0n}}$ and $(\tilde\pi^{B,H}_n,\tilde\zeta^{H}_n)
\in\mathcal{A}^{B,n}(x)\times\mathcal{T}^{B,n}_{0,n}$ by
\begin{eqnarray}\label{5.13}
 &\tilde{V}^{\tilde\pi^{H}_n}_0=x, \ \  \tilde{V}^{\tilde\pi^{H}_n}_{k+1}=
 \tilde{V}^{\tilde\pi^{H}_n}_k+\tilde{h}^{H,n}_k(\tilde{V}^{\tilde
 \pi^{H}_n}_k,e^{\kappa\sqrt
 \frac{T}{n}
 \xi_1},...,e^{\kappa\sqrt\frac{T}{n}\xi_k})(e^{\kappa\sqrt\frac{T}{n}
 \xi_{k+1}}-1)\\
 & \mbox{for} \ \
 k>0, \ \ \tilde\sigma^{H}_n=\tilde\sigma(H,\tilde\pi^{H}_n) \ \ \mbox{and} \ \
 \tilde\pi^{B,H}_n=\psi_n(\tilde\pi^{H}_n),\ \
\tilde\zeta^{H}_n=\Pi_n(\tilde\sigma^{H}_n).  \nonumber
\end{eqnarray}
Using the same arguments as in Section 3 we obtain
\begin{equation}\label{5.14}
\tilde{R}^H_n(x)=\tilde{R}^H_n(\tilde\pi^{H}_n,\tilde\sigma^{H}_n)=
\tilde{J}^{H,n}_0(x)=
\tilde{R}^{B,H}_n(\tilde\pi^{B,H}_n,\tilde\zeta^{H}_n)=\tilde{R}^{B,H}_n(x).
\end{equation}
Next we derive estimates in the spirit of Lemmas \ref{lem3.3} and
\ref{lem3.4}.
\begin{lem}\label{lem5.1}
For any $\gamma>1$ there exists a constant $A_\gamma$ such that for
any open interval $H=(L,R)$, $\epsilon>0$ and a hedge
$(\pi,\sigma)\in\mathcal{A}^B\times\mathcal{T}^{B}_{0T}$
\begin{equation}\label{5.15}
 \tilde{R}^{H}(\pi,\sigma)-\tilde{R}^{H_\epsilon}(\pi,\sigma)\leq
A_\gamma \epsilon^{1/\gamma}
\end{equation}
where $H_\epsilon=(Le^{-\epsilon},Re^\epsilon)$.
\end{lem}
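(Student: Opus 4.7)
The plan is to mimic the proof of Lemma~\ref{lem3.3}, adapted to the direction $\tilde R^H\ge \tilde R^{H_\epsilon}$ (for knock-in, enlarging the interval makes activation harder, so $\mathbb{Y}^H_t\ge\mathbb{Y}^{H_\epsilon}_t$). Fix $\delta>0$ and choose $\tau^*\in\mathcal{T}^B_{0T}$ with
$\tilde R^H(\pi,\sigma)\le\delta+E^B(\tilde Q^{B,H}(\sigma,\tau^*)-\tilde V^\pi_{\sigma\wedge\tau^*})^+$. The naive choice $\tilde\tau=\tau^*$ for the lower bound on $\tilde R^{H_\epsilon}(\pi,\sigma)$ leaves the residual $E^B[F_{\tau^*}(S^B)\mathbb{I}_{\tau_H\le\tau^*<\tau_{H_\epsilon}}]$, which is $O(1)$ at $\tau^*=\tau_H$ and hence useless. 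Instead, introduce a delayed stopping time
\[
\tilde\tau=\tau^*\mathbb{I}_{B^c}+(\tau_{H_\epsilon}\wedge T)\mathbb{I}_{B},\qquad B=\{\tau_H\le\tau^*<\tau_{H_\epsilon}\},
\]
which is checked to lie in $\mathcal{T}^B_{0T}$, and lower-bound $\tilde R^{H_\epsilon}(\pi,\sigma)\ge E^B(\tilde Q^{B,H_\epsilon}(\sigma,\tilde\tau)-\tilde V^\pi_{\sigma\wedge\tilde\tau})^+$. Off $B$, the two integrands coincide, so the entire contribution is concentrated on $B$.

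On $B$, perform a case analysis according to the relative order of $\sigma,\tau^*,\tau_{H_\epsilon},T$. Applying $(a)^+-(b)^+\le(a-b)^+$ and writing $G=F+\Delta$ with $\Delta\ge0$ (which drops out after $(\cdot)^+$), the integrand difference is bounded by the sum of (i) a Lipschitz term from \eqref{2.2}, $\cL(\tilde\tau-\tau^*)(1+\sup_{[0,T]}S^B_u)+\cL\sup_{u\in[\tau^*,\tilde\tau]}|S^B_u-S^B_{\tau^*}|$; (ii) a discount term $r(\tilde\tau-\tau^*)F$; and (iii) a portfolio increment $(\tilde V^\pi_{\sigma\wedge\tilde\tau}-\tilde V^\pi_{\sigma\wedge\tau^*})^+$. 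Since on $B$ we have $\tilde\tau-\tau^*\le\tau_{H_\epsilon}-\tau_H$, terms (i) and (ii) vanish as the excursion length shrinks.

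To obtain the rate $\epsilon^{1/\gamma}$, apply H\"{o}lder with exponents $p=\gamma/(\gamma-1)$ and $q=\gamma$:
\[
E^B[(\tau_{H_\epsilon}-\tau_H)^p\mathbb{I}_{\tau_H\le T}]^{1/p}\cdot E^B[(1+\sup S^B)^q]^{1/q}.
\]
By change of measure to $P_W$ and the strong Markov property at $\tau_H$, the first factor reduces (as in \eqref{3.32}--\eqref{3.33}) to $E_W[(T\wedge\tau^{(\epsilon/\kappa)})^p]^{1/p}+E_W[(T\wedge\tau^{(-\epsilon/\kappa)})^p]^{1/p}$; direct computation from the first-passage density $f_{\tau^{(a)}}$ (recalled in Lemma~\ref{lem3.3}) yields $O(\epsilon^{1/p})=O(\epsilon^{1/\gamma})$ for every $p>1/2$. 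The $\sup|S^B-S^B_{\tau^*}|$ piece is handled identically via BDG and the same H\"{o}lder exponents.

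The portfolio-increment term (iii) is the technical crux. Using the admissibility $\tilde V^\pi\ge0$ gives $(\tilde V^\pi_{\sigma\wedge\tilde\tau}-\tilde V^\pi_{\sigma\wedge\tau^*})^+\le\tilde V^\pi_{\sigma\wedge\tilde\tau}$, and $\tilde V^\pi$ is a $\tilde P^B$-martingale. Writing $E^B[\,\cdot\,\mathbb{I}_B]=\tilde E^B[Z\,\cdot\,\mathbb{I}_B]$ with the Radon--Nikodym derivative $Z$ from \eqref{2.8+} and applying H\"{o}lder with the same exponents combined with the exponential-moment estimate \eqref{4.4}, one obtains a bound proportional to $P(B)^{1/q}\le P(\tau_H\le T,\tau_{H_\epsilon}>\tau^*)^{1/\gamma}$, which is again $O(\epsilon^{1/\gamma})$ via the hitting-time estimate just used. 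Letting $\delta\downarrow0$ concludes.

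The main obstacle, and the reason for the weaker rate $\epsilon^{1/\gamma}$ as opposed to the $\epsilon^{3/4}$ of Lemma~\ref{lem3.3}, is this uniform control of the portfolio increment across \emph{arbitrary} admissible hedges on the random interval $[\tau^*,\tilde\tau]\subset[\tau_H,\tau_{H_\epsilon}]$. In Lemma~\ref{lem3.3} one could explicitly reshape $\pi_1\mapsto\pi_2$ stopped at $\sigma_1$ to sidestep this issue, whereas here the hedge is given and the H\"older trade-off between moments of $Z$ (and of $\tilde V^\pi$ via the martingale property) and the hitting-time probability is the only available route, forcing the loss in exponent.
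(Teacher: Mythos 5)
Your overall plan — delay the stopping time past $\tau_{H_\epsilon}$, reduce the comparison to the event $B=\{\tau_H\le\tau^*<\tau_{H_\epsilon}\}$, and trade off hitting-time moments against $L^q(P_W)$ moments of the bounded quantities via H\"older — is the right shape and matches the outline of the paper's argument (there $\tau_1\ge\tau_H\wedge T$, $\tau_2=\tau_1\vee(\tau_{H_\epsilon}\wedge T)$). However, step (iii), the ``technical crux'' in your own words, contains a genuine gap. You bound the portfolio increment by
$(\tilde V^\pi_{\sigma\wedge\tilde\tau}-\tilde V^\pi_{\sigma\wedge\tau^*})^+\le \tilde V^\pi_{\sigma\wedge\tilde\tau}$
and then apply H\"older together with \eqref{4.4}. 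But $\tilde V^\pi$ is an arbitrary nonnegative $\tilde P^B$-martingale with unrestricted initial value: any $L^p$ (or $L^1$) norm of $\tilde V^\pi_{\sigma\wedge\tilde\tau}$ under either probability is proportional to $V^\pi_0$, which ranges over all of $[0,\infty)$ in $\mathcal A^B=\bigcup_{x\ge 0}\mathcal A^B(x)$. The ensuing estimate therefore scales with the initial capital, whereas the asserted constant $A_\gamma$ must be uniform over every hedge. This is not repairable within your decomposition: a martingale increment over a short random interval has no portfolio-independent size bound.

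The paper sidesteps this exactly at the point where you try to split off the increment. Since $\tilde V^\pi_{\sigma\wedge\tau_1}=\tilde E^B(\tilde V^\pi_{\sigma\wedge\tau_2}\mid\mathcal F^B_{\sigma\wedge\tau_1})$ by optional stopping and $\tilde Q^H(\sigma,\tau_1)$ is $\mathcal F^B_{\sigma\wedge\tau_1}$--measurable, Jensen's inequality for the convex map $z\mapsto z^+$ gives
$(\tilde Q^H(\sigma,\tau_1)-\tilde V^\pi_{\sigma\wedge\tau_1})^+\le\tilde E^B\big((\tilde Q^H(\sigma,\tau_1)-\tilde V^\pi_{\sigma\wedge\tau_2})^+\mid\mathcal F^B_{\sigma\wedge\tau_1}\big)$,
i.e. the later portfolio value is kept \emph{inside} the positive part where it contributes with a minus sign; after switching back to $E^B$ via Bayes, the extra term appearing is
$\frac{|Z_{\sigma\wedge\tau_1}-Z_{\sigma\wedge\tau_2}|}{Z_{\sigma\wedge\tau_2}}(\tilde Q^H(\sigma,\tau_1)-\tilde V^\pi_{\sigma\wedge\tau_2})^+$
which, by admissibility, is at most $\frac{|Z_{\sigma\wedge\tau_1}-Z_{\sigma\wedge\tau_2}|}{Z_{\sigma\wedge\tau_2}}\,\tilde Q^H(\sigma,\tau_1)$ --- a bound by a discounted \emph{payoff}, not a portfolio --- and then estimated with no reference to $\pi$. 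See display \eqref{5.17} and the subsequent estimate \eqref{5.23}. The residual $E^B\big[(\tilde Q^{H_\epsilon}(\sigma,\tau_2)-\tilde V^\pi_{\sigma\wedge\tau_2})^+\big]$ is then directly $\le\tilde R^{H_\epsilon}(\pi,\sigma)$; there is no separate portfolio increment to control.

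A secondary omission: on the sub-event $\{\tau_H\le T<\tau_{H_\epsilon}\}\subset B$ the $H_\epsilon$-option never activates on $[0,T]$, so $\tilde\tau=T<\tau_{H_\epsilon}$ and $\mathbb{Y}^{H_\epsilon}_{\tilde\tau}=0$ while $\mathbb{Y}^H_{\tau^*}=F_{\tau^*}(S^B)$ is $O(1)$. This is a jump discontinuity of the payoff not governed by the Lipschitz conditions \eqref{2.1}--\eqref{2.2} and must be isolated as its own contribution, the paper's $\Gamma_3=\mathbb{I}_{\tau_H\le T<\tau_{H_\epsilon}}\sup_{0\le t\le T}F_t(S^B)$, estimated via the probability bound \eqref{5.21}. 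Your items (i)--(ii) only cover the difference $|F_{\tau^*}-F_{\tilde\tau}|$, not the disappearance of the knock-in indicator.
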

\begin{proof}
Choose an open interval $H=(L,R)$, $\epsilon>0$ and a hedge
$(\pi,\sigma)\in\mathcal{A}^{B}(x)\times\mathcal{T}^{B}_{0T}$.
 Since $(\tilde{Q}^H(\sigma,\tau)-\tilde{V}^{\pi}_{\sigma
\wedge\tau})^{+}\leq(\tilde{Q}^H(\sigma,\tau\vee(\tau_H\wedge{T}))-
\tilde{V}^{\pi}_{\sigma\wedge(\tau\vee(\tau_H\wedge{T}))})^{+}$ for
any $\tau\in\mathcal{T}^{B}_{0T}$ then for each $\delta>0$ there
exists a stopping time $\tau_1\in\mathcal{T}^B_{0T}$ such that
\begin{equation}\label{5.16}
\tilde{R}^H(\pi,\sigma)<E^B(\tilde{Q}^H(\sigma,\tau_1)-
\tilde{V}^{\pi}_{\sigma\wedge\tau_1})^{+}+\delta \ \ \mbox{and} \ \
\tau_1\geq\tau_H\wedge{T}.
\end{equation}
Set $\tau_2=\tau_1\vee(\tau_{H_\epsilon}\wedge{T})$ and
$\Gamma=(\tilde{Q}^H(\sigma,\tau_1)-\tilde{Q}^{H_\epsilon}(\sigma,\tau_2))^{+}$.
Since ${\{\tilde{V}^{\pi}_t\}}_{t=0}^T$ is a supermartingale (with
respect to the martingale measure) then by Jensen's inequality,
\begin{eqnarray*}
&(\tilde{Q}^H(\sigma,\tau_1)-\tilde{V}^{\pi}_{\sigma\wedge\tau_1})^{+}\leq
\tilde{E}^B(\tilde{Q}^H(\sigma,\tau_1)-\tilde{V}^{\pi}_{\sigma\wedge
\tau_2})^{+}|\mathcal{F}^B_{\sigma\wedge\tau_1})=\nonumber\\
&E^B\bigg(\frac{Z_{\sigma\wedge\tau_1}}{Z_{\sigma\wedge\tau_2}}
(\tilde{Q}^H(\sigma,\tau_1)-\tilde{V}^{\pi}_{\sigma\wedge\tau_2})^{+}|
\mathcal{F}^B_{\sigma\wedge\tau_1}\bigg).
\end{eqnarray*}
Thus, from (\ref{2.3}), (\ref{5.16}) and the H\"{o}lder inequality
it follows that for any $\beta>1$ there exists a constant
$c^{(1)}_\beta$ such that
\begin{eqnarray}\label{5.17}
&\tilde{R}^H(\pi,\sigma)\leq\delta+E^B(\frac{|Z_{\sigma\wedge\tau_1}-
Z_{\sigma\wedge\tau_2}|}{Z_{\sigma\wedge\tau_2}}
(\tilde{Q}^H(\sigma,\tau_1)-\tilde{V}^{\pi}_{\sigma\wedge\tau_2})^{+})+
E^B(\tilde{Q}^H(\sigma,\tau_1)\quad\quad\\
&-\tilde{V}^{\pi}_{\sigma\wedge\tau_2})^{+}\leq \delta+
E_W(D_T\tilde{Q}^H(\sigma,\tau_1)\frac{|Z_{\sigma\wedge\tau_1}-
Z_{\sigma\wedge\tau_2}|}{Z_{\sigma\wedge\tau_2}})+E^B(\tilde{Q}^{H_\epsilon}
(\sigma,\tau_2)-\nonumber\\
&\tilde{V}^{\pi}_{\sigma\wedge\tau_2})^{+}+E^B\Gamma\leq
\delta+c^{(1)}_\beta(E_W|Z_{\sigma\wedge\tau_1}-Z_{\sigma\wedge
\tau_2}|^\beta)^{1/\beta}+\tilde{R}^{H_\epsilon}(\pi,\sigma)+E^B\Gamma.
\nonumber
\end{eqnarray}
Observe that
\begin{equation}\label{5.18}
\Gamma\leq\Gamma_1+\Gamma_2+\Gamma_3
\end{equation}
where
\begin{eqnarray*}
&\Gamma_1=|e^{-r(\tau_1\wedge\sigma)}-e^{-r(\tau_2\wedge\sigma)}|
F_{\tau_1\wedge\sigma}(S^B), \ \
\Gamma_2=|F_{\tau_1\wedge\sigma}(S^B)-F_{\tau_2\wedge\sigma}(S^B)|\\
& \mbox{and} \ \  \Gamma_3=\mathbb{I}_{\tau_H\leq
T<\tau_{H_\epsilon}}\sup_{0\leq{t}\leq{T}}F_t(S^B).
\end{eqnarray*}
 In the same way as in (\ref{3.27})--(\ref{3.31}) for any
$\beta>\frac{1}{2}$ (and not necessarily $\beta=\frac{2}{3}$ as
there) there exists a constant $c^{(2)}_\beta$ such that
$E^B(\Gamma_1+\Gamma_2)\leq
c^{(2)}_\beta(E_W(\tau_2\wedge\sigma-\tau_1\wedge\sigma)^{\beta})^{\frac{1}
{2\beta}}$. Since $\tau_2\wedge\sigma-\tau_1\wedge\sigma\leq
T\wedge(\tau_{H_\epsilon}-\tau_H)$ then similarly to
(\ref{3.32})--(\ref{3.33}) it follows that
$E_W(\tau_2\wedge\sigma-\tau_1\wedge\sigma)^{\beta}\leq
c^{(3)}_\beta\epsilon$ for some constant $c^{(3)}_\beta$. We
conclude that for any $\beta>1$ there exists a constant
$c^{(4)}_\beta$ such that
\begin{equation}\label{5.19}
E^B(\Gamma_1+\Gamma_2)\leq c^{(4)}_\beta \epsilon^{1/\beta}.
\end{equation}
Next, we estimate $E^B\Gamma_3$. First assume that $L>0$ and
$R<\infty$. Set $x_1=({\ln{L}-\ln{S_0}})/{\kappa}$,
$x_2=({\ln{R}-\ln{S_0}})/{\kappa}$,
$y_1=x_1-\frac{\epsilon}{\kappa}$ and
$y_2=x_2+\frac{\epsilon}{\kappa}$ where we set $\ln 0=-\infty$ and
$\ln\infty=\infty$. Using the stopping times $\tau^{(x)}$ and the
probabilities $P_W$ introduced in the proof of Lemma \ref{lem3.3} we
 observe that ${\{\tau_H\leq T<\tau_{H_\epsilon}\}}\subseteq {\{\tau^{(x_1)}\leq
T<\tau^{(y_1)}\}}\cup{\{\tau^{(x_2)}\leq T<\tau^{(y_2)}\}}$, and so
\begin{eqnarray}\label{5.20}
&P_W{\{\tau_H\leq T<\tau_{H_\epsilon}\}}\leq
P_W{\{\tau^{(y_1)}>T\}}-P_W{\{\tau^{(x_1)}>T\}}+P_W{\{\tau^{(y_2)}>T\}}\\
&-P_W{\{\tau^{(x_2)}>T\}} =\int_{T}^\infty \frac{1}{\sqrt{2\pi
t^3}}\sum_{i=1}^2
(|y_i|\exp(-\frac{y^2_i}{2t})-|x_i|\exp(-\frac{x^2_i}{2t}))dt.\nonumber
\end{eqnarray}
Since
$\frac{d}{dx}(x\exp(-\frac{x^2}{2t}))=(1-\frac{x^2}{t})\exp(-\frac{x^2}
{2t})\leq 1$ then it follows from the mean value theorem that
$|y_i|\exp(-\frac{y^2_i}{2t})-|x_i|\exp(-\frac{x^2_i}{2t})\leq|y_i|-|x_i|=
\frac{\epsilon}{\kappa}$ for any $i$ which together with
(\ref{5.20}) gives
\begin{equation}\label{5.21}
P_W{\{\tau_H\leq T<\tau_{H_\epsilon}\}}\leq
\frac{2\sqrt{2}\epsilon}{\sqrt{\pi T}\kappa}.
\end{equation}
For the cases $L=0$ and $R=\infty$, $P_W{\{\tau_H\leq
T<\tau_{H_\epsilon}\}}\leq P_W{\{\tau^{(y_2)}>T\}}
-P_W{\{\tau^{(x_2)}>T\}}$ and $P_W{\{\tau_H\leq
T<\tau_{H_\epsilon}\}}\leq P_W{\{\tau^{(y_1)}>T\}}\
-P_W{\{\tau^{(x_1)}>T\}}$, respectively. Thus for the above cases
(\ref{5.21}) holds true. By (\ref{5.21}) and the H\"{o}lder
inequality we see that for any $\beta>1$ there exists a constant
$c^{(5)}_\beta$ such that
\begin{equation}\label{5.22}
E^B\Gamma_3=E_W(\mathbb{I}_{\tau_H\leq
T<\tau_{H_\epsilon}}D_T\sup_{0\leq{t}\leq{T}}F_t(S^B))\leq
c^{(5)}_\beta\epsilon^{1/\beta}.
\end{equation}
Finally, we estimate
$E_W|Z_{\sigma\wedge\tau_1}-Z_{\sigma\wedge\tau_2}|^\beta$. Set
$\Gamma_4=|\frac{\mu}{\kappa}(W_{\sigma\wedge\tau_2}-W_{\sigma\wedge\tau_1})
+(\frac{\mu}{2}-\frac{r\mu}{\kappa^2}-\frac{\mu^2}{2\kappa^2})
(\sigma\wedge\tau_2-\sigma\wedge\tau_1)|.$ From the
Burkholder–-Davis-–Gandy inequality it follows that there exists a
constant $c^{(7)}_\beta$ such that $E_W\Gamma^\beta_4\leq
c^{(7)}_\beta E_W(\sigma\wedge\tau_2-\sigma\wedge\tau_1)^{\beta/2}$.
By the mean value theorem we obtain that $(e^x-1)^\beta\leq \beta
e^\beta x$ provided $0\leq x\leq 1$ and since
$Z_t=\exp(\frac{\mu}{\kappa}W_t+(\frac{\mu}{2}-\frac{r\mu}{\kappa^2}-
\frac{\mu^2}{2\kappa^2}) t)$ it follows from the Markov and
H\"{o}lder inequalities that for any $\beta>1$ there exists a
constants $c^{(8)}_\beta$ such that
\begin{eqnarray}\label{5.23}
&\,\,\,E_W|Z_{\sigma\wedge\tau_2}-Z_{\sigma\wedge\tau_1}|^\beta \leq
E_W(\sup_{0\leq{t}\leq{T}}Z^\beta_t\mathbb{I}_{\Gamma_4>1})+\beta
e^\beta E_W(\sup_{0\leq{t}\leq{T}}Z^\beta_t\Gamma_4)\,\,\,\,\,\\
&\leq c^{(8)}_\beta (P_W\{\Gamma_4>1\})^{1/\beta} +\beta e^\beta
c^{(8)}_\beta (E_W\Gamma^{\beta}_4)^{1/\beta}\leq\ (1+\beta e^\beta
)c^{(8)}_\beta
(E_W\Gamma^{\beta}_4)^{1/\beta} \nonumber\\
&\leq(1+\beta e^\beta )c^{(8)}_\beta (c^{(7)}_\beta
c^{(3)}_{\frac{\beta}{2}}\epsilon)^{1/\beta}.\nonumber
\end{eqnarray}
Letting $\del\to 0$ we complete the proof by (\ref{5.17}),
(\ref{5.19}), (\ref{5.22}) and (\ref{5.23}).
\end{proof}
Repeating the proof of the last lemma with $\tau_H=0$ and a
portfolio $\pi$ satisfying $V^\pi\equiv{0}$ we arrive at the
following result.
\begin{cor}\label{cor5.1}
Let $\tilde{H}=(L,R)$ be an open interval satisfying
$\min(\frac{R}{S_0},\frac{S_0}{R})\leq e^\epsilon$. For any
$\gamma>1$ there exists a constant $\tilde{A}_\gamma$ such that
\begin{equation}\label{5.25}
\cV-\tilde\cV^{\tilde{H}}\leq \tilde{A}_\gamma \epsilon^{1/\gamma}
\end{equation}
where
$\cV=\inf_{\sigma\in\mathcal{T}^{B}_{0T}}\sup_{\tau\in\mathcal{T}^{B}_{0T}}
\tilde{E}^B{Q}^{B}(\sigma,\tau)$ is the option price for the regular
payoff function $Q^{B}(k,l)$.
\end{cor}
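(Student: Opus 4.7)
The plan is to reduce the corollary to an adaptation of Lemma~\ref{lem5.1} along the lines of the remark immediately preceding the statement, namely to replay the proof of Lemma~\ref{lem5.1} with the formal substitutions $\tau_H\mapsto 0$ and $\pi$ replaced by the trivial portfolio $\pi_0$ satisfying $\tilde V^{\pi_0}\equiv 0$. Since $\cV$ and $\tilde\cV^{\tilde H}$ are defined in terms of the unique martingale measure $\tilde P^B$, they do not depend on the physical parameter $\mu$, so I would first specialise to $\mu=0$. Under this choice $P^B=\tilde P^B$, the Radon--Nikodym density $Z_t$ from (\ref{2.8+}) is identically $1$, and both option prices can be written as shortfall risks of $\pi_0$ at zero initial capital: $\cV=\inf_\sigma\sup_\tau E^B Q^B(\sigma,\tau)$ and $\tilde\cV^{\tilde H}=\inf_\sigma\sup_\tau E^B\tilde Q^{B,\tilde H}(\sigma,\tau)$.

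Fix $\delta>0$; pick $\sigma\in\mathcal T^B_{0T}$ that is $\delta$-optimal for $\tilde\cV^{\tilde H}$ and then $\tau_1\in\mathcal T^B_{0T}$ that is $\delta$-optimal for $\sup_\tau E^B Q^B(\sigma,\tau)$ (the constraint $\tau_1\geq\tau_H\wedge T$ appearing in (\ref{5.16}) is now vacuous because $\tau_H=0$). Setting $\tau_2=\tau_1\vee(\tau_{\tilde H}\wedge T)$, the same manipulation as in (\ref{5.17}) gives
\[
\cV-\tilde\cV^{\tilde H}\;\leq\;2\delta+E^B\bigl(Q^B(\sigma,\tau_1)-\tilde Q^{B,\tilde H}(\sigma,\tau_2)\bigr)^+,
\]
and the integrand admits the same decomposition $\Gamma_1+\Gamma_2+\Gamma_3$ as in (\ref{5.18}); the only change is that the event $\{\tau_H\leq T<\tau_{H_\epsilon}\}$ collapses to $\{\tau_{\tilde H}>T\}$, so $\Gamma_3=\mathbb{I}_{\{\tau_{\tilde H}>T\}}\sup_{0\leq t\leq T}F_t(S^B)$.

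The estimates for $\Gamma_1$ and $\Gamma_2$ then follow the chain leading to (\ref{5.19}) verbatim, with $\tau_{\tilde H}\wedge T$ in place of $\tau_{H_\epsilon}-\tau_H$, yielding $E^B(\Gamma_1+\Gamma_2)\leq c_\gamma\epsilon^{1/\gamma}$ for every $\gamma>1$. For $\Gamma_3$ the hypothesis that one of $R/S_0$ or $S_0/L$ does not exceed $e^\epsilon$ forces the corresponding log-barrier to lie within $\epsilon/\kappa$ of $\ln S_0$, so under $P_W$ the time $\tau_{\tilde H}$ is stochastically dominated by the Brownian hitting time of either $\epsilon/\kappa$ or $-\epsilon/\kappa$; the explicit density computation used in (\ref{5.20})--(\ref{5.21}) then gives $P_W\{\tau_{\tilde H}>T\}\leq c\epsilon$, and H\"older yields $E^B\Gamma_3\leq c'_\gamma\epsilon^{1/\gamma}$. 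Letting $\delta\downarrow 0$ completes the bound.

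I do not anticipate any serious obstacle: once $\mu=0$ is imposed, the Girsanov-density step (\ref{5.23}) becomes trivial, and all remaining ingredients are direct specialisations of estimates already established in Lemma~\ref{lem5.1}. The only genuine change of flavour is the replacement of the two-sided hitting-time estimate used there by the one-sided bound $P_W\{\tau_{\tilde H}>T\}\leq c\epsilon$, which if anything is simpler.
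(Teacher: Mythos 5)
Your proposal is correct and carries out exactly the reduction the paper's one-line proof intends: replay the proof of Lemma~\ref{lem5.1} with $\tau_H=0$ and with the trivial portfolio $\tilde V^{\pi}\equiv 0$, after which $\Gamma_3$ collapses to $\mathbb{I}_{\{\tau_{\tilde H}>T\}}\sup_{t\leq T}F_t(S^B)$ and the one-sided hitting-time estimate replaces (\ref{5.20})--(\ref{5.21}). Your explicit preliminary step of setting $\mu=0$ (so that $Z_t\equiv 1$, the step (\ref{5.23}) is trivial, and the option prices become shortfall risks at zero initial capital) is exactly the device the paper already used right after (\ref{3.18}) and in the proof of Theorem~\ref{thm2.1}, and is indeed needed here since Lemma~\ref{lem5.1} is phrased for shortfall risks under $P^B$ while the corollary is about values under $\tilde P^B$.
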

Now we are ready to prove Theorem \ref{thm5.1}. Let $I=(L,R)$ be an
open interval. We start with the proof of the second statement in
the above theorem. Let $x>0$ be an initial capital and choose
$\delta>0$. As before there exists $k$, $0<t_1<t_2...<t_k\leq{T}$
and $0\leq{f_\delta}\in{C^{\infty}_0(\mathbb{R}^k)}$ such that the
portfolio $\pi\in\mathcal{A}^{B}$ with
$\tilde{V}^{\pi}_t=\tilde{E}(f_{\delta}(B^{*}_{t_1},...,
B^{*}_{t_k})|\mathcal{F}^B_t)$ satisfies
\begin{equation}\label{5.26}
\tilde{R}^{I}(\pi)<\tilde{R}^{I}(x)+\delta \ \ \mbox{and} \ \
V^\pi_0<x.
\end{equation}
For any $n$ set
\begin{equation}\label{5.27}
\Psi_n=f_{\delta}(B^{*}_{\theta^{(n)}_{[nt_1/T]}},...,
B^{*}_{\theta^{(n)}_{[nt_k/T]}}).
\end{equation}
Using the same arguments as after the formula (\ref{4.6}) it follows
that for sufficiently large $n$ there exists a portfolio
$\pi'(n)\in\mathcal{A}^{B,n}$ with an initial capital less than $x$
satisfying $\tilde{V}^{\pi'}_{\theta^{(n)}_n}=\Psi_n$. For any
$\beta>0$ which satisfy $e^\beta<\min(\frac{R}{S_0},\frac{S_0}{L})$
introduce the open interval
$\tilde{I}_\beta=(Le^\beta,Re^{-\beta})$. From (\ref{5.14}),
(\ref{5.26}) and Lemma \ref{lem5.1} it follows that for any
$\gamma>1$,
\begin{equation}\label{5.28}
\tilde{R}^I_n(x)-\tilde{R}^I(x)\leq
\delta+\tilde{R}^{B,I}_n(\pi')-\tilde{R}^I(\pi)\leq
\delta+A_{\frac{1}{\gamma}}\beta^{1/\gamma}+\tilde{R}^{B,I}_n(\pi')-
\tilde{R}^{\tilde{I}_\beta}(\pi).
\end{equation}
Let $\sigma\in\mathcal{T}^{B}_{0T}$ and
$\eta\in\mathcal{T}^{B,n}_{0,n}$ be such that
\begin{eqnarray}\label{5.29}
&\tilde{R}^{\tilde{I}_\beta}(\pi)>{\sup_{\tau\in\mathcal{T}^{B}_{0T}}
E^{B}(\tilde{Q}^{B,I_\beta}(\sigma,\tau)-\tilde{V}^{\pi}_{\sigma\wedge
\tau})^{+}}-\delta\,\, \mbox{and}\,\,
E^{B}(\tilde{Q}^{B,I,n}(\frac{\zeta{T}}{n},\frac{{\eta}T}{n})\quad\quad\\
&-\tilde{V}^{\pi'}_{\theta^{(n)}_{\zeta\wedge\eta}})^{+}>
\sup_{\tilde{\eta}\in\mathcal{T}^{B,n}_{0,n}}E^{B}(\tilde{Q}^{B,I,n}
(\frac{\zeta{T}}{n},\frac{{\tilde\eta}T}{n})
-\tilde{V}^{\pi'}_{\theta^{(n)}_{\zeta\wedge\tilde{\eta}}})^{+}-\delta
\geq{\tilde{R}^{B,I}_n(\pi')-\delta}\nonumber
\end{eqnarray}
where
$\zeta=(n\wedge\min{\{i|\theta^{(n)}_i\geq{\sigma}\}})\mathbb{I}_{\sigma<T}+
n\mathbb{I}_{\sigma=T}.$ From (\ref{5.29}) we obtain that
\begin{eqnarray}\label{5.30}
&\tilde{R}^{B,I}_n(\pi')-\tilde{R}^{\tilde{I}_\beta}(\pi)<
2\delta+E^{B}(\tilde{Q}^{B,I,n}
(\frac{\zeta{T}}{n},\frac{{\eta}T}{n})-
\tilde{V}^{\pi'}_{\theta^{(n)}_{\zeta\wedge\eta}})^{+}\\
&-E^{B}(Q^{B,\tilde{I}_\beta}(\sigma,\theta^{(n)}_{\eta}\wedge{T})-
\tilde{V}^{\pi}_{\sigma\wedge\theta^{(n)}_{\eta}})^{+} \leq 2\delta+
E^B(\tilde\Lambda_1+\tilde\Lambda_2+\tilde\Lambda_3)\nonumber
\end{eqnarray}
where
\begin{eqnarray}\label{5.31}
&\tilde\Lambda_1=|\tilde{V}^{\pi'}_{\theta^{(n)}_{\zeta\wedge\eta}}-
\tilde{V}^{\pi}_{\theta^{(n)}_{\zeta\wedge\eta}\wedge{T}}|, \ \
\tilde\Lambda_2=|\tilde{V}^{\pi}_{\theta^{(n)}_{\zeta\wedge\eta}\wedge{T}}-
\tilde{V}^{\pi}_{\theta^{(n)}_{\eta}\wedge{\sigma}}| \\
&\mbox{and} \ \
\tilde\Lambda_3=(\tilde{Q}^{B,I,n}(\frac{\zeta{T}}{n},\frac{{\eta}T}{n})-
\tilde{Q}^{B,\tilde{I}_\beta}
(\sigma,\theta^{(n)}_{\eta}\wedge{T}))^+\nonumber.
\end{eqnarray}
The quantities $\tilde\Lambda_1$ and $\tilde\Lambda_2$ can be
estimated exactly as $\Lambda_1$ and $\Lambda_2$ in the formulas
(\ref{4.14})--(\ref{4.15}), i.e. for some constant $C'(f_\delta)$
depending only on $f_\delta$,
\begin{equation}\label{5.32}
E^B(\tilde\Lambda_1+\tilde\Lambda_2)\leq C'(f_\delta)n^{-1/4}.
\end{equation}
Using the quantities $Q^B(s,t)$ and $Q^{B,n}(k,l)$ (introduced
before the formula (\ref{4.17})) and observing that
${\sigma}<\theta^{(n)}_{\eta}\wedge{T}$ if $\zeta<\eta$ we obtain
from (\ref{2.3}) and (\ref{5.31}) that
\begin{equation}\label{5.33}
\Lambda_3\leq
(Q^{B,n}(\frac{\zeta{T}}{n},\frac{{\eta}T}{n})-Q^{B}(\sigma,
\theta^{(n)}_{\eta} \wedge{T}))^{+}
+\mathbb{I}_\Xi(G_0(S_0)+\cL(T+2)(1+\max_{0\leq{k}
\leq{n}}S^{B,n}_{\frac{kT}{n}}))
\end{equation}
where
$\Xi=\{\eta\geq\tau^{B,n}_{I}\}\cap\{\theta^{(n)}_{\eta}\wedge{T}<
\tau_{\tilde{I}_\beta}\}$. Similarly to (\ref{4.18})--(\ref{4.23})
we see that there exists a constant $\tilde{C}^{(1)}$ such that
\begin{equation}\label{5.34}
E^B\Lambda_3\leq C^{(2)}n^{-1/4}(\ln{n})^{3/4}+C^{(6)}n^{-1/4}+
\tilde{C}^{(1)}P(\Xi)^{1/2}.
\end{equation}
Similarly to (\ref{4.24}) we observe that
\begin{eqnarray}\label{5.35} &\Xi\subseteq\bigg\{
 {\frac{\max_{0\leq{k}\leq{\eta}}S^{B,n}_{kT/n}}
 {\sup_{0\leq{t}\leq{\theta^{(n)}_\eta \wedge{T}}}
{S^B_t}}}>e^{\beta}\bigg\}\bigcup  \bigg\{
{\frac{\min_{0\leq{k}\leq{\eta}}S^{B,n}_{kT/n}}
 {\inf_{0\leq{t}\leq{\theta^{(n)}_\eta \wedge{T}}}
{S^B_t}}}<e^{-\beta}\bigg\}\subseteq\\
 &\bigg\{
 {\frac{\max_{0\leq{k}\leq{\eta}}S^{B,n}_{kT/n}}
 {\max_{0\leq{k}\leq{\eta}}
{S^B_{\theta^{(n)}_k\wedge{T}}}}}>e^{\beta}\bigg\}\ \bigcup \bigg\{
{\frac{\min_{0\leq{k}\leq{\eta}}S^{B,n}_{kT/n}}
 {\min_{0\leq{k}\leq{\eta}}
{S^B_{\theta^{(n)}_k
\wedge{T}}}}}<e^{-\beta}\bigg\}\subseteq\bigg\{\max_{0\leq{k}\leq{n}}\max
\nonumber\\
& \bigg(\frac{S^B_{\theta^{(n)}_k \wedge{T}}}{S^{B,n}_{kT/n}},
 \frac{S^{B,n}_{kT/n}}{S^B_{\theta^{(n)}_k \wedge{T}}}\bigg)>e^{\beta}
\bigg\}\subseteq
\{|r+\mu-\frac{\kappa^2}{2}|u_n+\kappa\sup_{T\wedge\theta^{(n)}_n\leq{t}
\leq{\theta^{(n)}_n}}|B_t -B_{\theta^{(n)}_n \wedge{T}}|>\beta\}
\nonumber
\end{eqnarray}
where the term $u_n$ was defined before formula (\ref{4.3}). Using
the Burkholder–-Davis-–Gandy inequality for the martingale
$B_t-B_T$, $t\geq{T}$ it follows that for any $m>1$ there exists a
constant $\lambda_m$ such that
$E^B(\sup_{T\wedge\theta^{(n)}_n\leq{t}\leq{\theta^{(n)}_n}}|B_t
-B_{\theta^{(n)}_n \wedge{T}}|)^m\leq
\lambda_mE^B|\theta^{(n)}_n-T|^{m/2}$. Thus from (\ref{4.3}),
(\ref{5.35}) and the Markov inequality we derive that for any $m>1$
there exists a constant $\tilde{K}^{(m)}$ such that
$P(\Xi)\leq\frac{\tilde{K}^{(m)}n^{-m/4}}{\beta^m}$. This together
with (\ref{5.28}), (\ref{5.30}), (\ref{5.32}) and (\ref{5.34})
yields that for any $\gamma,m>1$,
\begin{eqnarray}\label{5.36}
&\tilde{R}^I_n(x)-\tilde{R}^I(x)\leq 3\delta+
A_{\gamma}\beta^{1/\gamma}+(C'(f_\delta)+C^{(6)})n^{-1/4}+\\
&C^{(2)}n^{-1/4}(\ln{n})^{3/4}+\tilde{C}^{(1)}\sqrt{\frac{\tilde{K}^{(m)}
n^{-m/4}}{\beta^m}}.\nonumber
\end{eqnarray}
Thus
$\tilde{R}^I(x)\geq\limsup_{n\rightarrow\infty}\tilde{R}^I_n(x)-3\delta-
A_{\frac{1}{\gamma}}\beta^{1/\gamma}$ and by letting
$\beta,\delta\downarrow{0}$ we get that
\begin{equation}\label{5.37}
\tilde{R}^I(x)\geq\limsup_{n\rightarrow\infty}\tilde{R}^I_n(x).
\end{equation}

In order to compete the proof of the second statement in Theorem
\ref{thm5.1} we should prove (\ref{5.6}). Fix $\beta>0$ and
$n\in\mathbb{N}$. Set $J^{(n,\beta)}=(L\exp(-2n^{-1/4+\beta})
,R\exp(2n^{-1/4+\beta}))$ and let
$(\pi,\sigma)=(\psi_n(\tilde\pi^I_n),\phi_n(\tilde\sigma^I_n))$
where $(\tilde\pi^I_n,\tilde\sigma^I_n)$ is the optimal hedge given
by (\ref{5.13}). Once again we consider the portfolio
$\pi=\psi_n(\tilde\pi^I_n)$ not only as an element in
$\mathcal{A}^{B}(x)$ but also as an element in
$\mathcal{A}^{B,n}(x)$. From (\ref{5.14}) we obtain that
\begin{equation}\label{5.38}
\tilde{R}^{J^{(n,\beta)}}(\pi,\sigma)-\tilde{R}^I_n(x)=
\tilde{R}^{J^{(n,\beta)}}
(\pi,\sigma)-\tilde{R}^{B,I}_n(\pi,\tilde\zeta^I_n)
\end{equation}
where, recall, $\tilde\zeta^I_n$ was defined in (\ref{5.13}). Set
 $\zeta=\tilde\zeta^I_n$ then from (\ref{5.13}) it follows that
$\sigma=(T\wedge\theta^{(n)}_{\zeta})\mathbb{I}_{\zeta<n}
+T\mathbb{I}_{\zeta=n}$. Fix $\delta>0$ and let
$\tau\in\mathcal{T}^B_{0T}$ be such that
\begin{equation}\label{5.39}
\tilde{R}^{J^{(n,\beta)}}(\pi,\sigma)<\delta+E^{B}(\tilde{Q}^{B,J^{(n,\beta)}}
(\sigma,\tau)- \tilde{V}^{\pi}_{\sigma \wedge\tau})^{+}.
\end{equation}
Set $\eta=n\wedge\min\{k|\theta^{(n)}_k\geq{\tau}\}\in
\mathcal{T}^{B,n}_{0,n}$ and let $I^{(n,\beta)}=
(L\exp(-n^{-1/4+\beta}),R\exp(n^{-1/4+\beta}))$. Denote
\begin{eqnarray*}
&\tilde\Gamma_1=(\tilde{Q}^{B,J^{(n,\beta)}}(\sigma,{\tau})-
\tilde{Q}^{B,I^{(n,\beta)}}(\sigma\wedge
\theta^{(n)}_n,{\tau}\wedge\theta^{(n)}_n))^+\\
&\mbox{and} \ \
\tilde\Gamma_2=(\tilde{Q}^{B,I^{(n,\beta)}}(\sigma\wedge\theta^{(n)}_n,
\tau\wedge\theta^{(n)}_n)-
\tilde{Q}^{B,I}(\frac{\zeta{T}}{n},\frac{\eta{T}}{n}))^+.
\end{eqnarray*}
From (\ref{5.39}) it follows that
\begin{eqnarray}\label{5.40}
&\tilde{R}^{J^{(n,\beta)}}(\pi,\sigma)-R^{B,I}_n(\pi,\zeta)<E^{B}
(\tilde{Q}^{B,I^{(n,\beta)}}(\sigma\wedge
\theta^{(n)}_n,\tau\wedge\theta^{(n)}_n)-\tilde{V}^{\pi}_{\sigma
\wedge{\tau}})^{+}\\
&-E^{B}(\tilde{Q}^{B,I^{(n,\beta)}}(\sigma\wedge\theta^{(n)}_n,{\tau}\wedge
\theta^{(n)}_n)-
\tilde{V}^{\pi}_{\theta^{(n)}_{\zeta\wedge\eta}})^{+}+
\delta+E^{B}(\tilde\Gamma_1+ \tilde\Gamma_2). \nonumber
\end{eqnarray}
In the same way as in the formulas (\ref{4.32})--(\ref{4.38}) we
derive that
\begin{eqnarray}\label{5.41}
&E^{B}(\tilde{Q}^{B,I^{(n,\beta)}}(\sigma\wedge
\theta^{(n)}_n,\tau\wedge\theta^{(n)}_n)-\tilde{V}^{\pi}_{\sigma
\wedge{\tau}})^{+}-\\
&E^{B}(\tilde{Q}^{B,I^{(n,\beta)}}(\sigma\wedge\theta^{(n)}_n,{\tau}\wedge
\theta^{(n)}_n)-
\tilde{V}^{\pi}_{\theta^{(n)}_{\zeta\wedge\eta}})^{+}\leq{C}^{(9)}n^{-1/4}
\nonumber
\end{eqnarray}
where $C^{(9)}$ is the same constant as in formula (\ref{4.38}).

Next, we estimate $E^{B}\tilde\Gamma_1$. Since in our case
$\sigma<\tau$ is equivalent to
$\sigma\wedge\theta^{(n)}_n<\tau\wedge\theta^{(n)}_n$ then from
(\ref{2.3}) it follows that
\begin{equation}\label{5.42}
\tilde\Gamma_1\leq  (Q^{B}(\sigma,\tau)-Q^B(\sigma\wedge
\theta^{(n)}_n,{\tau}\wedge\theta^{(n)}_n))^+ +\mathbb{I}_{\Xi_1}
(G_0({S^B_0})+\cL(T+2)(1+\sup_{0\leq{t}\leq{T}}S^B_t))
\end{equation}
where
$\Xi_1=\{\tau\geq\tau_{J^{(n,\beta)}}\}\cap\{\tau\wedge\theta^{(n)}_n<
\tau_{I^{(n,\beta)}}\}.$ The term
$E^B(Q^{B}(\sigma,\tau)-Q^B(\sigma\wedge
\theta^{(n)}_n,{\tau}\wedge\theta^{(n)}_n))^+$ can be estimated by
the right hand side of (\ref{4.39}). Hence, by (\ref{4.39}) and the
Cauchy--Schwarz inequality we obtain that
\begin{equation}\label{5.42+}
E^B\tilde\Gamma_1\leq C^{(10)}n^{-1/4}+C^{(11)}(P(\Xi_1))^{1/2}
\end{equation}
where $C^{(10)}$ and $C^{(11)}$ are the same constants as in the
formulas (\ref{4.39}) and (\ref{4.41}), respectively. Similarly to
(\ref{4.43}) we see that
\begin{eqnarray}\label{5.43}
&\Xi_1\subseteq\bigg\{
 {\frac{\sup_{0\leq{t}\leq{\tau}}S^{B}_{t}}
 {\sup_{0\leq{t}\leq{\tau\wedge\theta^{(n)}_n}}
{S^B_t}}}>e^{n^{-1/4+\beta}}\bigg\}\bigcup  \bigg\{
{\frac{\inf_{0\leq{t}\leq{\tau}}S^{B}_{t}}
 {\inf_{0\leq{t}\leq{\tau\wedge\theta^{(n)}_n}}
{S^B_t}}}<e^{-n^{-1/4+\beta}}\bigg\}\\
 &\subseteq\bigg\{
\sup_{\theta^{(n)}_n\wedge{T}\leq{t}\leq{T}}\max(\frac{S^B_t}
{S^B_{\theta^{(n)}_n}},\frac{S^B_{\theta^{(n)}_n}}{S^B_t})
>e^{n^{-1/4+\beta}}
 \bigg\}\nonumber\\
 &\subseteq\{|r+\mu-\frac{\kappa^2}{2}||T-\theta^{(n)}_n|+\kappa
 \sup_{\theta^{(n)}_n\wedge{T}\leq{t}\leq{T}}|B_t-B_{\theta^{(n)}_n\wedge{T}}|>
 n^{-1/4+\beta}
 \}.\nonumber
\end{eqnarray}
Employing the Burkholder–-Davis-–Gandy inequality for the martingale
$B_t-B_T\wedge\theta^{(n)}_n$, $t\geq{T\wedge\theta^{(n)}_n}$ we
obtain that $E^B(\sup_{T\wedge\theta^{(n)}_n\leq{t}\leq{T}}|B_t
-B_{\theta^{(n)}_n \wedge{T}}|)^m\leq \lambda_m
E^B|\theta^{(n)}_n-T|^{m/2}$ for any $m>1$.  Thus, by (\ref{4.3}),
(\ref{5.43}) and the Markov inequality it follows that
$P(\Xi_1)\leq\frac{\tilde{K}^{(m)}n^{-m/4}}{n^{-m(1/4-\beta)}}$ for
any $m>1$. This together with (\ref{5.42+}) gives that
\begin{equation}\label{5.44}
E^B\tilde\Gamma_1\leq
C^{(10)}n^{-1/4}+C^{(11)}\sqrt{\tilde{K}^{(m)}n^{-m\beta}}.
\end{equation}
Finally, we estimate $E^{B}\tilde\Gamma_2$. Since $\zeta<\eta$
provided $\sigma\wedge\theta^{(n)}_n<\tau\wedge\theta^{(n)}_n$ then
by (\ref{2.3}),
\begin{eqnarray}\label{5.45}
&\tilde\Gamma_2\leq
(Q^{B}(\sigma\wedge\theta^{(n)}_n,\tau\wedge\theta^{(n)}_n)-
Q^{B,n}(\frac{\zeta{T}}{n},\frac{\eta{T}}{n}))^+\\
&+\mathbb{I}_{\Xi_2}(G_0(S_0)+\cL(T+2)(1+\sup_{0\leq{t}\leq{T}}S^B_t))
\nonumber
\end{eqnarray}
where $\Xi_2={\{\tau\wedge\theta^{(n)}_n\geq
\tau_{I^{(n,\beta)}}\}}\cup{\{\eta<\tau^{B,n}_I\}}$. The term
$E^B(Q^{B}(\sigma\wedge\theta^{(n)}_n,\tau\wedge\theta^{(n)}_n)-
Q^{B,n}(\frac{\zeta{T}}{n},\frac{\eta{T}}{n}))^+$ can be estimated
applying (\ref{4.18}) and (\ref{4.42}) which gives
\begin{equation*}
E^B\tilde\Gamma_2 \leq
C^{(2)}n^{-1/4}(\ln{n})^{3/4}+C^{(12)}n^{-1/4}+E^B\big(\mathbb{I}_{\Xi_2}
(G_0(S_0)+\cL(T+2)(1+\sup_{0\leq{t}\leq{T}}S^B_t))\big).
\end{equation*}
This together with the Cauchy-Schwarz inequality yields that
\begin{equation}\label{5.46}
E^B\tilde\Gamma_2 \leq
C^{(2)}n^{-1/4}(\ln{n})^{3/4}+C^{(12)}n^{-1/4}+C^{(11)}(P(\Xi_2))^{1/2}.
\end{equation}
Since $\tau\wedge\theta^{(n)}_n\geq\theta^{(n)}_{(\eta-1)^+}$ then
similarly to (\ref{4.43}) we obtain that $\Xi_2\subseteq\{
r(u_n+w_n)+\kappa\sqrt\frac{T}{n}>n^{-1/4+\beta}\}$. Thus $P(\Xi_2)$
can be estimated by the right hand side of (\ref{4.44}) for
$\beta>-\frac{1}{12}$, and so
\begin{equation}\label{5.47}
P(\Xi_2)\leq  C^{(13)}n^{-1/2}.
\end{equation}
Since $\delta$ is arbitrary then combining (\ref{5.38}),
(\ref{5.40}), (\ref{5.41}), (\ref{5.44}), (\ref{5.46}) and
(\ref{5.47}) we conclude that there exists a constant
$\tilde{C}^{(2)}$ such that
\begin{eqnarray}\label{5.48}
&\tilde{R}^{J^{(n,\beta)}}(\pi,\sigma)-\tilde{R}^{B,I}_n(\pi,\zeta)=
\tilde{R}^{J^{(n,\beta)}}(\pi,\sigma)-\tilde{R}^I_n(x)\leq
\tilde{C}^{(2)}n^{-1/4}(\ln{n})^{3/4}\\
&+C^{(11)}\sqrt{\tilde{K}^{(m)}n^{-m\beta}}.\nonumber
\end{eqnarray}
From (\ref{5.48}) and Lemma \ref{lem5.1} it follows that for any
$\gamma>1$,
\begin{eqnarray}\label{5.49}
&\tilde{R}^{I}(\pi,\sigma)-\tilde{R}^{B,I}_n(\pi,\zeta)=\tilde{R}^{I}(\pi,
\sigma)-\tilde{R}^I_n(x)\leq
\tilde{C}^{(2)}n^{-1/4}(\ln{n})^{3/4}+\\
&C^{(11)}\sqrt{\tilde{K}^{(m)}n^{-m\beta}}+A_\gamma
2^{1/\gamma}n^{\frac{-1/4+\beta}{\gamma}}.\nonumber
\end{eqnarray}
Let $0<\epsilon<\frac{1}{4}$ and set $\beta=\frac{\epsilon}{2}$,
$\gamma=\frac{1/4-\epsilon/2}{1/4-\epsilon}>1$ and
$m=\frac{1}{\epsilon}$. From (\ref{5.49}) we obtain that there
exists a constant $\tilde{C}_{2,\epsilon}$ such that
\begin{equation}\label{5.50}
\tilde{R}^{I}(\pi,\sigma)-\tilde{R}^I_n(x)=\tilde{R}^{I}(\pi,\sigma)-
\tilde{R}^{B,I}_n(\pi,\zeta)\leq \tilde{C}_{2,\epsilon}
n^{-\frac{1}{4}+\epsilon}.
\end{equation}
Combining (\ref{5.37}) and (\ref{5.50}) we complete the proof of the
second and the fourth statements in Theorem \ref{thm5.1}.

Next, we prove the first statement in Theorem \ref{thm5.1}. Assume
that $\mu=0$. In this case $\tilde\cV^I=\tilde{R}^I(0)$ and
$\tilde\cV^I_n=\tilde{R}^I_n(0)$. Let $0<\epsilon<\frac{1}{4}$ and
fix $n$ assuming, first, that $\exp(n^{-1/4+\epsilon/2})\geq
\min(\frac{R}{S_0},\frac{S_0}{L})$. Using Corollary \ref{cor5.1} for
$\gamma=\frac{1/4-\epsilon/2}{1/4-\epsilon}>1$ we get that
\begin{equation}\label{5.51}
\cV-\tilde\cV^I\leq \tilde{A}_\gamma n^{-1/4+\epsilon}.
\end{equation}
From Theorem 2.1 in \cite{Ki2} it follows that there exists a
constant $C$ such that $|\cV_n-\cV|\leq Cn^{-1/4}(\ln{n})^{3/4}$.
This together with (\ref{5.51}) yields that for $n$ as above,
\begin{equation}\label{5.52}
\tilde\cV^I_n-\tilde\cV^I\leq \cV_n-\tilde\cV^I\leq
Cn^{-1/4}(\ln{n})^{3/4}+\tilde{A}_{\frac{1/4-\epsilon/2}{1/4-\epsilon}}
n^{-1/4+\epsilon}.
\end{equation}
Next, assume that $\exp(n^{-1/4+\epsilon/2)}<
\min(\frac{R}{S_0},\frac{S_0}{R})$. In this case we can apply
(\ref{5.36}) for $\beta=n^{-1/4+\epsilon/2}$,
$\gamma=\frac{1/4-\epsilon/2}{1/4-\epsilon}>1$ and
$m=\frac{1}{\epsilon}$, with $C'(f_\delta)=0$ since portfolios with
zero initial capital will preserve zero value, and so the left hand
side of (\ref{5.32}) is zero. Thus we can let $\delta\downarrow 0$
in (\ref{5.36}) and obtain that for some constant $C^{(\epsilon)}$
\begin{equation}\label{5.53}
\tilde\cV^I_n-\tilde\cV^I \leq C^{(\epsilon)}n^{-1/4+\epsilon}.
\end{equation}
From (\ref{5.6}) we obtain that there exists a constant
$\tilde{C}^{\epsilon}$ such that for any $n$,
\begin{equation}\label{5.54}
\tilde\cV^I-\tilde\cV^I_n\leq
\tilde{C}^{(\epsilon)}n^{-1/4+\epsilon}.
\end{equation}
Combining (\ref{5.52}), (\ref{5.53}) and (\ref{5.54}) we complete
the proof of the first statement in Theorem \ref{thm5.1}.

Finally, we prove the third statement in Theorem \ref{thm5.1}. Fix
$n$ and $\epsilon>0$. Clearly,
$\tilde{V}^{\pi^p_n}_{\sigma^p_n\wedge{k}}
\geq\tilde{Q}^{I,n}(\sigma^p_n,k)$ for any $k$, and so
\begin{equation}\label{5.55}
\tilde{V}^{\pi}_{\zeta\wedge{k}}=\Pi_n(\tilde{V}^{\pi^p_n}_{\sigma^p_n
\wedge{k}}) \geq
\Pi_n(\tilde{Q}^{I,n}(\sigma^p_n,k))=\tilde{Q}^{B,I,n}(\zeta,k)
\end{equation}
where
$(\pi,\zeta)=(\psi_n(\pi^p_n),\Pi_n(\sigma^p_n))\in\mathcal{A}^{B,n}(\tilde
\cV^I_n)\times\mathcal{T}^{B,n}_{0,n}$. Thus,
$\tilde{R}^{B,I}_n(\pi,\zeta)=0$. Set
$\sigma=\phi_n(\sigma_n)\in\mathcal{T}^{B}_{0T}$ then
$\sigma=(T\wedge\theta^{(n)}_{\zeta})\mathbb{I}_{\zeta<n}+
T\mathbb{I}_{\zeta=n}$ and applying (\ref{5.50}) we obtain that
\begin{equation}\label{5.56}
\tilde{R}^I(\pi,\sigma)\leq
\tilde{R}^{B,I}_n(\pi,\zeta)+\tilde{C}_{2,\epsilon}
n^{-\frac{1}{4}+\epsilon}=\tilde{C}_{2,\epsilon}
n^{-\frac{1}{4}+\epsilon}
\end{equation}
completing the proof.\\
\\
\textbf{\bfseries\large Acknowledgements:}\\
\\
Partially supported by ISF grant no. 130/06.

\end{document}